\documentclass[aps,pra,twocolumn,superscriptaddress]{revtex4-2}
\usepackage{bm}
\usepackage{amsmath}
\usepackage{amssymb}
\usepackage{graphicx}
\usepackage{amsfonts}
\usepackage{amsthm}
\usepackage[T1]{fontenc}
\usepackage{microtype}
\usepackage{url}

\theoremstyle{plain}
\newtheorem{theorem}{Theorem}
\newtheorem{proposition}[theorem]{Proposition}
\newtheorem{corollary}[theorem]{Corollary}

\theoremstyle{remark}
\newtheorem{remark}{Remark}

\DeclareMathOperator{\Rea}{Re}
\DeclareMathOperator{\Ima}{Im}
\DeclareMathOperator{\Tr}{Tr}
\newcommand{\xnode}{x_{\mathrm{node}}}

\begin{document}

\title{Nodal obstruction to conditioned-diffusion representations of the weak momentum}

\author{Chon-Fai Kam}
\email[Contact author: ]{dubussygauss@gmail.com}
\affiliation{Dipartimento di Fisica e Chimica ``Emilio Segr\`e'', Universit\`a degli Studi di Palermo, Via Archirafi 36, I-90123, Palermo, Italy}
\affiliation{DSIMB, Inserm, BIGR U1134, Universit\'e Paris Cit\'e \& Universit\'e de La R\'eunion, 75015, Paris, France}

\author{Kai-Wen Wong}
\affiliation{Life Actuarial Department, Taiwan Insurance Institute,\\
6th floor, No.~3, Nan-Hai Road, Taipei 100, Taiwan (R.O.C.)}

\begin{abstract}
Quantum Schr\"odinger bridge constructions establish the existence of a conditioned process
under a positivity hypothesis on the endpoint data, and three independent ones adopt it. We
show that hypothesis fails on a definite locus once the conditioned object is the two-state
amplitude $\phi^*\psi$ of a pre- and post-selected pair on configuration space, and that the
failure is testable on data already recorded. A measured fringe visibility fixes a single
length; if the contrast is limited by a zero displaced off the real axis, that length puts a
Lorentzian of order $10^{-3}$ into the current channel already reconstructed in the two-slit
trajectory experiment, and if it is limited by incoherence there is nothing there.
Whenever $\phi^*\psi$ has a moving zero of order $k$ across which probability flows, no
diffusion of constant diffusion coefficient can both carry the conditioned density and
avoid the nodal curve. The current velocity it would need grows as the inverse $2k$-th
power of the distance to that curve and points toward it on one side, placing the curve at
finite scale distance, so that it is reached and not merely approached.
Identifying the weak momentum with a bridge drift fails independently: for a free particle
post-selected in position the current velocity is exactly minus the drift of the bridge to
the target.
The positive Doob $h$-transform, and the Bernstein conditioning built on it, exist wherever
$\phi^*\psi$ is nodeless and fail on the nodal curve. There the osmotic part of the weak
momentum diverges as the inverse distance to the zero and changes sign across it with a
coefficient set by $k$, while the current part stays bounded. That boundedness is the
substantive half of the statement, resting on the signed factorization of the real-envelope
class, and it hides the obstruction from any picture built from the current velocity alone.
The results are one-dimensional, covering the separable transverse field of a two-slit
geometry and not general two-dimensional vortices. The osmotic profile is fixed by the
recorded intensity, and a same-apparatus readout in the conjugate polarization basis tests
whether the measured pointer rotation reproduces it. Nodes placed in both the preparation
and the post-selection at one position test whether the pole coefficients add, which no
single-field measurement reproduces.
\end{abstract}

\maketitle

\section{Introduction}

An expectation value of an observable is a convex combination of its eigenvalues. Measured
weakly between a preparation and a post-selection onto a different state, the same
observable can instead take a value far outside that range~\cite{aav1988}. In the two-slit
geometry the same protocol reconstructs a flow field whose integral curves look like
particle trajectories~\cite{kocsis2011}. A classical diffusion conditioned on a future outcome
exhibits both features. It carries a velocity field, and conditioning on an increasingly
unlikely outcome pushes the conditional mean outside the range the unconditioned process
would visit. A weak value might then be nothing more than ordinary conditioning of a
stochastic process on a future event. The question is old, and recent work on quantum
Schr\"odinger bridges~\cite{pavon2003,miangolarra2025,aw2026,ohzeki2026} has revived it.

Three of those constructions agree on one hypothesis. Existence
is established under a positivity condition on the endpoint data, everywhere-positive
endpoint densities in~\cite{pavon2003}, strictly positive endpoint and joint probabilities
in~\cite{miangolarra2025}, a positivity-improving channel in~\cite{aw2026}. The
restriction itself, and the program of relaxing it, go back to the
1990s~\cite{fortet1940,garbaczewski1994,blanchard1995} (Remark~\ref{rem:related}). No construction
combines the two-state amplitude with configuration space, and it is there that the
hypothesis fails on a definite locus. This paper computes that locus
(Remark~\ref{rem:related}).
We settle the question by locating the boundary of the positive conditioning, not by
contradicting a published claim. Where a conditioned diffusion exists at all, its drift and
$\Rea(p_w)/m$ are related exactly, by a sign inversion. Where the zero of $\phi^*\psi$ moves and probability flows
across it, no diffusion of constant diffusion coefficient can both carry the conditioned
density and avoid the nodal curve, the drift it would need pointing into the curve, which
is then reached with positive probability (Theorem~\ref{thm:noflux}).

Between a preparation $|\psi\rangle$ and a post-selection $|\phi\rangle$, a device coupled
weakly to an observable $A$ has its pointer displaced by the weak value
$A_w=\langle\phi|A|\psi\rangle/\langle\phi|\psi\rangle$~\cite{aav1988,aharonov1990}, the
pre- and post-selected setting itself going back to Aharonov, Bergmann, and
Lebowitz~\cite{abl1964,tsvf-review}. This quantity is complex, and its real and imaginary
parts displace two conjugate pointer variables independently~\cite{jozsa2007,dressel2014}.
For $A=\hat p$ with post-selection on position, Wiseman~\cite{wiseman2007} showed that
$\Rea(p_w)/m$ is the velocity field $\nabla S/m$ carried by the post-selected
ensemble~\cite{durr2009}, an identification whose scope has been
debated~\cite{fankhauser2021} in ways that do not bear on what follows. Optics reaches the
same object from another side.
Berry~\cite{berry2009} distinguished the optical current from the local wavevector and
noted that a weak measurement returns the latter, and Bliokh \emph{et
al.}~\cite{bliokh2013} identified the field reconstructed in the two-slit run with the real
part of the local canonical momentum of the paraxial wave, whose imaginary part is
proportional to $\nabla\ln|E|$ for a field amplitude $E$.

The computation of $\Ima(p_w)$ for a single field is not new, and the proposal to record
it in the orthogonal Stokes component is Bliokh \emph{et al.}'s~\cite{bliokh2013},
following Berry~\cite{berry2009}. This paper adds the two-state object, the conditioning that fails at its zeros, and the
discriminator that failure implies (Remark~\ref{rem:related}).

The machinery that would turn this into a stochastic account already exists. Nelson's
stochastic mechanics~\cite{nelson1966,nelson1967,nelson1985} carries a quantum state as a
diffusion with two velocity fields, a current velocity $v=\nabla S/m$ and an osmotic
velocity $u=\nu\nabla\ln\rho$ with $\nu=\hbar/2m$, the first of which is the field Wiseman
identified with $\Rea(p_w)/m$. Doob's
$h$-transform~\cite{doob1984} conditions such a diffusion on a future event, and its
two-sided version is the reciprocal process of Euclidean quantum
mechanics~\cite{zambrini1986,leonard2014}, which Schr\"odinger himself posed and read as a
classical analogue of the $\psi\bar\psi$ of wave
mechanics~\cite{schrodinger1931,schrodinger1932}. The real-time counterpart of that product
is the two-state amplitude $\phi^*\psi$, and the construction requires it to stay positive.
Readings close to the stochastic one just sketched run through Tomita's classical model of the weak
value~\cite{tomita2011,tomita2012}, the past-quantum-state
formalism~\cite{gammelmark2013}, quantum smoothing~\cite{tsang2009,chantasri2021,laverick2023},
and the time-symmetric bridges of Pavon~\cite{pavon2003} and Zambrini~\cite{zambrini1986}.
\begin{table}[b]
\caption{Where the present work sits. The last column is the hypothesis each construction
uses to establish existence.}
\label{tab:related}
\begin{ruledtabular}
\begin{tabular}{lccl}
& two-state & config.\ & positivity \\
& amplitude & space & hypothesis \\
\colrule
\cite{pavon2003} & no & yes & $\rho_0,\rho_1>0$ \\
\cite{miangolarra2025} & yes & no & $\alpha_i,\beta_j,p_{ij}>0$ \\
\cite{ohzeki2026} & yes & no & not addressed \\
\cite{aw2026} & no & no & $\mathcal{E}$ pos.\ improving \\
\cite{fortet1940,garbaczewski1994,blanchard1995} & no & yes & pos.\ kernel, later relaxed \\
this work & yes & yes & fails on $\partial\mathcal{D}$ \\
\end{tabular}
\end{ruledtabular}
\end{table}

These constructions share a hypothesis and not an assertion about the weak momentum. Each
establishes existence by requiring the conditioned data to stay positive
(Table~\ref{tab:related}). The present results locate where that requirement fails, which
bounds the constructions themselves, not anything claimed with them. One
operational result already makes the drift identification suspect independently. Pusey showed that anomalous weak values are proofs of
contextuality~\cite{pusey2014}, and a strictly positive conditioning is the
noncontextual resource his theorem rules out. Kunjwal, Lostaglio, and Pusey located where
the imaginary part of the weak value enters that argument~\cite{kunjwal2019}. Mori and Tsutsui found a divergence in the post-selection parameter, not in the transverse
coordinate, in a double-slit setting~\cite{moritsutsui2015}, and the two are compared in
Remark~\ref{rem:moritsutsui}.

That $\Rea(p_w)/m$ is a current velocity, and that Doob's transform builds current
velocities from future constraints, makes the relation between them a natural question. It
can be computed exactly, and the simplest case already fixes it. For a free particle with a position post-selection at $x_f$, the drift of the Brownian
bridge is $(x_f-x)/\tau$ while the delta limit of $\Rea(p_w)/m$ is
$v_\psi-(x_f-x)/\tau$ (Appendix~\ref{app:beyond}), where $v_\psi$ is the pre-selected
state's own current velocity, a spatial constant $p_\psi/m$ in the real-envelope class
treated throughout. For $p_\psi=0$ the two have the
same magnitude and opposite signs, so the field points away from the target a bridge drift
would steer toward. One might
suspect a phase convention behind that sign, and the oscillator rules it out. There the
delta-limit field is trigonometric in $\omega\tau$ while the bridge drift is hyperbolic,
the two being related by the continuation $t\to-it$, and no convention turns one function
into the other. As the bridge literature
grows~\cite{pavon2003,miangolarra2025,aw2026,ohzeki2026} the question should be answered
before it is assumed.

The construction that would make the drift reading work turns on one requirement, a
strictly positive conditioning weight, since Doob's $h$ must be strictly positive and the
two-endpoint problem needs positive endpoint densities. What that requirement costs is
easy to state and is the reason for locating the boundary of the conditioning. A two-time conditioning
built from nonnegative $p$ and $h$ returns
$\int A\,p\,h\,dx/\!\int p\,h\,dx$, a conditional expectation of a real random variable,
which lies in the range of $A$ for every choice of endpoints. Such a construction can
reproduce a flow field, but it can never return a value outside the spectrum. Anomalous
weak values therefore live only where positivity has failed, which is the elementary form
of the contextuality statement below. The Madelung split writes the weak
momentum in terms of exactly that weight, $\rho_\psi\rho_\phi$, together with the phase sum.
The place where positivity is lost and the channel that carries the loss therefore fall out
of the same computation. Locating that boundary is a different question from fixing the relation of the previous
paragraph, and neither answer implies the other.

This paper takes up what survives the failure. For a quadratic Hamiltonian with
Gaussian pre-selection and Gaussian or Hermite--Gaussian post-selection, the Nelson
diffusion, the Mehler backward propagator, and the weak momentum all have closed forms.
Every statement below can therefore be checked, not argued. The Madelung
decomposition splits $p_w$ into a current part $\Rea(p_w)=m(v_\psi+v_\phi)$ and an osmotic
part $\Ima(p_w)=-\tfrac{\hbar}{2}\nabla\ln(\rho_\psi\rho_\phi)$. At a node of either state
the current part stays finite while the osmotic part diverges as $1/(x-\xnode)$ with a sign
change. The positive conditioning therefore exists wherever
$\phi^*\psi$ is nodeless, and it fails at the node. Confinement of a
single-state Nelson diffusion to the nodal cells of its own wavefunction is already
classical~\cite{carlen1984}, and what is new here is the two-state version, where the
confining node may belong to $\phi$ and the failure sits entirely in the osmotic channel.

The divergence is kinematic, holding for any pair of states and following from the
Madelung split of $\phi^*\psi$ alone (Theorem~\ref{thm:main}). That theorem, the
Kirkwood--Dirac contextuality identity, and the prediction of Sec.~\ref{sec:experiment}
stand independently of the stochastic-mechanical picture, and the long-standing gap in the
multi-time statistics of Nelson mechanics, which we do not solve, is confined to
Sec.~\ref{sec:caveats}. The boundary of a positive two-time conditioning is a
curve in the $(x,t)$ plane. It leaves no trace in the current velocity and shows up only in
the osmotic one, where it can be measured.

\section{Madelung decomposition and the two velocity fields}
\label{sec:madelung}

Write $\psi=\sqrt{\rho_\psi}\,e^{iS_\psi/\hbar}$ for the pre-selected state and
$\phi=\sqrt{\rho_\phi}\,e^{-iS_\phi/\hbar}$ for the post-selected one, the minus sign in
the second exponent recording that $\phi$ is propagated backward from the post-selection
time. One consequence, used repeatedly and easy to mistake, is that
$v_\phi=\partial_xS_\phi/m$ is \emph{minus} the velocity of the post-selected center. The
same sign appears in Eq.~\eqref{eq:xdotcl} and in Appendix~\ref{app:bernstein}. The
convention and that consequence are set out in Appendix~\ref{app:madelung}.

The two-state amplitude has a single modulus and a single phase,
$\phi^*\psi\propto\sqrt{\rho_\psi\rho_\phi}\,e^{i(S_\psi+S_\phi)/\hbar}$. Its logarithmic
derivative separates them, the phase giving the real part of
$p_w=-i\hbar\nabla\ln(\phi^*\psi)$ and the modulus the imaginary part,
\begin{align}
\Rea(p_w)&=\nabla(S_\psi+S_\phi)=m(v_\psi+v_\phi),\label{eq:re}\\
\Ima(p_w)&=-\tfrac{\hbar}{2}\nabla\ln(\rho_\psi\rho_\phi)=-m(u_\psi+u_\phi).\label{eq:im}
\end{align}
Both identities are exact and are derived in Appendix~\ref{app:madelung}.

The identification of $\Ima(p_w)$ with an osmotic momentum $-\tfrac{\hbar}{2}\nabla\ln\rho$
is not itself new. Dressel and Jordan fixed the general kinematic meaning of the imaginary
part of a weak value~\cite{dresseljordan2012}, and Flack and Hiley discussed the osmotic
momentum in connection with the weak-value analysis of the two-slit
experiment~\cite{flackhiley2018}. In classical-optics language the same object is the
imaginary part of the local momentum~\cite{bliokh2013}. New here is its two-state
behavior at a node (Sec.~\ref{sec:boundary}).

The operational status of $p_w(x,t)$ must be fixed before any
measurement claim is made. For general $\phi$ the field
$p_w=-i\hbar\,\partial_x\ln(\phi^*\psi)$ is a \emph{definition}, the
logarithmic-derivative field of the two-state amplitude, and not the pointer readout of a
single weak measurement. A one-line computation gives
\begin{equation}
-i\hbar\,\partial_x\ln(\phi^*\psi)
=\frac{\langle\phi|\Pi_x\hat p|\psi\rangle-\langle\phi|\hat p\,\Pi_x|\psi\rangle}
{\langle\phi|\Pi_x|\psi\rangle},
\end{equation}
with $\Pi_x=|x\rangle\langle x|$. The right-hand side is the difference of two differently
ordered Kirkwood--Dirac-type weak values, and no single pointer reads that difference out
directly.
Direct pointer readability holds in the special case of position post-selection, where
$p_w=-i\hbar\,\partial_x\ln\psi$ is the standard local weak momentum whose real and
imaginary parts are read from the two conjugate pointer
channels~\cite{jozsa2007,dressel2014}. That is the case of the Kocsis experiment and of
the prediction in Sec.~\ref{sec:experiment}.

Here and throughout, ``position post-selection'' means post-selection on a window
projector $\Pi_x^\Delta$ of small but finite width $\Delta$, as made explicit in
Sec.~\ref{sec:experiment}. A position eigenstate has no normalizable $\rho_\phi$ whose
logarithmic derivative could appear in \eqref{eq:im}, so the statements below are the
$\Delta\to0$ limits of finite-window ones, in which $\nabla\ln\rho_\phi$ drops out and
$p_w\to-i\hbar\,\partial_x\ln\psi$.

The phrase ``current velocity'' is used here in two senses that must be kept apart. For arbitrary $\phi$,
$\Rea(p_w)/m=v_\psi+v_\phi$ is the phase gradient of $\phi^*\psi$ in \eqref{eq:re}, a
current velocity in the Madelung sense. Its operational reading as the current velocity of
a \emph{measured} ensemble is the special case of position post-selection ($v_\phi=0$),
where it reduces to Wiseman's $v_\psi$~\cite{wiseman2007}. We use the phase-gradient sense
throughout and note where the operational one is meant.

With the two senses separated, the following gives at the level of processes what the
one-line delta-limit computation of Sec.~\ref{sec:nodecaustic} states pointwise, namely
how $\Rea(p_w)$ is related to the drift of a conditioned diffusion.

\begin{proposition}[Current velocity, not It\^o drift]
\label{prop:current}
Let $\psi,\phi$ be $C^1$ on $\mathcal{D}=\{\phi^*\psi\neq0\}$, let
$\rho_{\text{eff}}(\cdot,t)\propto\rho_\psi\rho_\phi(\cdot,t)$ be a normalized probability
density for each $t$, and let $X_t$ be any diffusion $dX_t=b\,dt+\sqrt{2\nu}\,dW_t$ of
constant diffusion coefficient $2\nu$ whose time-$t$ marginal is $\rho_{\text{eff}}$,
written as $b=v+u$ with $u=\nu\,\partial_x\ln\rho_{\text{eff}}$. Then, on $\mathcal{D}$:
\begin{enumerate}
\item[(i)] $u=-\Ima(p_w)/m$, and $v$ is determined uniquely by the marginal.
\item[(ii)] For coherent pre- and post-selection with amplitudes $\alpha,\beta>0$ and
total interval $0<\omega T<\pi$, $\Rea(p_w)/m-v=\tfrac12(\dot q_\psi-3\dot q_\phi)$, which
vanishes nowhere on $[0,T]$. In particular $v\neq\Rea(p_w)/m$ there.
\end{enumerate}
\end{proposition}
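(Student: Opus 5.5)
The plan is to compute both sides of each claim directly from the Madelung forms \eqref{eq:re}--\eqref{eq:im} and the Fokker--Planck equation. Part (i) is structural. Part (ii) is an explicit Gaussian computation followed by a sign check.

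For (i), substitute $\rho_{\text{eff}}\propto\rho_\psi\rho_\phi$ into $u=\nu\,\partial_x\ln\rho_{\text{eff}}$. The normalization constant drops out under the logarithmic derivative, so with $\nu=\hbar/2m$ and \eqref{eq:im} one gets $u=\tfrac{\hbar}{2m}\partial_x\ln(\rho_\psi\rho_\phi)=-\Ima(p_w)/m$ on $\mathcal{D}$. For the uniqueness of $v$, write the forward Kolmogorov equation of $X_t$,
\begin{equation*}
\partial_t\rho_{\text{eff}}=-\partial_x\bigl[(v+u)\rho_{\text{eff}}\bigr]+\nu\,\partial_x^2\rho_{\text{eff}} .
\end{equation*}
Since $u\rho_{\text{eff}}=\nu\,\partial_x\rho_{\text{eff}}$, the osmotic and diffusive terms cancel, leaving the continuity equation $\partial_t\rho_{\text{eff}}+\partial_x(v\rho_{\text{eff}})=0$. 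In one dimension this integrates to
\begin{equation*}
v\rho_{\text{eff}}(x,t)=-\int_{-\infty}^{x}\partial_t\rho_{\text{eff}}(y,t)\,dy ,
\end{equation*}
with the integration constant fixed by requiring that the probability current vanish at spatial infinity. On each interval of $\mathcal{D}$, dividing by $\rho_{\text{eff}}>0$ then determines $v$ from the marginal alone. I would state the vanishing-current-at-infinity condition explicitly as the normalization that makes ``uniquely'' precise. This is the one place where the statement needs a boundary condition.

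For (ii), take the oscillator with $\psi$ a coherent state of real amplitude $\alpha>0$ at $t=0$, and $\phi$ a coherent state of amplitude $\beta>0$ at $t=T$, propagated backward. Both moduli are Gaussians of the same fixed width $\sigma^2=\hbar/2m\omega$, centered at $q_\psi(t)\propto\alpha\cos\omega t$ and $q_\phi(t)\propto\beta\cos\omega(t-T)$. Coherent-state phases are linear in $x$, so the phase gradients are spatially constant. Forward propagation gives $v_\psi=\dot q_\psi$, while the convention of Appendix~\ref{app:madelung} gives $v_\phi=-\dot q_\phi$, so $\Rea(p_w)/m=\dot q_\psi-\dot q_\phi$. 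The product $\rho_\psi\rho_\phi$ is a Gaussian of constant width $\sigma^2/2$ centered at $(q_\psi+q_\phi)/2$. For a rigidly translating profile the continuity solution of part (i) is the uniform velocity $v=\tfrac12(\dot q_\psi+\dot q_\phi)$, which indeed has vanishing current at infinity. Subtracting yields $\Rea(p_w)/m-v=\tfrac12(\dot q_\psi-3\dot q_\phi)$.

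For nonvanishing, write $\dot q_\psi=-c\alpha\omega\sin\omega t$ and $\dot q_\phi=c\beta\omega\sin\omega(T-t)$ with $c>0$. Then
\begin{equation*}
\dot q_\psi-3\dot q_\phi=-c\omega\bigl[\alpha\sin\omega t+3\beta\sin\omega(T-t)\bigr].
\end{equation*}
For $t\in[0,T]$ and $0<\omega T<\pi$ both sines are nonnegative. They cannot vanish together, since the first is zero only at $t=0$, where the second equals $\sin\omega T>0$. Hence the difference is strictly negative on $[0,T]$, and in particular $v\neq\Rea(p_w)/m$.

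The main obstacle is bookkeeping of signs: the backward-propagation sign in $v_\phi$, and the orientation of $q_\phi$ relative to $T$. An error in either would turn the coefficient $-3$ into $+1$ or spoil the definite sign. I would check both against Eq.~\eqref{eq:xdotcl} and Appendix~\ref{app:bernstein} before finalizing.
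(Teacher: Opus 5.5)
Your proposal is correct and follows the paper's proof essentially step for step. In part (i) you use the same Fokker--Planck cancellation of the osmotic and diffusive terms and integrate the continuity equation from $-\infty$ with vanishing current there; the paper also notes that this is consistent at $+\infty$ because $\int\partial_t\rho_{\text{eff}}\,dx=0$. In part (ii) you use the same comparison of the midpoint-Gaussian transport velocity with $\dot q_\psi-\dot q_\phi$, and your sign check (writing $\dot q_\psi-3\dot q_\phi$ as a negative multiple of $\alpha\sin\omega t+3\beta\sin\omega(T-t)$) is the paper's observation that the two sides of $-\alpha\sin\omega t=3\beta\sin\omega(T-t)$ have opposite signs on $[0,T]$.
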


\begin{proof}
(i) The Fokker--Planck equation of $dX_t=b\,dt+\sqrt{2\nu}\,dW_t$ is
$\partial_t\rho_{\text{eff}}=-\partial_x(b\rho_{\text{eff}})+\nu\partial_x^2\rho_{\text{eff}}
=-\partial_x[(b-\nu\partial_x\ln\rho_{\text{eff}})\rho_{\text{eff}}]$, so
$u=\nu\partial_x\ln\rho_{\text{eff}}$ is fixed pointwise by the marginal, giving
$u=\nu\partial_x\ln(\rho_\psi\rho_\phi)=u_\psi+u_\phi=-\Ima(p_w)/m$ by \eqref{eq:im} and
$\nu=\hbar/2m$, while $v=b-u$ obeys the continuity equation
$\partial_t\rho_{\text{eff}}+\partial_x(v\rho_{\text{eff}})=0$. Integrating the latter from
$-\infty$ and imposing $v\rho_{\text{eff}}\to0$ there gives
\begin{equation}
v(x,t)\,\rho_{\text{eff}}(x,t)=-\int_{-\infty}^{x}\partial_t\rho_{\text{eff}}(x',t)\,dx' ,
\label{eq:vunique}
\end{equation}
which is consistent at $x\to+\infty$ because
$\int\partial_t\rho_{\text{eff}}\,dx=\partial_t\!\int\rho_{\text{eff}}\,dx=0$. Hence no
additive freedom $F(t)/\rho_{\text{eff}}$ survives the decay condition, and $v$ is the
unique function \eqref{eq:vunique}. That $\Rea(p_w)/m$ is the phase gradient
$\partial_x(S_\psi+S_\phi)/m$ is established in Appendix~\ref{app:madelung}.

(ii) The product of two Gaussians of equal width $\sigma^2$ centered at $q_\psi,q_\phi$ is a
Gaussian of width $\sigma^2/2$ centered at $\tfrac12(q_\psi+q_\phi)$, which therefore
transports at $\tfrac12(\dot q_\psi+\dot q_\phi)$, whereas
$\Rea(p_w)/m=\dot q_\psi-\dot q_\phi$ by \eqref{eq:cohRe}. The difference
$\Rea(p_w)/m-v$ is
$\tfrac12(\dot q_\psi-3\dot q_\phi)$, vanishing only where $\dot q_\psi=3\dot q_\phi$. With
$q_\psi=\alpha\cos\omega t$ and $q_\phi=\beta\cos\omega(T-t)$ that locus reads
$-\alpha\sin\omega t=3\beta\sin\omega(T-t)$. For $\alpha,\beta>0$ and $0<\omega T<\pi$ both
$\sin\omega t$ and $\sin\omega(T-t)$ are nonnegative on $[0,T]$ and vanish only at $t=0$
and $t=T$ respectively, so the two sides have strictly opposite signs on $(0,T)$ and
exactly one of them vanishes at each endpoint. The locus is empty on $[0,T]$
(Appendix~\ref{app:bernstein}, verified symbolically).
\end{proof}

Hence $\Rea(p_w)/m=\partial_x(S_\psi+S_\phi)/m=v_\psi+v_\phi$, the phase gradient of the
two-state amplitude, is a current velocity in the Madelung sense. It is not the It\^o drift of $X_t$, which
by~(i) carries the osmotic part $u=-\Ima(p_w)/m$ in addition, and in the coherent class it
is not that diffusion's current velocity either, by~(ii).

The proposition needs three comments. First, part~(i) is
unconditional and has an operational face. For position post-selection ($v_\phi=0$),
$\Rea(p_w)/m$ is the current velocity of the measured ensemble in Wiseman's
sense~\cite{wiseman2007}, the field reconstructed in the two-slit
experiments~\cite{kocsis2011}.
Nothing in the proposition disturbs that reading. It denies only the further step from
current velocity to drift. Second, the reason behind~(ii) is structural and not
computational, and is spelled out with the transport identity below. Third, the two-time (Bernstein) conditioning
that motivates the bridge picture is consistent with $\Rea(p_w)=0$ in the stationary
closed-form case (Appendix~\ref{app:bernstein}), where the identification is empty because
$v_\psi=v_\phi=0$ and the construction's real spatial log-ratio feeds the osmotic and not
the current channel.
For a moving node the two-endpoint-pinned bridge instead follows the classical connecting
trajectory. We therefore take $\Rea(p_w)/m$ throughout as a phase gradient and a Wiseman
current velocity, never as a two-point bridge drift.

The distinction decides which conditioning construction can work. The obvious construction
smooths the post-selected density $\rho_\phi=|\phi|^2$ with a heat kernel and uses the
result as a Doob $h$-function. It discards $S_\phi$ before the calculation begins, and
$S_\phi$ is what $v_\phi$ depends on. A density does not remember the phase it came from,
so nothing built from $\rho_\phi$ alone, or from the product $\rho_\psi\rho_\phi$ alone,
can return $\Rea(p_w)$. Only the two-time reciprocal (Bernstein) process of
Euclidean quantum mechanics~\cite{zambrini1986,leonard2014} does, built from a forward
solution $\eta$ and a backward solution $\eta^*$ of dual heat equations. Both a density
$\rho_{\text{eff}}=\eta\eta^*$ and a current velocity $\nu\nabla\ln(\eta^*/\eta)$ then
survive. We carry this out in closed form for a stationary node in
Appendix~\ref{app:bernstein}.

One statement here is exact and a closely similar one is false, and they have to be kept
apart. The exact one is a
per-time-slice identity. Any diffusion on $\mathcal{D}$ whose time-$t$ marginal is
$\rho_{\text{eff}}(\cdot,t)$ has forward It\^o drift
\begin{equation}
\begin{aligned}
dX_t&=b(X_t,t)\,dt+\sqrt{2\nu}\,dW_t,\\
b&=v+u,\quad u=\nu\,\partial_x\ln\rho_{\text{eff}},
\end{aligned}
\label{eq:ito}
\end{equation}
with $v$ its Nelson current velocity. Slice by slice the drift decomposes into a current
and an osmotic part, and \eqref{eq:ito} is used below only in this kinematic sense, for
whichever self-consistent diffusion is in play. If at an instant $t$ the marginal is the
real-time product $\rho_{\text{eff}}\propto\rho_\psi\rho_\phi$, then
$u=\nu\,\partial_x\ln(\rho_\psi\rho_\phi)=u_\psi+u_\phi=-\Ima(p_w)/m$ exactly, and the
$\Rea(p_w)/m$ coincides with the first term of $b$ only for those diffusions whose current
velocity happens to be the Wiseman field, which by the transport identity below are not
the ones carrying $\rho_\psi\rho_\phi$ except in the degenerate case.

The false statement is the process-level one, and what fails can be said precisely.
One would want a single diffusion whose marginal equals $\rho_\psi\rho_\phi(\cdot,t)$ at all
$t$ and whose current velocity is $v_\psi+v_\phi$. Since a current velocity transports its
own marginal, such a diffusion exists if and only if the pair $(\psi,\phi)$ satisfies the
transport identity
$\partial_t(\rho_\psi\rho_\phi)+\partial_x[(v_\psi+v_\phi)\rho_\psi\rho_\phi]=0$.
This is a condition on the pair, not an identity. The amplitude $\phi^*\psi$ is a product of a forward-
and a backward-evolving state, not a solution of the Schr\"odinger equation, so no
continuity equation forces its phase gradient to transport its own modulus. It fails
throughout the coherent class, by the nonvanishing mismatch
$\tfrac12(\dot q_\psi-3\dot q_\phi)$ of Proposition~\ref{prop:current}(ii). It holds in the degenerate case where both terms vanish separately, that is for two
real stationary states, as in the explicit pair of Appendix~\ref{app:bernstein}, where
$v_\psi=v_\phi=0$ and $\rho_\psi\rho_\phi$ is static. There the identification is
empty, since the field it reproduces is zero. The bridge reading therefore fails exactly
where it would carry content. The
Euclidean side does not supply the missing process either. The Bernstein bridge of
Appendix~\ref{app:bernstein} has marginal $\rho_{\text{eff}}=\eta\eta^*$, which vanishes at
a boundary node to order $k$ and not $2k$, and its current velocity is the log-ratio
$\nu\,\partial_x\ln(\eta^*/\eta)$, which continues back to the real-time osmotic channel
rather than the current one. The factor-of-two bookkeeping between the two conventions is
recorded after Proposition~\ref{prop:inaccessible}. Diffusions of the form \eqref{eq:ito}
whose current velocity equals the Wiseman field do exist, one of them constructed in
Sec.~\ref{sec:numerics} and sampled in Fig.~\ref{fig:traj}, but their marginals cannot
coincide with $\rho_\psi\rho_\phi(\cdot,t)$ across time.

The ensemble mean realizes the current velocity rather than the drift. For any diffusion of
the form \eqref{eq:ito} with normalizable marginal,
$\langle u\rangle=\nu\!\int\!\partial_x\rho_{\text{eff}}\,dx=0$ identically, so
$\tfrac{d}{dt}\langle X_t\rangle=\langle v\rangle$. For the ensemble of
Fig.~\ref{fig:traj}, whose current velocity is the spatially uniform coherent-class field,
this gives $\tfrac{d}{dt}\langle X_t\rangle=v_\psi+v_\phi$. The osmotic half of the drift
is invisible to the mean and is recoverable only from a two-sided estimator.
Figure~\ref{fig:traj} verifies this at the level of sampled trajectories.

\section{The osmotic boundary}
\label{sec:boundary}

The two-state amplitude $\phi^*\psi$ vanishes on the union of the nodal sets of $\psi$ and
$\phi$, being a product of the two. Because $\Rea(p_w)$ and $\Ima(p_w)$ in \eqref{eq:re}--\eqref{eq:im} treat
$S_\psi,S_\phi$ symmetrically and $\rho_\psi,\rho_\phi$ symmetrically, nothing in the
argument below depends on which of the two states supplies the zero
(Proposition~\ref{prop:exchange}). Let
$x_0(t)$ be an isolated real zero, of order $k_1$, of either factor, $\phi\sim
a_0(x-x_0)^{k_1}$ or $\psi\sim a_0(x-x_0)^{k_1}$ near it, so that the corresponding density
goes as $(x-x_0)^{2k_1}$. We write $\xnode(t)$ for $x_0(t)$ below, keeping in mind that it may be a
zero of either $\psi$ or $\phi$. When only one factor vanishes, $k_1$ coincides with the
order $k$ of the zero of the amplitude $\phi^*\psi$ used in Theorem~\ref{thm:main}. When
both vanish, $k$ is the sum of the two single-factor orders. Isolated \emph{real} zeros are non-generic for an
arbitrary complex wavefunction. The conditions $\Rea\psi=\Ima\psi=0$ are two real
equations in one real variable, hence codimension two. They are nevertheless the generic case for the physically central class of a real envelope
times a uniform phase, $\psi(x)=R(x)e^{i(px+\theta_0)/\hbar}$ with $R$ real. There a zero
of $R$ is a genuine real node, and the proof below shows that the nodal factor contributes
a finite amount to $\Rea(p_w)$, equal to its own uniform phase gradient, the other state's phase gradient adding smoothly. Energy eigenstates,
Hermite--Gaussian modes, and symmetric two-slit / interference patterns all lie in this
class, and it is the one we treat throughout. A state with a nontrivial spatially varying
phase generically has no real node, with the off-axis case addressed in
Proposition~\ref{prop:genericity}.

\begin{proposition}[$\psi\leftrightarrow\phi$ exchange symmetry]
\label{prop:exchange}
A node of $\psi$ and a node of $\phi$ produce identically the same sign-changing
$1/(x-\xnode)$ osmotic divergence and the same bounded $\Rea(p_w)$.
\end{proposition}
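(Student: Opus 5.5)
The plan is to use that both parts of the weak momentum, Eqs.~\eqref{eq:re}--\eqref{eq:im}, are symmetric functions of the two states' Madelung data. Then we compute the contribution of a single nodal factor once and observe that it does not depend on which slot that factor sits in. Concretely, write the two-state amplitude as $\phi^*\psi$ and note that
\[
p_w=-i\hbar\,\partial_x\ln(\phi^*\psi)=-i\hbar\,\partial_x\ln\psi-i\hbar\,\partial_x\ln\phi^* ,
\]
so $p_w$ is a sum of two single-state logarithmic derivatives. Exchanging the roles of $\psi$ and $\phi$ changes only which summand is called which. In Madelung variables this is the statement that $\Rea(p_w)=\partial_x(S_\psi+S_\phi)$ and $\Ima(p_w)=-\tfrac{\hbar}{2}\partial_x\ln(\rho_\psi\rho_\phi)$ are invariant under $(\rho_\psi,S_\psi)\leftrightarrow(\rho_\phi,S_\phi)$. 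The opposite sign in the exponent of $\phi$ is already absorbed by the conjugation, so it does not break the symmetry.

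The first step is the osmotic part. Suppose the nodal factor (call it $f$, either $\psi$ or $\phi$) behaves as $a_0(x-\xnode)^{k_1}\,[1+O(x-\xnode)]$ with $a_0\neq0$, and the other factor $g$ is $C^1$ and nonvanishing near $\xnode$. Then $\rho_f=|a_0|^2(x-\xnode)^{2k_1}[1+O(x-\xnode)]$ and $\rho_g$ is smooth and positive, so
\[
\Ima(p_w)=-\frac{\hbar k_1}{x-\xnode}-\tfrac{\hbar}{2}\partial_x\ln\rho_g+O(1).
\]
The singular term depends only on $k_1$ and $\xnode$, not on whether $f$ is $\psi$ or $\phi$. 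It changes sign across $\xnode$, with the same coefficient $-\hbar k_1$ in both assignments.

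The second step is the current part, which I expect to be the only real content. Here I would invoke the real-envelope class fixed just above the statement: $f=R(x)e^{\pm i(px+\theta_0)/\hbar}$ with $R$ real. Away from the node, $S_f$ is $\pm(px+\theta_0)$ plus a piecewise constant $\hbar\pi$ jump coming from the sign of $R$. Its gradient is therefore the uniform constant $\pm p$ on both sides of $\xnode$, with the sign set by the backward-propagation convention of Appendix~\ref{app:madelung}.

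The main obstacle is to make sure the sign flip of $R$ at an odd-order zero does not leave a distributional term in $\Rea(p_w)$. I would handle this by computing $\partial_x\ln f=R'/R\pm ip/\hbar$ directly. The term $R'/R$ is real, so it feeds only $\Ima(p_w)$ and gives the pole above, whatever the sign of $R$. The phase contribution is then exactly the constant $\pm p$, hence bounded. Adding the smooth phase gradient of the nonvanishing factor $g$ gives a bounded $\Rea(p_w)$ near $\xnode$, and again the result is the same whichever slot $f$ occupies.

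Finally, the case where both factors vanish at $\xnode$ follows from the additivity of logarithmic derivatives. The pole coefficients add to $-\hbar(k_\psi+k_\phi)=-\hbar k$, and the current part remains the sum of two constants. This is consistent with the order-$k$ convention for the zero of $\phi^*\psi$ stated before the proposition.
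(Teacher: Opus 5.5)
Your proposal is correct and takes essentially the paper's route. The paper's proof is just the observation that $\Rea(p_w)$ and $\Ima(p_w)$ depend only on the exchange-symmetric combinations $S_\psi+S_\phi$ and $\rho_\psi\rho_\phi$; your explicit pole and boundedness computations appear instead in Steps 2--4 of Theorem~\ref{thm:main}, where the remainder is a $C^1$ factor $G$, which is what your $[1+O(x-\xnode)]$ must mean for the $O(1)$ to survive differentiation. The one point the paper adds is that a literal relabelling $\psi\leftrightarrow\phi$, as opposed to an exchange of Madelung data, replaces $\phi^*\psi$ by its conjugate and so flips the sign of $\Rea(p_w)$ while leaving $\Ima(p_w)$ and boundedness unchanged; your remark that the exchange ``changes only which summand is called which'' holds only in the Madelung-data sense.
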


\begin{proof}
By \eqref{eq:re}--\eqref{eq:im}, $\Rea(p_w)$ depends on the two states only through
$S_\psi+S_\phi$ and $\Ima(p_w)$ only through $\rho_\psi\rho_\phi$, and both combinations
are symmetric under $(\rho_\psi,S_\psi)\leftrightarrow(\rho_\phi,S_\phi)$. If $\psi$
carries an order-$k_1$ zero at $\xnode$, the two fields therefore take the same values as if
$\phi$ carried it, the nonvanishing factor merely changing places. A literal interchange of
the labels replaces $\phi^*\psi$ by its complex conjugate and hence reverses the sign of
$\Rea(p_w)$, a consequence of the opposite phase conventions fixed in
Sec.~\ref{sec:madelung}. It leaves $\Ima(p_w)$, and the boundedness of $\Rea(p_w)$,
untouched.
\end{proof}

The next result may accordingly be proved for a zero of $\phi$ without loss of generality.
One thing about its hypothesis has to be settled first. Writing $\phi^*\psi=Re^{i\Theta}$ with $R=|\phi^*\psi|$ will not do. A
modulus is nonnegative, its phase jumps by $\pi$ across an odd-order zero, and
$\partial_x\Theta$ is then unbounded where the theorem needs it bounded. A
\emph{signed} amplitude is required instead, and such a factorization exists precisely on the
real-envelope class $\psi=R_\psi(x)e^{i(p_\psi x+\theta_\psi)/\hbar}$,
$\phi=R_\phi(x)e^{-i(p_\phi x+\theta_\phi)/\hbar}$ of Remark~\ref{rem:scope}(ii), where
$\partial_x\Theta=(p_\psi+p_\phi)/\hbar$ is constant. That is the class treated throughout, and
Remark~\ref{rem:scope}(i) records where the hypothesis fails.

\begin{theorem}[Which field diverges]
\label{thm:main}
Let $\psi,\phi$ be $C^1$ near a point $\xnode(t)$, and suppose that on a punctured
neighborhood of $\xnode(t)$ the two-state amplitude factorizes as
\begin{equation}
\phi^*\psi(x,t)=R(x,t)\,e^{i\Theta(x,t)},
\label{eq:signedpolar}
\end{equation}
with $R$ real-valued (signed) and $C^1$, with $\Theta$ real and $C^1$ with
$\partial_x\Theta$ bounded, and with $R$ having an isolated zero of order $k\ge1$ at
$\xnode(t)$, i.e.\ $R(x,t)=(x-\xnode)^kG(x,t)$ with $G$ continuously differentiable and
$G(\xnode,t)\neq0$. Then, on
$\mathcal{D}=\{\phi^*\psi\neq0\}$ the decomposition \eqref{eq:re}--\eqref{eq:im} holds with
both fields finite, and as $x\to\xnode(t)$,
\begin{equation}
\Ima(p_w)=-\frac{k\hbar}{\,x-\xnode(t)\,}+O(1),
\label{eq:imdiv}
\end{equation}
which is sign-changing and divergent, while $\Rea(p_w)=\hbar\,\partial_x\Theta$ remains
bounded. The singularity is therefore an
osmotic one, with divergence strength set by the order $k$ of the zero of the
\emph{amplitude} $\phi^*\psi$. If both $\psi$ and $\phi$ vanish at $\xnode$, of orders
$k_\psi$ and $k_\phi$, then $k=k_\psi+k_\phi$.
\end{theorem}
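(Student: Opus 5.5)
The plan is to compute $p_w=-i\hbar\,\partial_x\ln(\phi^*\psi)$ directly from the signed factorization \eqref{eq:signedpolar}, staying on the punctured neighborhood where $R\neq0$. There $\phi^*\psi=Re^{i\Theta}$ is $C^1$ and nonvanishing. Its logarithmic derivative is therefore well defined, even though $\ln R$ itself needs a branch choice when $R<0$:
\begin{equation*}
\partial_x\ln(\phi^*\psi)=\frac{\partial_x(Re^{i\Theta})}{Re^{i\Theta}}=\frac{\partial_xR}{R}+i\,\partial_x\Theta .
\end{equation*}
Multiplying by $-i\hbar$ gives $p_w=\hbar\,\partial_x\Theta-i\hbar\,\partial_xR/R$. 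The real and imaginary parts then separate cleanly, because $R$ and $\Theta$ are both real. This is exactly where the signed choice matters: with $R=|\phi^*\psi|$, a sign flip of the amplitude would have to be absorbed into $\Theta$ as a jump of $\pi$.

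From this one reads off $\Rea(p_w)=\hbar\,\partial_x\Theta$, which is bounded by hypothesis. One also gets $\Ima(p_w)=-\hbar\,\partial_xR/R=-\tfrac{\hbar}{2}\partial_x\ln R^2$, and $R^2=|\phi^*\psi|^2=\rho_\psi\rho_\phi$. This reproduces \eqref{eq:im} and shows that both fields are finite on $\mathcal{D}$. That identification of \eqref{eq:re}--\eqref{eq:im} with the factorized form uses only Appendix~\ref{app:madelung}.

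The key step is the asymptotics. Substituting $R=(x-\xnode)^kG$ gives
\begin{equation*}
\frac{\partial_xR}{R}=\frac{k}{x-\xnode}+\frac{\partial_xG}{G}.
\end{equation*}
Since $G$ is $C^1$ and $G(\xnode,t)\neq0$, continuity keeps $G$ bounded away from zero on a small neighborhood. The quotient $\partial_xG/G$ is therefore $O(1)$, which yields \eqref{eq:imdiv}. The sign change follows because $1/(x-\xnode)$ is odd about the node while the $O(1)$ remainder cannot compete near it. This holds for every $k\ge1$; parity of $k$ affects only the sign of $R$, not the $1/(x-\xnode)$ term.

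For the final clause, suppose $\psi$ and $\phi$ are both in the real-envelope class, with $R_\psi=(x-\xnode)^{k_\psi}G_\psi$ and $R_\phi=(x-\xnode)^{k_\phi}G_\phi$. Then $R=R_\phi R_\psi$ up to a constant phase factor, so $R=(x-\xnode)^{k_\psi+k_\phi}G_\psi G_\phi$. The product $G_\psi G_\phi$ is $C^1$ and nonzero at the node, so the same computation applies with $k=k_\psi+k_\phi$.

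The main obstacle I expect is not the algebra but the regularity bookkeeping. The hypothesis says only that $R$ is $C^1$ with $R=(x-\xnode)^kG$ and $G\in C^1$. I would check that $\partial_xG/G$ really is bounded on a punctured neighborhood, which needs only continuity of $\partial_xG$ and a lower bound on $|G|$. I would also check that the branch of $\ln$ is never needed, since working throughout with the quotient $\partial_x(\phi^*\psi)/(\phi^*\psi)$ avoids it. Finally, in the both-vanish case I would confirm that the single-factor envelopes inherit the $C^1$ factorization. This is where the real-envelope class of Remark~\ref{rem:scope}(ii) is used.
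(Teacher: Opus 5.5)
Your proposal is correct and follows the paper's proof essentially step for step. It has the same four ingredients: the split $p_w=\hbar\,\partial_x\Theta-i\hbar\,\partial_xR/R$ on the punctured neighborhood, the expansion $\partial_xR/R=k/(x-\xnode)+\partial_xG/G$ for the pole, boundedness of $\Rea(p_w)$ directly from the hypothesis, and additivity of orders from the product structure. The differences are cosmetic. Your quotient form of the logarithmic derivative avoids the $i\pi$ branch shift that the paper's $\ln(\phi^*\psi)=\ln|R|+i\Theta$ silently drops where $R<0$. You argue additivity inside the real-envelope class, whereas the paper reads it off the leading behavior $\phi^*\psi\sim\bar b a\,(x-\xnode)^{k_\psi+k_\phi}$, which needs no separate real-envelope factorization of each state.
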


\begin{proof}
\emph{Step 1: the split.} On $\mathcal{D}$ both $R\neq0$ and $\phi^*\psi\neq0$, so
$\ln(\phi^*\psi)=\ln|R|+i\Theta$
is $C^1$ and
\begin{equation}
p_w=-i\hbar\,\partial_x\ln(\phi^*\psi)
=\hbar\,\partial_x\Theta-i\hbar\,\frac{\partial_xR}{R},
\label{eq:pwsigned}
\end{equation}
whose real and imaginary parts are $\Rea(p_w)=\hbar\partial_x\Theta$ and
$\Ima(p_w)=-\hbar\,\partial_xR/R$. Both are finite on $\mathcal{D}$, and since
$|R|^2=\rho_\psi\rho_\phi$ we have
$-\hbar\,\partial_xR/R=-\tfrac{\hbar}{2}\partial_x\ln(\rho_\psi\rho_\phi)$, recovering
\eqref{eq:im}. Likewise $\hbar\partial_x\Theta=\partial_x(S_\psi+S_\phi)$, recovering
\eqref{eq:re}. This proves the finiteness claim on $\mathcal{D}$.

\emph{Step 2: the osmotic pole.} Write $R=(x-\xnode)^kG$ with $G(\xnode)\neq0$ and $G$ continuously
differentiable, so that on a punctured neighborhood
\begin{equation}
\frac{\partial_xR}{R}=\frac{k}{x-\xnode}+\frac{\partial_xG}{G},
\end{equation}
the second term being bounded near $\xnode$ because $G$ is $C^1$ and nonvanishing there.
Multiplying by $-\hbar$ gives \eqref{eq:imdiv}, with the sign changing across $\xnode$
because $x-\xnode$ does and $k$ is a fixed positive integer. Note that $k$ enters only
through the coefficient of the pole, and that no cancellation can occur, since $k\ge1$ and the
remainder is $O(1)$.

\emph{Step 3: boundedness of $\Rea(p_w)$.} This is immediate from \eqref{eq:pwsigned} and the hypothesis that
$\partial_x\Theta$ is bounded. This is the only place that hypothesis is used, and it is
where the divisor's \emph{reality} enters. Were the zero displaced off the real axis, or
were $\phi^*\psi$ to wind (Remark~\ref{rem:scope}(i)), $\Theta$ would acquire an unbounded
gradient and the asymmetry would fail. Equivalently, in the real-envelope class one may
compute directly. With $\phi=R_\phi e^{-i(p_\phi x+\theta_\phi)/\hbar}$ as in
Remark~\ref{rem:scope}(ii) and $R_\phi=(x-\xnode)^{k}G$,
\begin{equation}
\Ima\!\left(\frac{\phi'}{\phi}\right)
=\Ima\!\left(\frac{k}{x-\xnode}+\frac{G'}{G}-\frac{ip_\phi}{\hbar}\right)
=-\frac{p_\phi}{\hbar},
\end{equation}
independent of $k$ and finite at the node, since the divergent term is real for real
$k,x,\xnode$, and hence $\partial_xS_\phi=-\hbar\,\Ima(\phi'/\phi)=+p_\phi$ there. We have checked
this symbolically for $k=1,2,3$ with $G$ a nonvanishing Gaussian.

\emph{Step 4: additivity of orders.} If $\psi\sim a(x-\xnode)^{k_\psi}$ and $\phi\sim b(x-\xnode)^{k_\phi}$ with
$a,b\neq0$, then $\phi^*\psi\sim\bar b a\,(x-\xnode)^{k_\psi+k_\phi}$, so the signed
amplitude has a zero of order $k=k_\psi+k_\phi$ and the pole coefficient is the sum, as
claimed. By Proposition~\ref{prop:exchange} the roles of $\psi$ and $\phi$ may be
interchanged throughout without affecting either field's behavior.
\end{proof}

Before turning to what the divergence means dynamically, we record the definition the
rest of this section uses. A \emph{nodal cell} is a maximal connected component of the
time-slice domain $\{x:\phi^*\psi(x,t)\neq0\}$. The union of such components along $t$,
swept between consecutive nodal curves, is the corresponding \emph{spacetime cell}.

\begin{corollary}[Positive conditioning and its boundary]
\label{cor:conditioning}
Fix a spacetime cell of $\mathcal{D}$ with $C^1$ lateral boundary, and let $\rho_0,\rho_T$
be the endpoint densities restricted and renormalized to the endpoint nodal cells. The
two-endpoint Schr\"odinger problem~\cite{schrodinger1931,schrodinger1932} for
$(\rho_0,\rho_T)$ relative to the Dirichlet Euclidean kernel of that cell has a solution,
and it does not extend across $\partial\mathcal{D}$. By Theorem~\ref{thm:main} the
obstruction is carried entirely by the osmotic velocity.
\end{corollary}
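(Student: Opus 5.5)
The plan has three parts: existence inside the cell by the classical Fortet--Beurling--Jamison theory, non-extension by the behavior of the Dirichlet kernel at the lateral boundary, and attribution of the obstruction to the osmotic channel by Theorem~\ref{thm:main}.

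\emph{Existence inside the cell.} Let $Q$ be the spacetime cell, and let $K_Q(x,s;y,t)$ be the Dirichlet heat kernel of $\nu\partial_x^2$ (plus the quadratic potential, if present) on $Q$, killed at the lateral boundary. Because the lateral boundary is $C^1$ and the cell is connected, the parabolic strong maximum principle gives $K_Q>0$ on the interior of $Q\times Q$ for $s<t$. It is also continuous, and it is bounded on compact subsets of the interior.

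The endpoint densities $\rho_0,\rho_T$ are positive on their open nodal cells, because there they are restrictions of $\rho_\psi\rho_\phi$, which vanishes only on $\partial\mathcal{D}$. They are also normalizable after renormalization. These are exactly the hypotheses of the Fortet--Beurling--Jamison theorem~\cite{fortet1940}: a strictly positive continuous kernel together with probability endpoint marginals. I would therefore invoke it directly. It yields positive $\hat\varphi_0$ and $\varphi_T$ solving the Schr\"odinger system
\[
\rho_0=\hat\varphi_0\,(K_Q\varphi_T),\qquad
\rho_T=\varphi_T\,(K_Q^{\dagger}\hat\varphi_0).
\]
The Bernstein process $\eta\eta^*$ with $\eta=K_Q^\dagger\hat\varphi_0$ and $\eta^*=K_Q\varphi_T$ then solves the problem. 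The Doob $h$-transform with $h=\eta^*>0$ is well defined on the interior of $Q$.

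\emph{Non-extension.} Dirichlet killing makes $\eta$ and $\eta^*$ vanish on the lateral boundary. Hence the $h$-function is zero there, and the transform $b=\nu\partial_x\ln h$ is undefined on $\partial\mathcal{D}$. Any putative extension across the nodal curve would need a kernel connecting two adjacent cells while keeping $h>0$ at the node. That is incompatible with the marginal vanishing there, since the marginal goes as $(x-\xnode)^{2k}$ by Theorem~\ref{thm:main}, Step~2.

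I would make this precise by contradiction. Suppose a positive $h$ extends continuously across $\xnode(t)$. Then $\rho_{\rm eff}=\eta h$ forces $\eta$ to vanish to order $2k$ at the node. The drift then contains $\nu\partial_x\ln\rho_{\rm eff}\sim 2k\nu/(x-\xnode)$, which is sign-changing and non-integrable in the Fokker--Planck flux across the node, so the flux cannot match the density on both sides.

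\emph{Attribution to the osmotic channel.} By Theorem~\ref{thm:main} the real part $\Rea(p_w)=\hbar\partial_x\Theta$ stays bounded up to the nodal curve. The divergent piece is $-k\hbar/(x-\xnode)=m\,u$, which is exactly the osmotic term $u=\nu\partial_x\ln\rho_{\rm eff}$ of Proposition~\ref{prop:current}(i). So the failure of $h$ at $\partial\mathcal{D}$ appears only in $u$, which is the final sentence of the corollary.

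\emph{Main obstacle.} I expect the non-extension step to be the hardest, because "does not extend" must be given a definite meaning. I would define it as: there is no strictly positive $h$ on a neighborhood of $\partial\mathcal{D}$ whose transform reproduces the marginal $\rho_\psi\rho_\phi$ there. Under this definition the contradiction uses only the order-$2k$ vanishing and the pole of Theorem~\ref{thm:main}. I would defer the finer dynamical statement, that the curve is reached in finite time, to Theorem~\ref{thm:noflux}.
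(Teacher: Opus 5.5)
Your existence and attribution steps match the paper's proof. Existence is Beurling--Jamison applied to the Dirichlet-killed Euclidean kernel of the cell, and attribution is Theorem~\ref{thm:main} placing the pole in the osmotic channel. Watch one sign: by \eqref{eq:im}, the divergent piece $-k\hbar/(x-\xnode)$ of $\Ima(p_w)$ is $-m$ times the pole of $u$, not $+m$ times it. The opening of your non-extension step is also the paper's argument. Dirichlet killing makes $h=\eta^*$ vanish on the lateral boundary, so $\partial_x\ln h$ and the $h$-transformed drift diverge there, and the density $h(X_T,T)/h(X_0,0)$ is undefined.

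\textbf{The gap} is in the contradiction you then offer to make non-extension precise, which is the step you flagged yourself. Suppose a strictly positive $h$ extended smoothly across $\xnode(t)$, with $\rho_{\text{eff}}=\eta h$ vanishing to order $2k$. Then the forward drift would be $b_+=2\nu\partial_x\ln h$ of \eqref{eq:bridgedrifts}, which is smooth at the node. The pole $2k\nu/(x-\xnode)$ of $u=\nu\partial_x\ln(\eta h)$ is cancelled exactly by the pole $-2k\nu/(x-\xnode)$ of $v=\nu\partial_x\ln(h/\eta)$. Nothing is non-integrable in the Fokker--Planck equation either: the osmotic flux $u\rho_{\text{eff}}=\nu\partial_x\rho_{\text{eff}}=O\big((x-\xnode)^{2k-1}\big)$ is continuous and vanishes at the node. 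So ``the flux cannot match the density on both sides'' does not follow, and the argument as written yields no contradiction.

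\textbf{The missing idea} is positivity, which you already had in hand when you concluded that $\eta$ must vanish at the node. Extending across the node requires a reference kernel $q$ that connects the two adjacent cells without killing, and such a kernel is strictly positive there. Then $\eta(\cdot,t)=\int q(0,y;t,\cdot)\,\eta_0(y)\,dy>0$ for every nonnegative nonzero $\eta_0$. Equivalently, by the strong maximum principle a nonnegative solution of the forward equation cannot vanish at an interior point without vanishing identically at earlier times. Hence $\eta h>0$ at $\xnode(t)$, contradicting $\rho_\psi\rho_\phi(\xnode,t)=0$. This is the observation recorded in Remark~\ref{rem:related}, that a strictly positive kernel forces the interpolating density to vanish only on the boundary of its domain. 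It needs neither the order $2k$ nor the pole.

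Alternatively, stop where the paper stops, at the breakdown of the constructed $h$ on $\partial\mathcal{D}$, and take non-removability from Proposition~\ref{prop:inaccessible}. Your deferral to Theorem~\ref{thm:noflux} is misdirected. That theorem concerns diffusions carrying the product marginal across a moving node, which do reach the curve. The cell-restricted bridge of the corollary is instead confined by the repulsive barrier of Proposition~\ref{prop:inaccessible}.
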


\begin{proof}
\emph{Existence.} Let $q^{D}(s,x;t,y)$ be the Euclidean (heat/Mehler) kernel of the cell,
killed at its lateral boundary. The Schr\"odinger problem is the system
\begin{equation}
\begin{aligned}
\eta_T(y)&=\int q^{D}(0,x;T,y)\,\eta_0(x)\,dx,\\
\eta^*_0(x)&=\int q^{D}(0,x;T,y)\,\eta^*_T(y)\,dy,\\
\rho_0&=\eta_0\,\eta^*_0,\qquad \rho_T=\eta_T\,\eta^*_T,
\end{aligned}
\label{eq:schrodingersystem}
\end{equation}
for a positive pair $(\eta,\eta^*)$, which then defines the reciprocal process with
density $\rho_{\text{eff}}=\eta\eta^*$ and current velocity
$\nu\,\partial_x\ln(\eta^*/\eta)$. On the \emph{open} cell $q^{D}>0$ for $t>s$ by the
strong maximum principle, and $q^{D}$ is jointly continuous because the boundary is $C^1$.
These are the positivity and continuity hypotheses of the Beurling--Jamison
theory~\cite{beurling1960,jamison1974,jamison1975}. Renormalizing $\rho_0,\rho_T$ to the
endpoint cells is legitimate since each carries strictly positive mass, and for the
Gaussian and Hermite--Gaussian states used here they have finite second moments and finite
entropy relative to $q^{D}$. Jamison's theorem~\cite{jamison1974,jamison1975} then gives a
unique positive solution of \eqref{eq:schrodingersystem}. For a stationary node the pair is
exhibited in closed form in Appendix~\ref{app:bernstein}.

\emph{Failure at the boundary.} Equivalently the process is a Doob $h$-transform, with
\begin{equation}
b_h=b+2\nu\,\partial_x\ln h,\qquad
\frac{dP^h}{dP}\bigg|_{\mathcal{F}_T}=\frac{h(X_T,T)}{h(X_0,0)},
\label{eq:htransform}
\end{equation}
both of which require $h>0$. On $\partial\mathcal{D}$ the density of whichever factor vanishes there, and
hence $h$, goes to zero. Then $\partial_x\ln h$ carries the pole of \eqref{eq:imdiv}, so $b_h$
diverges and the Radon--Nikodym density in \eqref{eq:htransform} is undefined. By
Theorem~\ref{thm:main} the divergence is osmotic, the current velocity being bounded there.
That this failure is not a removable defect in the choice of $h$, but a property of the
process, is the content of Proposition~\ref{prop:inaccessible}.
\end{proof}

For a stationary node the cell restriction has a direct precedent: the half-line
construction of Ref.~\cite{blanchard1995}, a Green function killed at the node with the
pair $(f_+,g_+)$ it propagates, is the stationary $k=1$ case of the construction above;
what is added here is the moving cell, whose $C^1$ lateral boundary is a nodal curve of
a two-state amplitude. For a moving node the Beurling--Jamison construction supplies the
two-endpoint process but not, by itself, its confinement to the spacetime cell. There the
confinement comes from the drift repulsion at the nodal curve, established in
Proposition~\ref{prop:inaccessible} below. This corollary, unlike Theorem~\ref{thm:main},
presupposes the reciprocal-process construction. It is a statement about the
bridge-drift \emph{interpretation} of $\Rea(p_w)$, not about the divergence itself.

The boundary clause of Corollary~\ref{cor:conditioning} says that a construction stops
working there. One can say something stronger, at the level of the conditioned
diffusion rather than of a field. More than the drift blowing up there, the process never
reaches the boundary at all.

\begin{proposition}[$\partial\mathcal{D}$ is inaccessible]
\label{prop:inaccessible}
Let $\xnode(t)\in C^1$, let $r_0>0$ and let $k\ge1$ be an integer, and let $X_t$ solve, on
the closed neighborhood $|x-\xnode(t)|\le r_0$ of the nodal curve,
\begin{equation}
dX_t=\Big[\frac{2k\nu}{X_t-\xnode(t)}+b_{\text{sm}}(X_t,t)\Big]dt+\sqrt{2\nu}\,dW_t,
\label{eq:nodalsde}
\end{equation}
with $b_{\text{sm}}$ continuous there, started in the interior of a nodal cell. Then for
every finite horizon $T$,
\begin{equation}
P\big(X_t=\xnode(t)\ \text{for some }t\le T\big)=0 ,
\end{equation}
and hence $P(X_t=\xnode(t)$ for some $t<\infty)=0$. The drift of \eqref{eq:nodalsde} is
exactly the drift of \eqref{eq:ito} when the osmotic velocity is the one generated by a
density vanishing at order $2k$, $u=\nu\,\partial_x\ln\rho_{\text{eff}}\sim2k\nu/(x-\xnode)$
[Eq.~\eqref{eq:imdiv} with $\nu=\hbar/2m$], and the current velocity is continuous across
the node.
\end{proposition}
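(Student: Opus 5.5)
The plan is to reduce the statement to a comparison with a Bessel process of dimension $2k+1\ge3$, which is known never to hit the origin. The bounded part of the drift will be absorbed by a Lyapunov-function estimate, not by Girsanov, so that no integrability of $1/(X_t-\xnode)$ is needed.

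\emph{Step 1: moving frame.} Set $Y_t=X_t-\xnode(t)$. Since $\xnode\in C^1$, It\^o's formula gives
\begin{equation*}
dY_t=\Big[\frac{2k\nu}{Y_t}+c(Y_t,t)\Big]dt+\sqrt{2\nu}\,dW_t,\qquad c=b_{\text{sm}}-\dot x_{\text{node}}.
\end{equation*}
On the compact set $\{|y|\le r_0,\ t\le T\}$ the function $c$ is continuous, hence $|c|\le C_T$. The process starts in a nodal cell, so by the symmetry $y\to-y$ we may assume $Y_0>0$ and study the first hitting of $0$ from above. With $c\equiv0$ the equation is $\sqrt{2\nu}$ times a Bessel process of dimension $\delta$ with $(\delta-1)/2=k$, that is $\delta=2k+1$. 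This explains the result, but I will not rely on a comparison theorem.

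\emph{Step 2: Lyapunov function.} Take $f(y)=y^{-\alpha}$ with $0<\alpha<2k-1$ when $k\ge2$, and $\alpha=\tfrac12$ when $k=1$. The generator $\mathcal L=\nu\partial_y^2+(2k\nu/y+c)\partial_y$ gives
\begin{equation*}
\mathcal Lf=-\nu\alpha(2k-1-\alpha)\,y^{-\alpha-2}-\alpha c\,y^{-\alpha-1}.
\end{equation*}
The first term is strictly negative and of higher order in $1/y$. It therefore dominates $\alpha C_T y^{-\alpha-1}$ for $y<y_*$ with $y_*$ small. On $[y_*,r_0]$ everything is bounded, so $\mathcal Lf\le K(1+f)$ on $(0,r_0]$.

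\emph{Step 3: the hitting estimate.} Define $\tau_\varepsilon$ as the first time $Y$ reaches $\varepsilon$, and $\sigma$ as the exit time from $(0,r_0)$ through $r_0$. Apply It\^o to $e^{-Kt}(1+f(Y_t))$ stopped at $\tau_\varepsilon\wedge\sigma\wedge T$. Before that time the coefficients are bounded, so the local martingale is a true martingale. This yields
\begin{equation*}
e^{-KT}(1+\varepsilon^{-\alpha})\,P(\tau_\varepsilon<\sigma\wedge T)\le 1+f(Y_0).
\end{equation*}
Hence $P(\tau_\varepsilon<\sigma\wedge T)\le C\varepsilon^{\alpha}\to0$ as $\varepsilon\to0$. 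Hitting $0$ requires hitting every $\varepsilon$ first, so within a single sojourn in the neighbourhood the node is reached with probability zero.

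\emph{Step 4: re-entries and the final identification.} The SDE is specified only near the curve, so the path may leave and re-enter the neighbourhood. I will cut it into excursions. Each entry happens at $|Y|=r_0$, and by the strong Markov property Step 3 applies afresh from there. There are at most countably many entries, so a countable union of null events is null, giving the statement for each finite $T$. Letting $T=n\to\infty$ gives the statement for $t<\infty$.

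The closing identification is direct. By Theorem~\ref{thm:main},
\begin{equation*}
u=-\Ima(p_w)/m=k\hbar/[m(x-\xnode)]+O(1)=2k\nu/(x-\xnode)+O(1),
\end{equation*}
using $\nu=\hbar/2m$. The current velocity is bounded and continuous by the same theorem, and the $O(1)$ remainder is absorbed into $b_{\text{sm}}$.

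The main obstacle I expect is Step 2 for $k=1$. There $\delta=3$ is closest to the critical dimension $2$, and the natural scale function $1/y$ gives $\mathcal Lf$ with no negative margin to absorb the bounded drift. This is why I take $\alpha<2k-1$ strictly, and for $k=1$ this means giving up exactness for $\alpha=\tfrac12$. A secondary point needing care is that the martingale property in Step 3 holds only before $\tau_\varepsilon$; this is handled by the stopping.
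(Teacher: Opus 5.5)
Your argument is correct, and it handles the bounded perturbation differently from the paper. The paper first computes the exact scale function $s'(y)\propto y^{-2k}$ of the unperturbed reference process $dz=(2k\nu/z)\,dt+\sqrt{2\nu}\,dW$. From it the paper gets the exit probability $\varepsilon(\delta)\to0$. It then transfers this to the co-moving process by Girsanov, applied to the bounded perturbation $g=b_{\text{sm}}-\dot\xnode$; a uniform bound on $E_P[\Lambda_\delta^2]$ and Cauchy--Schwarz give $Q(A_\delta)\le M_T^{1/2}\varepsilon(\delta)^{1/2}$.

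You instead absorb $g$ into the generator. Stepping back from the harmonic exponent $2k-1$ to $y^{-\alpha}$ with $\alpha<2k-1$ leaves a strictly negative term $-\nu\alpha(2k-1-\alpha)y^{-\alpha-2}$ that dominates $\alpha C_Ty^{-\alpha-1}$ near the node. Hence $e^{-Kt}(1+f(Y_t))$ is a supermartingale, and optional stopping gives $P(\tau_\varepsilon<\sigma\wedge T)\le C\varepsilon^{\alpha}$.

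Your route is more elementary: there is no change of measure, no Novikov condition, and no $L^2$ transfer, and everything is done under the original measure. It also gives a sharper rate, $\varepsilon^{\alpha}$ with $\alpha$ arbitrarily close to $2k-1$, against the $\delta^{(2k-1)/2}$ that survives Cauchy--Schwarz. What the paper's route buys is the explicit scale function, which it reuses in Theorem~\ref{thm:noflux}. There, adding $-A_0y^{-2k}$ turns $\int_0 s'$ from divergent to convergent.

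Your worry about $k=1$ is misplaced. The exact scale exponent is $2k-1$ for every $k$, so it leaves no margin for any $k$. The choice $\alpha=\tfrac12$ is just a point of the same interval $(0,2k-1)$, and no case split is needed.

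Two points need fixing.

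\emph{Step~4.} The starts of sojourns at $|Y|=r_0$ are not stopping times. Moreover, Step~3 restarted from $y_0=r_0$ controls nothing, since $\sigma=0$ there. Use two levels instead: apply Step~3 at the successive hitting times $\rho_j$ of $|Y|=r_0/2$ that follow each exit through $r_0$, together with the initial sojourn from $Y_0$. These are stopping times, and there are finitely many before $T$ by path continuity. Every zero of $Y$ lies in one of the resulting sojourns. The supermartingale inequality conditioned on $\mathcal{F}_{\rho_j}$ then finishes the argument, without even invoking the strong Markov property.

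\emph{Closing identification.} Theorem~\ref{thm:main} bounds $\Rea(p_w)/m$, not the current velocity of a diffusion with marginal $\rho_{\text{eff}}$. That velocity is fixed by \eqref{eq:vunique}. It is in general not $\Rea(p_w)/m$ (Proposition~\ref{prop:current}(ii)). At a moving node with $\dot M\neq0$ it blows up like $(x-\xnode)^{-2k}$ (Theorem~\ref{thm:noflux}). So continuity of the current velocity across the node is a hypothesis of the statement and must remain one; you cannot derive it from Theorem~\ref{thm:main}.
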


\begin{proof}
On a nodal cell the drift $b=v+u$ is smooth and the process is a finite-energy diffusion
in Carlen's sense~\cite{carlen1984}. The regularity assumed of $\xnode$ is automatic for a
simple zero, where the implicit function theorem applies to $\phi^*\psi=0$, and for a
quadratic Hamiltonian $\xnode$ is then a classical trajectory. At a zero of order $k\ge2$
the derivative $\partial_x(\phi^*\psi)$ vanishes as well, the implicit function theorem
does not apply, and $C^1$ regularity is a hypothesis rather than a consequence. Near an
order-$k$ boundary node the drift is \emph{repulsive} on both sides. The pole is carried by
the osmotic part alone,
$u=\nu\,\partial_x\ln\rho_{\text{eff}}=-\Ima(p_w)/m\sim2k\nu/(x-\xnode)$
[Eq.~\eqref{eq:imdiv} with $\nu=\hbar/2m$], the current velocity being bounded there by
hypothesis.

\emph{Step 1: the stationary case.} For a stationary node the process reduces near
$\xnode$, in $y=x-\xnode$, to $dy=(2k\nu/y)\,dt+\sqrt{2\nu}\,dW$, a Bessel process of
dimension $d_B=2k+1\ge3$. Its scale density is $s'(y)\propto|y|^{-2k}$ with
$\int_0|y|^{-2k}\,dy=\infty$ for every $k\ge1$, so $0$ lies at infinite scale distance and
Feller's test gives $P(\tau_0<\infty)=0$.

\emph{Step 2: the co-moving frame.} For a moving node put $y=x-\xnode(t)$, so that
\begin{equation}
dy_t=\Big[\frac{2k\nu}{y_t}+g(y_t,t)\Big]dt+\sqrt{2\nu}\,dW_t,
\quad g=b_{\text{sm}}-\dot\xnode(t),
\label{eq:comoving}
\end{equation}
$b_{\text{sm}}$ being the smooth remainder of the drift. On a closed neighborhood
$|y|\le r_0$ and a finite horizon $T$, $g$ is bounded, $|g|\le C$, since $\xnode\in C^1$
and $b_{\text{sm}}$ is continuous there.

\emph{Step 3: a quantitative exit estimate for the reference process.} Let
$dz_t=(2k\nu/z_t)\,dt+\sqrt{2\nu}\,dW_t$ started at $z_0=y_0\in(0,r_0)$, and for
$0<\delta<y_0$ let $\sigma_\delta=\inf\{t:z_t=\delta\}$,
$\sigma_{r_0}=\inf\{t:z_t=r_0\}$. With $s(y)=\int_{y_0}^{y}u^{-2k}\,du$ a scale function,
the two-sided exit formula gives
\begin{equation}
P\big(\sigma_\delta<\sigma_{r_0}\big)=\frac{s(r_0)-s(y_0)}{s(r_0)-s(\delta)}
=:\varepsilon(\delta),\qquad \varepsilon(\delta)\xrightarrow[\delta\to0]{}0,
\label{eq:exitest}
\end{equation}
since $s(\delta)\to-\infty$ as $\delta\to0$ by $\int_0u^{-2k}du=\infty$. Note that
$\varepsilon(\delta)>0$ for every $\delta>0$. The reference process \emph{does} reach any
positive level with positive probability, and only the $\delta\to0$ limit vanishes.

\emph{Step 4: transfer to $y$ by Girsanov with a uniform $L^2$ bound.} Let
$\tau_\delta=\inf\{t:|y_t|\le\delta\ \text{or}\ |y_t|\ge r_0\}$ and let $Q,P$ denote the
laws of $y$ and $z$ on $\mathcal{F}_{\tau_\delta\wedge T}$. Girsanov cannot be applied to
\eqref{eq:comoving} globally, because $2k\nu/y$ is unbounded at $y=0$. Apply it instead to
the \emph{bounded} perturbation $g$, taking $z$ as reference. The density is
$\Lambda_\delta=\exp\big[\tfrac{1}{\sqrt{2\nu}}\int_0^{\tau_\delta\wedge T}g\,dW
-\tfrac{1}{4\nu}\int_0^{\tau_\delta\wedge T}g^2dt\big]$, and Novikov's condition holds
uniformly in $\delta$,
$E\exp[\tfrac{1}{4\nu}\int_0^{\tau_\delta\wedge T}g^2dt]\le e^{C^2T/4\nu}<\infty$, so the
two laws are mutually absolutely continuous. Mutual absolute continuity alone does not
transfer a vanishing probability. A uniform $L^2$ bound does. A direct computation gives
$E_P[\Lambda_\delta^2]\le e^{C^2T/2\nu}=:M_T<\infty$, uniformly in $\delta$, whence by
Cauchy--Schwarz, with $A_\delta=\{\tau_\delta<T,\ |y_{\tau_\delta}|=\delta\}$,
\begin{equation}
\begin{aligned}
Q(A_\delta)&=E_P\big[\Lambda_\delta\mathbf 1_{A_\delta}\big]
\le\big(E_P[\Lambda_\delta^2]\big)^{1/2}P(A_\delta)^{1/2}\\
&\le M_T^{1/2}\,\varepsilon(\delta)^{1/2}\xrightarrow[\delta\to0]{}0 .
\end{aligned}
\label{eq:transfer}
\end{equation}

\emph{Step 5: from one excursion to the whole horizon.} Let
$\tau_0=\inf\{t:y_t=0\}$. On $\{\tau_0<T\}$ the path must, for every $\delta$, reach level
$\delta$ before $\tau_0$ while inside $|y|\le r_0$. Decompose $[0,T]$ by the successive
entrance and exit times of the annulus $\{\delta\le|y|\le r_0\}$. By the strong Markov
property each excursion into $\{|y|<r_0\}$ contributes at most
$M_T^{1/2}\varepsilon(\delta)^{1/2}$ to the probability of touching level $\delta$, and the
number $N_T$ of such excursions before $T$ is a.s.\ finite, since successive crossings of
an annulus of fixed width have durations bounded below in distribution. Hence for every
$n$, $Q(\tau_0<T)\le Q(N_T>n)+n\,M_T^{1/2}\varepsilon(\delta)^{1/2}$. Letting $\delta\to0$
first and then $n\to\infty$ gives $Q(\tau_0<T)=0$. Since $T$ was an arbitrary finite
horizon, $P(\tau_0<\infty)=0$ by monotone convergence along $T\in\mathbb{N}$.
\end{proof}

Unattainability of nodes for diffusions with singular drifts is not new in the
Schr\"odinger-problem setting: for a stationary node it is the subject of
Ref.~\cite{garbaczewski1994} and of the Wiener-exclusion analysis of
Ref.~\cite{blanchard1995}. What Proposition~\ref{prop:inaccessible} adds is the origin
of the drift in a two-state amplitude, so that the confining zero may belong to the
post-selected state alone, the order-$k$ scaling $d_B=2k+1$, and the co-moving Girsanov
argument that carries the conclusion to a moving nodal curve.

The hypothesis that the current velocity be continuous across the node does substantial work.
What it excludes is the situation of the following theorem, the paper's strongest no-go.

\begin{theorem}[No conditioned diffusion at a moving node]
\label{thm:noflux}
Let $\xnode(t)$ be an isolated real zero of order $k\ge1$ of $\phi^*\psi$, let
$\rho_{\text{eff}}\propto\rho_\psi\rho_\phi$ be normalized at each $t$, and let $M(t)$ be
the probability mass lying to one side of the nodal curve. If $\dot M\neq0$, then no
diffusion of the form \eqref{eq:ito} with constant diffusion coefficient $2\nu$ and
time-$t$ marginal $\rho_{\text{eff}}$ satisfies the hypotheses of
Proposition~\ref{prop:inaccessible}. More precisely:
\begin{enumerate}
\item[(i)] its current velocity is unbounded at the node and, on one side, attractive at
order $(x-\xnode)^{-2k}$;
\item[(ii)] on that side the nodal curve is attainable. The node lies at \emph{finite}
scale distance for the associated one-dimensional diffusion and is an exit boundary in
Feller's classification, and
$P\big(X_t=\xnode(t)\ \text{for some }t<\infty\big)>0$.
\end{enumerate}
Carrying the real-time product marginal and avoiding the nodal curve are therefore
incompatible requirements, and the incompatibility is one about paths: $\partial\mathcal{D}$
is reached, not merely approached.
\end{theorem}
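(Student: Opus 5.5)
Everything follows from the uniqueness formula \eqref{eq:vunique} of Proposition~\ref{prop:current}(i). Any diffusion of the form \eqref{eq:ito} with marginal $\rho_{\text{eff}}$ has its current velocity fixed by
\[
v\rho_{\text{eff}}=-\int_{-\infty}^{x}\partial_t\rho_{\text{eff}}\,dx'.
\]
So $v$ is not free, and the task is to read off its behaviour at the node.

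For (i), I evaluate the flux $J=v\rho_{\text{eff}}$ as $x\to\xnode(t)^\pm$. Write $M(t)=\int_{-\infty}^{\xnode(t)}\rho_{\text{eff}}\,dx$. Leibniz's rule gives
\[
\dot M=\rho_{\text{eff}}(\xnode,t)\,\dot\xnode+\int_{-\infty}^{\xnode}\partial_t\rho_{\text{eff}}\,dx .
\]
Since $\rho_{\text{eff}}=|G|^2(x-\xnode)^{2k}/Z(t)$ vanishes at the node, the first term drops out. Hence $J(\xnode,t)=-\dot M(t)\neq0$, and $J$ is continuous there because $\partial_t\rho_{\text{eff}}$ is locally integrable. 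The $t$-derivative of $(x-\xnode)^{2k}$ contributes only $-2k\dot\xnode(x-\xnode)^{2k-1}$, which is integrable, so continuity of $J$ holds.

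Dividing by $\rho_{\text{eff}}\sim c(t)(x-\xnode)^{2k}$ with $c>0$ gives
\[
v=-\frac{\dot M}{c(t)}\,(x-\xnode)^{-2k}\,\bigl(1+o(1)\bigr).
\]
This is unbounded, so the continuity hypothesis of Proposition~\ref{prop:inaccessible} fails. Its sign is the sign of $-\dot M$ on both sides, because the exponent $2k$ is even. It therefore points toward the node on exactly one side: the right side if $\dot M>0$ (mass flows leftward across the node), and the left side if $\dot M<0$. That proves (i) and identifies the side in (ii).

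For (ii), I pass to the co-moving coordinate $y=x-\xnode(t)$ exactly as in Step~2 of Proposition~\ref{prop:inaccessible}. On the attractive side, say $y>0$, the drift is
\[
b-\dot\xnode=-\frac{a}{y^{2k}}+\frac{2k\nu}{y}+O(1),\qquad a=\frac{\dot M}{c}>0 .
\]
The $y^{-2k}$ term dominates the osmotic $2k\nu/y$, and for $k=1$ it dominates by a factor $y^{-1}$. I freeze $t$ to get a time-homogeneous reference diffusion $dz=\beta(z)\,dt+\sqrt{2\nu}\,dW$ with this drift. Its scale density is
\[
s'(z)=\exp\!\Bigl(-\frac1\nu\int^{z}\beta\Bigr)\sim z^{-2k}\exp\!\Bigl(-\frac{a}{(2k-1)\nu\, z^{2k-1}}\Bigr),
\]
which is integrable at $0^+$. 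The node therefore lies at finite scale distance. The speed measure is $m(dz)\propto dz/s'(z)$, and the Feller quantity $\int_0 s(z)\,m(dz)$ turns out to be infinite, because $1/s'$ blows up super-exponentially. Finite scale distance together with this infinite integral is Feller's exit boundary: it is reached in finite time with positive probability and cannot be re-entered. Feller's test then gives $P(\sigma_0<\sigma_{r_0}\wedge T)>0$ for $z_0\in(0,r_0)$. This is where "reached, not merely approached" comes from.

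Transferring this to the actual process $y$ is the main obstacle. The time dependence of $a(t)$, $c(t)$ and $\xnode$ makes the drift non-autonomous, and the coefficient $1/y^{2k}$ cannot be absorbed into a Girsanov perturbation. My plan is a comparison argument rather than Girsanov. On a short time window, continuity of $\dot M$ and $c$ gives $a(t)\ge a_0>0$, and the remaining terms are bounded below by $2k\nu/y-C$. So the drift of $y$ is at most
\[
\tilde\beta(y)=-\frac{a_0}{y^{2k}}+\frac{2k\nu}{y}+C .
\]
The one-dimensional comparison theorem (Ikeda--Watanabe), applied up to the hitting time of $0$, gives $y_t\le\tilde z_t$ pathwise for the autonomous diffusion with drift $\tilde\beta$. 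Running the scale and Feller computation above for $\tilde\beta$ shows $\tilde z$ hits $0$ with positive probability before leaving $(0,r_0)$ within the window. On that event $y$ reaches $0$ no later than $\tilde z$ does, which gives the positive hitting probability.

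The delicate points are two. First, the comparison has to be run with a drift that is only locally Lipschitz on $(0,r_0)$; I would handle this by localizing on $\{\delta\le y\le r_0\}$ and letting $\delta\to0$. Second, I must check that the constants can be chosen uniformly over the window.
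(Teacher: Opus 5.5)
Your overall route matches the paper's: the flux identity at the node for (i), then a co-moving frozen reference process with scale density $z^{-2k}e^{-\kappa z^{1-2k}}$ (here $\kappa=a_0/[(2k-1)\nu]$), then one-dimensional comparison instead of Girsanov. Part (i) is correct. The genuine gap is the Feller step in (ii).

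You claim the Feller integral is \emph{infinite}, and that finite scale distance together with an infinite integral makes $0$ an exit boundary. That is not Feller's criterion. Finite-time attainability is governed by $\Sigma=\int_0[s(z)-s(0)]\,m(dz)$. The point $0$ is an exit boundary exactly when $\Sigma<\infty$ and $\int_0 m(dz)=\infty$. If $s(0)>-\infty$ but $\Sigma=\infty$, the boundary would be attracting natural: the path tends to $0$ with positive probability but never reaches it in finite time. That is precisely the ``approached, not reached'' alternative the theorem is meant to exclude. Your comparison step needs $\tilde z$ to hit $0$ inside the window, so without $\Sigma<\infty$ it has nothing to transfer.

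What the super-exponential blow-up of $1/s'$ actually shows is $\int_0 m(dz)=\infty$. That is the no-entrance half, and it is why $\int_0 s\,m(dz)$ diverges when $s(0)\neq0$. The integral that decides attainability subtracts $s(0)$, and it is finite. Since $(\ln s')'=\kappa(2k-1)z^{-2k}-2k/z>0$ for small $z$, $s'$ is increasing there. Hence $s(z)-s(0)\le z\,s'(z)$, and the integrand of $\Sigma$ is bounded by $z/2\nu$. This is the paper's bound \eqref{eq:feller}. With it, the exit classification, finite-time hitting, and $P(\tau_0<\tau_{r_0}\wedge h)>0$ all follow.

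There is a smaller problem in the same step: the remainder of the co-moving drift is not $O(1)$. The $O(y)$ correction in $\rho_{\text{eff}}=c\,y^{2k}[1+O(y)]$ puts an $O(y^{1-2k})$ term into $v$. For $k=1$ this is of the same order as the osmotic $2k\nu/y$, and it can have either sign, so it cannot be bounded by $2k\nu/y+C$. Note also that what you need there is an upper bound, not the lower bound you wrote. Instead, absorb the remainder into the leading term by replacing $a_0$ with, say, $a_0/2$ on a small enough collar $0<y\le r_1$, uniformly on the window, as in \eqref{eq:driftdom}. The comparison argument then goes through as you describe.
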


\begin{proof}
\emph{Part (i).} By Proposition~\ref{prop:current}(i) the marginal determines the current
velocity uniquely
through \eqref{eq:vunique}. Evaluating that identity at the node,
\begin{equation}
v\,\rho_{\text{eff}}\big|_{\xnode}
=-\int_{-\infty}^{\xnode(t)}\!\partial_t\rho_{\text{eff}}\,dx'=-\dot M(t),
\label{eq:nodalflux}
\end{equation}
the boundary term $\rho_{\text{eff}}(\xnode,t)\,\dot\xnode$ vanishing because
$\rho_{\text{eff}}$ does. Hence $v\sim-\dot M/[c\,(x-\xnode)^{2k}]$ near the node with
$c\neq0$, which is unbounded and, on the side where it points toward $\xnode$, attractive
at order $(x-\xnode)^{-2k}$. Since $2k\ge2>1$ for every $k\ge1$, it dominates the osmotic
repulsion $2k\nu/(x-\xnode)$ of Proposition~\ref{prop:inaccessible}, whose hypothesis of a
bounded current velocity is therefore violated.

\emph{Part (ii).} Failing a sufficient condition for inaccessibility does not by itself
make the boundary accessible, so the scale function is computed directly. Fix the side on
which $v$ points toward the node, put $y=\pm[x-\xnode(t)]>0$ there, and fix $t_0$ with
$\dot M(t_0)\neq0$ together with a window $[t_0,t_0+h]$ on which $|\dot M|\ge m_0>0$ with
fixed sign, which exists by continuity of $\dot M$. Writing
$\rho_{\text{eff}}=c(t)\,y^{2k}[1+O(y)]$ near the curve, with $c$ continuous and bounded
away from zero on the window, \eqref{eq:vunique} and \eqref{eq:nodalflux} give the
co-moving drift
\begin{equation}
\begin{aligned}
\tilde b(y,t)&=-\frac{A(t)}{y^{2k}}+\frac{2k\nu}{y}+g(y,t),\\
A(t)&=\frac{|\dot M(t)|}{c(t)}\ \ge\ A_0>0 ,
\end{aligned}
\label{eq:attractdrift}
\end{equation}
where $g$ collects $-\dot\xnode$, the smooth remainder of the drift, and the $O(y^{1-2k})$
correction generated by the $O(y)$ term in $\rho_{\text{eff}}$. Unlike the perturbation of
Step~2 of Proposition~\ref{prop:inaccessible}, $g$ is \emph{not} bounded at $y=0$, and
neither is $A(t)$ constant, so Girsanov cannot be applied to \eqref{eq:attractdrift} as it
stands. Both obstructions point the same way, and Step~3 below exploits that.

\emph{Step 1: the node lies at finite scale distance.} Consider first the frozen-node
reference process $dz_t=[-A_0z_t^{-2k}+2k\nu/z_t]\,dt+\sqrt{2\nu}\,dW_t$. With diffusion
coefficient $2\nu$ its scale density is $s'(y)=\exp[-\nu^{-1}\!\int^y b_0]$ with
$b_0(u)=-A_0u^{-2k}+2k\nu/u$, and carrying out the integral,
\begin{equation}
s'(y)=y^{-2k}\exp\!\big[-\kappa\,y^{1-2k}\big],
\qquad \kappa=\frac{A_0}{\nu(2k-1)}>0 .
\label{eq:scaleattract}
\end{equation}
Since $1-2k\le-1$, the exponential vanishes faster than any power as $y\downarrow0$ and
dominates the prefactor, so $\int_0s'(y)\,dy<\infty$ and $s(0)>-\infty$. This is the exact
reversal of Step~3 of Proposition~\ref{prop:inaccessible}, where $A_0=0$ gives
$s'(y)\propto y^{-2k}$ and $\int_0s'=\infty$. The two-sided exit formula
\eqref{eq:exitest} therefore returns, in place of $\varepsilon(\delta)\to0$,
\begin{equation}
\begin{aligned}
P_{y_0}\big(\tau_0<\tau_{r_0}\big)&=\frac{s(r_0)-s(y_0)}{s(r_0)-s(0)}=:p(y_0)>0,\\
p(y_0)&\xrightarrow[y_0\to0]{}1 .
\end{aligned}
\label{eq:hitprob}
\end{equation}

\emph{Step 2: the node is an exit boundary.} From \eqref{eq:scaleattract},
$\tfrac{d}{dy}\ln s'=\kappa(2k-1)y^{-2k}-2k\,y^{-1}>0$ for
$y<y_c:=[\kappa(2k-1)/2k]^{1/(2k-1)}$, so $s'$ is increasing there and
$s(y)-s(0)=\int_0^ys'\le y\,s'(y)$. With the speed density $m'=1/(2\nu s')$ the Feller
integral is bounded by
\begin{equation}
\int_0\big[s(y)-s(0)\big]\,m'(y)\,dy\ \le\ \int_0\frac{y}{2\nu}\,dy\ <\ \infty ,
\label{eq:feller}
\end{equation}
the integrand being continuous and bounded on $[y_c,r_0]$. Hence $0$ is an exit boundary in
Feller's classification and, on the event of \eqref{eq:hitprob}, is reached in finite time.
Choosing the horizon large enough, $p_*:=P(\tau_0<\tau_{r_0}\wedge h)>0$.

\emph{Step 3: transfer to the moving node by comparison.} The residual terms in
\eqref{eq:attractdrift} are of lower order than $y^{-2k}$, so there is an $r_1\in(0,r_0]$
on which
\begin{equation}
\begin{aligned}
\tilde b(y,t)&\ \le\ -A_0y^{-2k}+\frac{2k\nu}{y},\\
&\quad 0<y\le r_1,\quad t\in[t_0,t_0+h],
\end{aligned}
\label{eq:driftdom}
\end{equation}
uniformly on the compact window: the frozen reference process of Step~1 has \emph{less}
inward drift than $y_t$ everywhere on the collar. Since \eqref{eq:attractdrift} and the
reference equation are driven by the same Brownian motion and have the same diffusion
coefficient, the comparison theorem for one-dimensional It\^o equations, applied on
$(\varepsilon,r_1)$ and followed by $\varepsilon\downarrow0$, gives $y_t\le z_t$ pathwise
up to the exit time of the collar, for equal starting points. On the event of
\eqref{eq:hitprob} the reference path reaches $0$ before $r_0$ and within the window, and
the dominated path, lying below it, does the same. Unboundedness of $g$ and time
dependence of $A(t)$ are therefore not obstructions but help: both were absorbed into
\eqref{eq:driftdom} in the favorable direction, which is why comparison is used here in
place of the Girsanov transfer of Step~4 of Proposition~\ref{prop:inaccessible}. Finally
$P(y_{t_0}\in(0,r_1))>0$, since $\rho_{\text{eff}}(\cdot,t_0)$ is strictly positive on the
open cell and so charges the collar, whence
\begin{equation}
\begin{aligned}
&P\big(X_t=\xnode(t)\ \text{for some }t\le t_0+h\big)\\
&\qquad\ \ge\ p_*\,P\big(y_{t_0}\in(0,r_1)\big)\ >\ 0 .
\end{aligned}
\label{eq:transferpos}
\end{equation}
\end{proof}

The scale integral \eqref{eq:scaleattract}, the monotonicity region $y<y_c$, and the bound
\eqref{eq:feller} have been evaluated at $60$-digit precision for $k=1,2,3$ and several
$(\nu,A_0)$, and direct Euler--Maruyama integration of the reference equation from
$y_0=0.3$ reproduces the dichotomy: with $A_0=0$ the node is never reached and the process
exits at $r_0$ in every path, while with $A_0=1$ it is reached in $98.7\%$ of paths at
$k=1$ and in all of them at $k=2$. The flux is what switches the boundary
classification.

Proposition~\ref{prop:inaccessible} and part~(ii) are the two sides of one dichotomy, and
the parameter that separates them is the flux. For a \emph{stationary} node $\dot M=0$, the
product marginal generates a drift of the form \eqref{eq:nodalsde}, the node sits at
infinite scale distance, and the process never reaches it. For a \emph{moving} node the
current velocity supplies an inward $-A_0y^{-2k}$ which, since $2k\ge2>1$, dominates that
repulsion; the node moves to finite scale distance and is reached with positive
probability. The failure is not
a small one. In the configuration of
Sec.~\ref{sec:numerics}, $M$ runs from $0.005$ at $t=0.2$ to $0.998$ at $t=3.8$ (verified
numerically), so almost all of the probability would have to cross the nodal curve. The
cell-restricted reciprocal process of Corollary~\ref{cor:conditioning} does satisfy the
hypothesis, its absorbing lateral boundary conserving the cell mass by construction. Theorem~\ref{thm:noflux} is the process-level counterpart of the
pointwise mismatch of Proposition~\ref{prop:current}(ii), and the more general of the two, since
it needs no state class, only a node that moves.

The repulsion strength is the same in either convention, though
the bookkeeping differs. For the Euclidean bridge of Appendix~\ref{app:bernstein} the
marginal $\eta\eta^*\sim(x-\xnode)^{k}$ vanishes at order $k$ only, its osmotic part
contributing $k\nu/(x-\xnode)$ and the log-ratio $\nu\,\partial_x\ln(\eta^*/\eta)$ the
remaining $k\nu/(x-\xnode)$, for a forward drift
$b_+=2\nu\,\partial_x\ln\eta^*\sim2k\nu/(x-\xnode)$ in total, the same $2k\nu/y$ as in
real time. The conclusion is also sharp in $d_B$. We have checked it numerically against
subcritical controls $d_B<2$, which do reach the boundary, while $d_B=2k+1\ge3$ does not.
The flux-driven case of Theorem~\ref{thm:noflux}(ii) is not a subcritical Bessel process at
all, the inward $y^{-2k}$ term being of higher order than any Bessel drift, which is why it
reaches the boundary for every $k\ge1$.

This strengthens Corollary~\ref{cor:conditioning}.
$\partial\mathcal{D}$ is not a removable gap in the construction, to be patched by a
better choice of $h$-function. For the cell-restricted conditioned process it is a barrier
that cannot be crossed, and for any diffusion that would instead carry the product marginal
across a moving node it is a boundary that is reached; either way the positive conditioning
stops there. That
is the dynamical content behind ``fails on $\partial\mathcal{D}$,'' and behind the
impossibility, recorded in Sec.~\ref{sec:caveats}, of extending the positive conditioning
across it.

It remains to ask what happens when the idealizations of Theorem~\ref{thm:main} are
relaxed. A zero of finite order is one thing, but a real experiment has neither an exact
zero nor an exactly real one. Both relaxations turn out to be controlled, and by the same
parameter.

\begin{proposition}[Zero location, not zero order]
\label{prop:genericity}
Let $\phi$ carry the zero. Then:
\begin{enumerate}
\item[(i)] $\Rea(p_w)$ is bounded at a real zero of \emph{any} order $k$, with a value
there independent of $k$.
\item[(ii)] Displacing the zero off the real axis, $\phi=x-i\epsilon$, produces in
$\Rea(p_w)$ an integrable spike and regularizes $\Ima(p_w)$ into a dispersive profile
of width $\epsilon$, which sharpens to the $k=1$ pole of \eqref{eq:imdiv} as
$\epsilon\to0$.
\end{enumerate}
\end{proposition}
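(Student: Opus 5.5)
Part~(i) will follow from Step~3 of the proof of Theorem~\ref{thm:main}, read for arbitrary $k$. I take the real-envelope form of Remark~\ref{rem:scope}(ii), $\phi=R_\phi e^{-i(p_\phi x+\theta_\phi)/\hbar}$ with $R_\phi=(x-\xnode)^kG$, $G$ real, $C^1$ and nonvanishing. Then $\phi'/\phi=k/(x-\xnode)+G'/G-ip_\phi/\hbar$. The pole term is real for real $x,\xnode$ and every integer $k$, so it drops out of $\Ima(\phi'/\phi)$. This leaves $\partial_xS_\phi=p_\phi$ at and near the node, and hence $\Rea(p_w)=\partial_x(S_\psi+S_\phi)=p_\psi+p_\phi$, using $v_\psi$ smooth since $\psi\neq0$ there. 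The value is independent of $k$ because $k$ enters only through the real pole coefficient. By Proposition~\ref{prop:exchange} nothing changes if $\psi$ carries the zero instead.

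For part~(ii) I will compute directly with $\phi=x-i\epsilon$ times the other factors. Only the local factor matters, since any smooth nonvanishing cofactor adds bounded terms. I take $\phi^*\psi\ni\bar\phi=x+i\epsilon$, so that
\[
-i\hbar\,\partial_x\ln(x+i\epsilon)=-i\hbar\,\frac{x-i\epsilon}{x^2+\epsilon^2}
=-\frac{\hbar\epsilon}{x^2+\epsilon^2}-i\hbar\,\frac{x}{x^2+\epsilon^2}.
\]
The real part is the Lorentzian spike $-\hbar\epsilon/(x^2+\epsilon^2)$. It has height $\hbar/\epsilon$ and total integral $-\pi\hbar$, independent of $\epsilon$, so it is integrable with bounded mass. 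The imaginary part is the dispersive profile $-\hbar x/(x^2+\epsilon^2)$. Its extrema are at $x=\pm\epsilon$, which fixes the width $\epsilon$.

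For the limit $\epsilon\to0$ I will state convergence in the appropriate senses. The dispersive profile converges to $-\hbar/x$ pointwise for $x\neq0$ and distributionally to $-\hbar\,\mathrm{P}(1/x)$. This is the $k=1$ case of \eqref{eq:imdiv}. The Lorentzian converges to $-\pi\hbar\,\delta(x)$. I will remark that this is consistent with part~(i): the spike is invisible pointwise away from $x=0$, and at $\epsilon=0$ it is replaced by the phase jump of $\pi$ that the signed factorization absorbs. Using the other sign convention, with $\phi$ itself entering, flips only the sign of the spike.

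The main obstacle is interpretive rather than computational: reconciling the $\delta$-mass in $\Rea(p_w)$ with the boundedness claimed in (i). I would handle it by noting two things. First, the distributional limit of the spike, $-\pi\hbar\,\delta$, is exactly $\hbar$ times the $\pi$ phase jump of the modulus-phase description that Theorem~\ref{thm:main} warned against. Second, in the signed factorization this jump is relabeled into the sign of $R$. So (i) concerns the function on the punctured neighborhood, and (ii) concerns the approach to it.
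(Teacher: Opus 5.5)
Your proposal is correct and follows the paper's proof essentially step by step. In part~(i), the pole $k/(x-\xnode)$ in $\phi'/\phi$ is real, so it drops out of $\partial_xS_\phi=-\hbar\,\Ima(\phi'/\phi)=p_\phi$; in part~(ii), the same direct computation gives the Lorentzian $-\hbar\epsilon/(x^2+\epsilon^2)$ with limit $-\pi\hbar\,\delta(x)$, and the dispersive profile $-\hbar x/(x^2+\epsilon^2)$ with extrema at $x=\pm\epsilon$ and pointwise limit $-\hbar/x$. Your extra remarks go slightly beyond what the paper writes but are correct: the principal-value limit, and reading the $\delta$-mass as the $\pi$ phase jump that the signed factorization absorbs.
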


\begin{proof}
(i) With $\phi=R_\phi e^{-i(p_\phi x+\theta_\phi)/\hbar}$ and $R_\phi=(x-\xnode)^kG$ as in
the proof of Theorem~\ref{thm:main},
\begin{equation}
\begin{aligned}
\partial_xS_\phi&=-\hbar\,\Ima\frac{\phi'}{\phi}\\
&=-\hbar\,\Ima\bigg[\frac{k}{x-\xnode}+\frac{G'}{G}-\frac{ip_\phi}{\hbar}\bigg]=+p_\phi ,
\end{aligned}
\end{equation}
the divergent term being real at every order and so never entering $\Rea(p_w)$.

(ii) For $\phi=x-i\epsilon$ one has $\phi'/\phi=(x+i\epsilon)/(x^2+\epsilon^2)$, so
\begin{equation}
\partial_xS_\phi=-\frac{\hbar\epsilon}{x^2+\epsilon^2},
\end{equation}
a Lorentzian of height $\hbar/\epsilon$ and width $\epsilon$. Since
$\int\epsilon\,dx/(x^2+\epsilon^2)=\pi$ for every $\epsilon>0$, its $\epsilon\to0$ limit is
$-\pi\hbar\,\delta(x)$ in the distributional sense. For the osmotic field,
$\rho_\phi=|\phi|^2=x^2+\epsilon^2$ in \eqref{eq:im} gives
\begin{equation}
\Ima(p_w)=-\tfrac{\hbar}{2}\,\partial_x\ln(x^2+\epsilon^2)
=-\frac{\hbar x}{x^2+\epsilon^2},
\end{equation}
whose derivative $-\hbar(\epsilon^2-x^2)/(x^2+\epsilon^2)^2$ vanishes at $x=\pm\epsilon$,
where the profile takes the values $\mp\hbar/2\epsilon$. As $\epsilon\to0$ this tends
pointwise to $-\hbar/x$, the $k=1$ case of \eqref{eq:imdiv}.
\end{proof}

The dividing line is therefore zero \emph{location}, not zero order. This matters
experimentally, because a real interference pattern never has an exact node. Its finite
fringe visibility $V<1$ corresponds to some $\epsilon>0$, and the predicted divergence
appears not as a literal pole but as a finite sign-changing peak of height
$\hbar/2\epsilon$ that grows as $V\to1$. The peak height is a bounded proxy for proximity
to a true node, and the profile $-\hbar x/(x^2+\epsilon^2)$ is what a conjugate-pointer
readout (Sec.~\ref{sec:experiment}) resolves.

A finite visibility can arise in two physically distinct ways, and these must be kept
apart, because they differ in the \emph{current} channel even though they agree in the
osmotic one. \emph{Coherent} visibility loss is the pure-state model $\phi=x-i\epsilon$
used above, a zero displaced off the real axis. It regularizes the osmotic profile but
simultaneously produces the current-channel Lorentzian of part~(i), of height
$\hbar/\epsilon$ at the fringe center, the \emph{same} order as the osmotic peak
$\hbar/2\epsilon$. \emph{Incoherent} visibility loss, a mixed $\hat\rho$ such as a
partially distinguishable two-path mixture, leaves the phase of every pure component
smooth at the intensity minimum, so the current channel stays smooth. The osmotic profile
is meanwhile fixed by the recorded intensity alone, since
$\Ima(k_{x,w})=-\tfrac12\partial_x\ln\rho(x)$ holds for pure and mixed states alike, and
takes the same dispersive form with the same $\epsilon(V)$. The two are therefore
distinguished not by the osmotic profile but by whether the current channel carries the
companion spike. The smooth $\Rea(p_w)$ that Kocsis \emph{et al.}\ measured through the
dark fringes~\cite{kocsis2011} already points to the incoherent mechanism there, and
Sec.~\ref{sec:experiment} accordingly builds its quantitative prediction on the intensity
route.

\begin{remark}[Scope]
\label{rem:scope}
The results as proved carry three standing restrictions, collected here instead of being left
implicit. (i) They are one-dimensional. In $d>1$ the zero of $\phi^*\psi$ is
generically an isolated vortex about which $\Theta$ winds by $2\pi\ell$, so
$\nabla\Theta\sim\ell/r$ and the boundedness hypothesis of Theorem~\ref{thm:main} fails.
Both parts of $p_w$ then diverge, as
$p_w\to\hbar\ell\,\hat\varphi/r-i\hbar k\,\hat r/r$, with independent integer strengths
and asymptotically orthogonal directions. The Madelung split
\eqref{eq:re}--\eqref{eq:im} remains a pointwise identity
(Remark~\ref{rem:kinematic}), but the asymmetry asserted by Theorem~\ref{thm:main} is a
one-dimensional statement and does not survive. The node-inaccessibility argument changes
as well, the planar Laplacian contributing an extra $\nu/r$ so that the Bessel dimension
becomes $2k+2$ rather than $2k+1$. That regime is the subject of singular optics, where the codimension-two nodal geometry of
wave fields has been developed since Nye and Berry~\cite{nyeberry1974,dennis2009}, and the
two-state version of it is left to future work (Sec.~\ref{sec:caveats}).

(ii) They hold on the real-envelope class of Theorem~\ref{thm:main}, in which $\phi^*\psi$ admits the signed factorization
\eqref{eq:signedpolar}. Energy eigenstates, Hermite--Gaussian modes, and symmetric
interference patterns lie in it. The hypothesis costs the order of the node.
Within the class a \emph{simple} ($k=1$) real node is generic, one real condition $R=0$ in one
real variable, whereas $k\geq2$ requires $R=R'=\dots=R^{(k-1)}=0$ and is again
non-generic. In particular no Hermite--Gaussian mode or two-slit interference pattern
supplies one, since the zeros of $H_n$ are all simple. Results stated for general $k$ are
therefore exact but, for $k\geq2$, apply to engineered nodes and not to naturally occurring
ones. Additivity supplies the engineering: two simple zeros, one in each state, at the same
point give $k=2$ in the amplitude (Sec.~\ref{sec:additivity}).

(iii) The Hamiltonians are quadratic, which is used only to obtain a closed-form node
trajectory $\xnode(t)$ and an explicit Bernstein/Mehler construction to check against,
never for the kinematic Theorem~\ref{thm:main} itself (Remark~\ref{rem:kinematic}).

Restrictions (i) and (ii) are two readings of a single fact, not a double
standard. Isolated real zeros are codimension two, and hence non-generic, for an arbitrary
complex wavefunction. They are codimension one, and hence generic, within the
real-envelope class, which is exactly the class the physically central cases inhabit. The higher-codimension
nodal geometry of (i) is a genuinely different object, flagged as future work, not a gap in
the $1$D statement proved here.
\end{remark}

\begin{remark}[Relation to quantum Schr\"odinger bridges and to singular optics]
\label{rem:related}
The name quantum Schr\"odinger bridge covers four recent constructions, and what each does
has to be separated, because what the present results bound is the positivity
hypothesis they share, not any assertion made in them.

Pavon~\cite{pavon2003} works on configuration space and forms the complex drift
$v_q=\tfrac{\hbar}{im}\nabla\log\psi=v-iu$, which is \eqref{eq:re}--\eqref{eq:im} for a
\emph{single} state, and the post-measurement process there is the Nelson process of another
solution of the same Schr\"odinger equation. Movilla Miangolarra \emph{et
al.}~\cite{miangolarra2025} do have the two-state product structure, their
$\tilde\rho_\tau=\phi_\tau^{1/2}\hat\phi_\tau\phi_\tau^{1/2}$, but the dynamics there are
finite-dimensional Kraus maps, so no configuration space is available for a gradient to
live on, and their weak value is accordingly defined as a real part.
Ohzeki~\cite{ohzeki2026} makes the imaginary part of a weak value the optimal generator,
and is explicit that the full noncommutative stochastic bridge problem is deliberately set
aside there in favor of a coherent pure-state limit. The dynamics are accordingly unitary,
and his $\Ima A_w$ is the response of the post-selection probability to a coupling
parameter rather than a velocity on configuration space.
Aw and Sidajaya~\cite{aw2026} treat Petz recovery and prior hacking, with no weak value at
all. Table~\ref{tab:related} collects this.

The pattern in the last column is the substantive point. The constructions of
Refs.~\cite{pavon2003,miangolarra2025,aw2026}, independently of one another,
assume positivity of the endpoint data, and each notes what goes wrong without it.
Reference~\cite{aw2026} shows that a dephasing channel, which violates the condition, admits the
construction only on a strictly smaller set, which is the discrete counterpart of the
cell restriction in Corollary~\ref{cor:conditioning}. On configuration space with a
two-state amplitude the condition is not a proviso but has a locus of failure, the nodal
curve, and Theorem~\ref{thm:noflux} is what happens there. This paper therefore marks the
boundary of a framework, not a contradiction of a published claim.

The positivity hypothesis is older than these constructions, and so is the program of
relaxing it. Fortet~\cite{fortet1940} treated the one-dimensional Schr\"odinger system
for a continuous nonnegative kernel, and Blanchard, Garbaczewski, and
Olkiewicz~\cite{garbaczewski1994,blanchard1995} observed that strictly positive
Feynman--Kac kernels force the interpolating density to vanish only on the boundary of
the confining volume, and extended the problem to nonnegative kernels generated by
singular potentials, obtaining processes whose densities develop and destroy interior
zeros. Those results delimit, rather than overlap, the present ones, in three respects.
Their zeros are zeros of a single interpolating density, the Euclidean counterpart of
$\psi\bar\psi$, never of a two-state amplitude. Their persistent node carries no flux:
in their free example, built on the odd initial state $x\,e^{-x^2/4}$, the current
velocity vanishes at the node for all times, the $\dot M=0$ case in which
Proposition~\ref{prop:inaccessible} applies, and their conclusion there, an inaccessible
repelling barrier separating two non-communicating processes, is the stationary $k=1$
instance of it. And their node-destroying evolution places the zero at a single
instant, so no nodal curve exists on an open time interval, the hypothesis of
Theorem~\ref{thm:noflux} is never met, and mass crosses where the node was only after
the node has ceased to exist. Within their framework, wherever the node persists the
flux through it vanishes, and wherever the flux is nonzero the node does not persist.
Their extension and Theorem~\ref{thm:noflux} are therefore complementary rather than in
tension.

Relative to singular optics the difference is the same one. Berry~\cite{berry2009} and
Bliokh \emph{et al.}~\cite{bliokh2013} work with a single field, so their $\nabla\ln|E|$ is
the modulus gradient of one wave and their zeros are its own. Here the zeros may be
supplied by the post-selected state, what fails at them is a conditioning and not a
representation, and the discriminator of Sec.~\ref{sec:experiment} separates coherent from
incoherent visibility loss in the current channel, which the single-field reading does not.
A multi-photon classical field reproduces the mean profile but certifies nothing about
noncontextual models.
\end{remark}

\begin{remark}[A kinematic fact, not a dynamical one]
\label{rem:kinematic}
Nothing in Theorem~\ref{thm:main} refers to a Hamiltonian. It is a statement about the
Madelung decomposition \eqref{eq:re}--\eqref{eq:im} of two arbitrary normalizable states
$\psi,\phi$ at a fixed instant, and holds regardless of what dynamics generated them.
The restriction to quadratic Hamiltonians used elsewhere in this paper
(Secs.~\ref{sec:nodecaustic}--\ref{sec:numerics}, Appendices~\ref{app:bernstein}--\ref{app:beyond})
is needed only to obtain a closed-form trajectory $\xnode(t)$ and an explicit
Bernstein/Mehler construction to verify against, not for the theorem itself, which is
purely algebraic.
\end{remark}

\begin{remark}[Relation to contextuality]
\label{rem:contextuality}
Anomalous weak values require the overlap $\langle\phi|\psi\rangle$ to be small, i.e.\
proximity to $\partial\mathcal{D}$. They are contextual~\cite{pusey2014}. That the
obstruction lives in $\Ima(p_w)$ is consistent with the analysis of Kunjwal, Lostaglio,
and Pusey~\cite{kunjwal2019}, in which the imaginary part of the weak value and the
robustness of the contextuality witness are explicitly tied. Section~\ref{sec:experiment}
makes this quantitative, for an explicitly constructed effect. The
osmotic field is, up to the post-selection probability and a fixed normalization, the
imaginary Kirkwood--Dirac quasiprobability that violates the KLP inequality.
\end{remark}

\begin{remark}[Relation to Mori and Tsutsui]
\label{rem:moritsutsui}
Mori and Tsutsui~\cite{moritsutsui2015} observed a related pattern in a double-slit
setting, the real part of a weak value tracking a classical trajectory while the imaginary
part diverges under destructive interference, which they read as the particle and the wave
aspect of a weak trajectory. They write
$A_w=\langle\psi|A|\phi\rangle/\langle\psi|\phi\rangle$ with the roles of the two symbols
interchanged relative to the convention used here, and we quote their results in our
labelling. The parallel with the split \eqref{eq:re}--\eqref{eq:im} is close, but the two
divergences are different objects. Theirs is global. It occurs when the total transition
amplitude $K=\langle\psi|\phi(T)\rangle$ [their $K(0)$] vanishes, and is found by scanning
the post-selection parameter until it does. Ours is local. It occurs at fixed and
generically nonzero overlap, as a function of $x$, wherever $\phi^*\psi(x)$ has a spatial
node. One is a divergence in the post-selection knob and the other a divergence in the
transverse coordinate, so no single measurement can confuse them. Both can be present at
once, and which parameter is varied separates them. Mori and Tsutsui note themselves that
their weak-trajectory construction differs from the Wiseman and Kocsis current-velocity
line~\cite{wiseman2007,kocsis2011} that the present paper builds on and extends.
\end{remark}

\section{Node versus caustic, two distinct boundaries}
\label{sec:nodecaustic}

A second locus of divergence appears in the delta limit, and it is not a boundary of the
conditioning. In the
position-eigenstate (delta) limit, where $\phi^*(x,t)=\langle x_f|U(T,t)|x\rangle=K$ is
the forward propagator (Sec.~\ref{sec:madelung}), the oscillator's delta-limit field is
\begin{equation}
\Rea(p_w)/m=v_\psi-\dot x_{\text{cl}}(t),\qquad
\dot x_{\text{cl}}(t)=\omega\,\frac{x_f-x\cos\omega\tau}{\sin\omega\tau},
\label{eq:xdotcl}
\end{equation}
with $\tau=T-t$, where $\dot x_{\text{cl}}$ is the velocity of the classical trajectory constrained to
reach $x_f$ at time $T$ (Appendix~\ref{app:mehler}) and $v_\psi=p_\psi/m$ is the
pre-selected state's own current velocity, a spatial constant in the real-envelope class.
We set $v_\psi=0$ in what follows, the general case adding only that constant. The overall sign fixes the relation between $\Rea(p_w)$ and
a bridge drift in the sharpest available case. A bridge steering a diffusion toward $x_f$
drifts \emph{toward} it, and $-\dot x_{\text{cl}}$ points the other way. The reason is structural. Since $\phi$
propagates backward, its phase gradient enters
$\Rea(p_w)=\partial_x(S_\psi+S_\phi)$ with the sign of the time-reversed motion, as
Eq.~\eqref{eq:mehlerscore} makes explicit.
Proposition~\ref{prop:current} makes a related point at the level of processes, but
\eqref{eq:xdotcl} makes this one in a line, before any stochastic machinery is invoked.
The sign aside, \eqref{eq:xdotcl} also diverges, and on the entire time-slice
$\omega\tau=\pi$ rather than at any particular $x$.
This is the Mehler caustic, a semiclassical focusing singularity of the propagator derived
from the two-point action in Appendix~\ref{app:mehler}. It expresses the fact that the
boundary-value problem $x\to x_f$ over exactly half a period admits no finite-velocity
solution. It has nothing to do with a node of any wavefunction, and for a finite-width,
normalizable state it is not there at all. The overlap stays nonzero, and both
$\Rea(p_w)$ and $\Ima(p_w)$ remain finite through $\omega\tau=\pi$.

\begin{figure}[!htbp]
\centering
\includegraphics[width=\columnwidth]{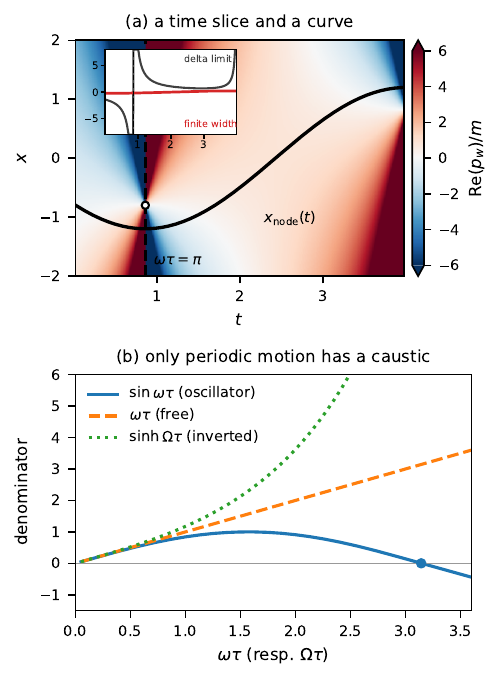}
\caption{Caustic versus node. \textbf{(a)} The delta-limit field
\eqref{eq:xdotcl} diverges along an entire time-slice $\omega\tau=\pi$ (dashed), the same
$t$ for every $x$ (with the single exception of $x=-x_f$, where numerator and denominator
vanish together, Proposition~\ref{prop:geometry}), whereas the osmotic boundary
$\partial\mathcal{D}=\{x=\xnode(t)\}$ (solid) is a curve. The two are geometrically
distinct objects in the same plane. Inset, at fixed $x$. The delta-limit field blows up at
the caustic while the finite-width (coherent) $\Rea(p_w)$ of \eqref{eq:cohRe} passes
through it smoothly, the overlap staying nonzero. \textbf{(b)} The caustic exists only for
periodic motion, the constrained-velocity
denominator is $\sin\omega\tau$ for the oscillator (vanishing at $\tau=n\pi/\omega$,
circles), but $\omega\tau$ for the free particle and $\sinh\Omega\tau$ for the inverted
oscillator (Appendix~\ref{app:beyond}), which never vanish. The node curve of panel~(a) is
present in all three cases.}
\label{fig:caustic}
\end{figure}

The two are easy to conflate on a field plot, both appearing as a line of singular
behavior in the $(x,t)$ plane. The following isolates the test that distinguishes them.

\begin{proposition}[Geometric diagnostic]
\label{prop:geometry}
\begin{enumerate}
\item[(i)] The delta-limit field \eqref{eq:xdotcl} diverges on the entire time-slice
$\omega\tau=n\pi$, $n\in\mathbb{Z}_{>0}$, independently of $x$, except at the single point
$x=(-1)^{n}x_f$.
\item[(ii)] The osmotic boundary $\partial\mathcal{D}=\{(x,t):\phi^*\psi=0\}$ is, at an
isolated simple zero, locally a curve $x=\xnode(t)$.
\item[(iii)] If $\psi,\phi$ are $C^1$ with $\phi^*\psi\neq0$ at $(x,t)$, both $\Rea(p_w)$
and $\Ima(p_w)$ are finite there, whatever the value of $\omega\tau$.
\end{enumerate}
\end{proposition}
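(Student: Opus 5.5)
The plan is to treat the three parts separately, each by a direct computation from results already in hand.

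For (i), I would start from the explicit delta-limit field \eqref{eq:xdotcl} with $v_\psi=0$:
\[
\Rea(p_w)/m=-\omega\,\frac{x_f-x\cos\omega\tau}{\sin\omega\tau}.
\]
The denominator vanishes exactly when $\omega\tau=n\pi$ with $n\in\mathbb{Z}_{>0}$. This locus depends on $t$ only, and positive $n$ is forced because $\tau=T-t>0$. At such a time $\cos\omega\tau=(-1)^n$, so the numerator becomes $x_f-(-1)^n x$. It vanishes only at $x=(-1)^n x_f$, because $(-1)^{2n}=1$. Everywhere else on the slice the numerator is nonzero and the denominator tends to zero, so the field diverges like $1/(\omega\tau-n\pi)$.

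At the exceptional point I would only remark that numerator and denominator vanish together, so the statement merely excludes that point. If one wanted more, a Taylor expansion in $\omega\tau-n\pi$ along the line $x=(-1)^nx_f$ would give a finite limit $-\omega x_f\cdot 0\ldots$. I will not pursue this, since the statement claims nothing there.

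For (ii), I would apply the implicit function theorem to $F(x,t)=\phi^*\psi(x,t)$. At an isolated simple zero, $\partial_xF\neq0$; this is the content of $k=1$ in the factorization of Theorem~\ref{thm:main}, where $R=(x-\xnode)G$ with $G\neq0$. Inside the real-envelope class the equation reduces to a single real equation $R(x,t)=0$, because the phase factor $e^{i\Theta}$ never vanishes. With $R$ assumed $C^1$ in $(x,t)$ and $\partial_xR(\xnode,t)=G(\xnode,t)\neq0$, the theorem yields a unique $C^1$ function $\xnode(t)$ with $\{R=0\}$ locally equal to its graph. This is the same regularity argument already invoked in the proof of Proposition~\ref{prop:inaccessible}.

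The main obstacle is the hypothesis itself. For a general complex amplitude, $F=0$ consists of two real equations and is not generically a curve. I would therefore state explicitly that the real-envelope (signed) factorization is being used, in keeping with Remark~\ref{rem:scope}(ii).

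For (iii), I would invoke Step~1 of the proof of Theorem~\ref{thm:main}, or equivalently the formula $p_w=-i\hbar\,\partial_x(\phi^*\psi)/(\phi^*\psi)$. Since $\psi$ and $\phi$ are $C^1$, the numerator $\partial_x(\phi^*\psi)=\phi^{*\prime}\psi+\phi^*\psi'$ is finite. The denominator is nonzero by assumption. Hence both real and imaginary parts are finite. Nothing in this argument refers to $\omega\tau$, consistent with Remark~\ref{rem:kinematic}.

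Finally, I would note the contrast with (i). For a finite-width normalizable $\phi$ the overlap stays nonzero at $\omega\tau=\pi$, so the caustic of (i) is an artefact of the non-normalizable delta limit and is not a point of $\partial\mathcal{D}$.
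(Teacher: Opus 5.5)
Your proposal is correct and follows the paper's proof part by part. In (i) both arguments compare the zeros of $\sin\omega\tau$ with the numerator $x_f-(-1)^nx$. In (ii) both use the implicit function theorem. In (iii) both rest on finiteness of the logarithmic derivative off the zero set.

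Where you depart from the paper, your version is the more careful one.

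\emph{Part (ii).} The paper applies the implicit function theorem to the complex amplitude itself, using only $\partial_x(\phi^*\psi)\neq0$. That is two real equations in one real unknown, so without further structure the conclusion can fail. For example, take the local free solution $\psi=x+x^2+2i\nu t$ paired with any nonvanishing $\phi$. It has an isolated simple zero at the origin with $\partial_x\neq0$, yet its zero set there is a single spacetime point, not a curve. Dividing out the nonvanishing phase and applying the theorem to the single real equation $R(x,t)=0$, with $R$ jointly $C^1$ in $(x,t)$, is what actually makes the step work. The cost is that (ii) is confined to the real-envelope class, which is the paper's standing class anyway.

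\emph{Part (iii).} The paper points to \eqref{eq:pwsigned}, whose second form is derived under the signed factorization. Your quotient $-i\hbar\,\partial_x(\phi^*\psi)/(\phi^*\psi)$ uses only the stated hypotheses, so it covers arbitrary $C^1$ pairs.

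\emph{The exceptional point in (i).} This is the one thing you leave unfinished. The word ``except'' does carry a claim, and the paper settles it by l'H\^opital in $\tau$, which gives the limit $0$. Your Taylor remark closes in one line: with $\epsilon=\omega\tau-n\pi$ and $x=(-1)^nx_f$, the field is exactly $-\omega(-1)^nx_f\tan(\epsilon/2)$, which tends to $0$. This is a limit at fixed $x$ only. Jointly in $(x,\tau)$ the field is unbounded near that point, for instance along $x-(-1)^nx_f\propto\sqrt{\epsilon}$. So ``finite there'' should be read in that directional sense.
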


\begin{proof}
(i) With $\tau=T-t$, \eqref{eq:xdotcl} reads
$\Rea(p_w)/m=-\omega(x_f-x\cos\omega\tau)/\sin\omega\tau$. The denominator vanishes
exactly on $\omega\tau\in\pi\mathbb{Z}$, a locus independent of $x$, and the numerator is
then $x_f-(-1)^{n}x$, which vanishes only at the single point $x=(-1)^{n}x_f$. Away from
that point the field diverges. At that point l'H\^opital's rule in $\tau$ gives the finite
limit $0$.
This is the conjugate point through which the classical family does pass
(Appendix~\ref{app:mehler}).

(ii) At a simple zero, $\phi^*\psi=0$ and $\partial_x(\phi^*\psi)\neq0$, so the implicit
function theorem gives a unique $C^1$ solution branch $x=\xnode(t)$ locally.

(iii) By \eqref{eq:pwsigned}, $p_w=-i\hbar\,\partial_x\ln(\phi^*\psi)$ is finite wherever
$\phi^*\psi\neq0$ and both factors are $C^1$, and neither $\omega$ nor $\tau$ appears. The
divergence of \eqref{eq:xdotcl} arises only from the delta limit
$|\phi_f\rangle\to|x_f\rangle$, in which $\phi^*$ becomes the propagator $K$ and the
Van Vleck prefactor degenerates (Appendix~\ref{app:mehler}). For a post-selection of
finite width $\sigma$ the overlap $\langle\phi|\psi\rangle$ stays nonzero through
$\omega\tau=\pi$, and the apparent singularity is washed out on the scale set by $\sigma$.
The caustic is in this sense a delta-limit artifact. The osmotic divergence, by contrast,
is the content of Theorem~\ref{thm:main} and persists for every such pair.
\end{proof}

The distinction becomes sharper once one leaves the oscillator. The delta-limit field
\eqref{eq:xdotcl} diverges where its denominator $\sin\omega\tau$ vanishes. But this
denominator is $\omega\tau$ for the free particle and $\sinh\Omega\tau$ for the inverted
oscillator (Appendix~\ref{app:beyond}), neither of which vanishes at any finite $\tau>0$.
The caustic is therefore not a generic feature of quadratic dynamics, but is special to
\emph{periodic} classical motion, recurring every half period only for the ordinary
oscillator (Fig.~\ref{fig:caustic}b). The node curve $\xnode(t)$ persists for all three,
and node and caustic are geometrically distinct (Proposition~\ref{prop:geometry}) and need
not both be present. Section~\ref{sec:numerics} shows the node and the regularized caustic
side by side.

\section{Closed-form scan across the node and sampled trajectories}
\label{sec:numerics}

We check all of this for the oscillator ($\hbar=m=\omega=1$) with total interval $T=4$,
chosen so that $\pi<\omega T<2\pi$ puts exactly one Mehler caustic ($\omega\tau=\pi$)
inside the window. The pre-selection is a coherent state of phase-space center
$(q_0,p_0)=(0.6,0.4)$. The post-selection is either a nodeless coherent state of center
$(q_f,p_f)=(-0.8,0)$ at time $T$, or a displaced first-excited (Hermite--Gaussian)
state whose center, and with it the single node $\xnode(t)$, sits at
$(q,p)=(1.2,0)$ at time $T$ and follows the unconditioned classical trajectory when
propagated back. Both fields are evaluated by central differences of $\ln(\phi^*\psi)$
with step $10^{-6}$, and checked against the closed-form expressions of
Appendices~\ref{app:bernstein} and~\ref{app:mehler} where available. The maps of
Figs.~\ref{fig:flagship}--\ref{fig:control} use a $700\times400$ grid over
$(x,t)\in[-3,3]\times[0.02,T-0.02]$, i.e.\ $\Delta x\simeq8.6\times10^{-3}$. The scan of
Table~\ref{tab:scan} is \emph{not} taken from that grid, its smallest offset
$x-\xnode=3\times10^{-3}$ lying below the grid spacing. It is evaluated directly at
the listed points $\xnode(t)+(x-\xnode)$, with $\xnode(t)$ obtained from the classical
trajectory of the post-selected center. At the closest offset the differencing step is
still smaller than the distance to the pole by a factor $3\times10^{3}$, and the
associated truncation error, $O(h^2/y^3)\sim10^{-5}$, is negligible against the tabulated
values.

For this configuration the scan admits a closed form, which we give here because
it makes the check exact instead of graphical. Both states lie in the real-envelope class
of Remark~\ref{rem:scope}, a real envelope times a spatially uniform plane-wave phase, so
$\phi^*\psi$ has a phase strictly linear in $x$. Writing $q_\psi(t),p_\psi(t)$ for
the pre-selected coherent center and $\partial_xS_\phi(t)$ for the uniform phase gradient
of the post-selected Hermite--Gaussian, whose node sits at $\xnode(t)$, one obtains
(verified symbolically, and in the units $\hbar=m=\omega=1$ of this section)
\begin{equation}
\begin{aligned}
\Rea(p_w)&=p_\psi+\partial_xS_\phi,\\
\Ima(p_w)&=-\frac{1}{x-\xnode}+(x-\xnode)+(x-q_\psi).
\end{aligned}
\label{eq:hgclosed}
\end{equation}
This is Theorem~\ref{thm:main} realized at $k=1$ in closed form. The pole sits in the osmotic
channel alone, and $\Rea(p_w)$ is $x$-independent across $\partial\mathcal{D}$, being the sum of the two
uniform phase gradients. In the convention of Sec.~\ref{sec:madelung} that sum equals
$p_\psi-m\dot q_\phi$, since $\partial_xS_\phi$ is minus the momentum of the post-selected
center. We
report the scan below at the precision needed to make the agreement with
\eqref{eq:hgclosed} visible. The numerics of this section are therefore a verification of
an exact expression, not an independent discovery. Their role is to confirm that
the differencing scheme used for the field maps, where no closed form is available,
resolves the pole correctly.

The clearest number is a scan across the node (Table~\ref{tab:scan}). At two different
times, shrinking the distance to $\xnode$ by a factor of ten increases $|\Ima(p_w)|$ by
factors of $9.5$ and $10.1$ respectively across the full scan from $0.03$ to $0.003$,
which is ten up to the smooth additive remainder in \eqref{eq:hgclosed}, while
$\Rea(p_w)$ does not move at all. This is the
$1/(x-\xnode)$ scaling of Theorem~\ref{thm:main} seen directly in the numerical scan.

\begin{table}[t]
\caption{Node-neighborhood scan, displaced Hermite--Gaussian post-selection (parameters
as specified at the start of this section). Evaluated directly at the listed offsets, not
on the figure grid. Every entry reproduces the closed form \eqref{eq:hgclosed} to all
digits shown. $\Ima(p_w)$ carries the $1/(x-\xnode)$ pole, while $\Rea(p_w)$ is independent of $x$ at each $t$.}
\label{tab:scan}
\begin{ruledtabular}
\begin{tabular}{ccrr}
$t$ & $x-\xnode$ & $\Rea(p_w)$ & $\Ima(p_w)$\\
\hline
$0.8$ & $0.03$  & $-0.081682$ & $-35.1763$\\
$0.8$ & $0.01$  & $-0.081682$ & $-101.8829$\\
$0.8$ & $0.003$ & $-0.081682$ & $-335.2303$\\
$2.5$ & $0.03$  & $-1.876535$ & $-32.9472$\\
$2.5$ & $0.01$  & $-1.876535$ & $-99.6538$\\
$2.5$ & $0.003$ & $-1.876535$ & $-333.0012$\\
\end{tabular}
\end{ruledtabular}
\end{table}

The same content is visible over the whole $(x,t)$ plane in Fig.~\ref{fig:flagship}. For
nodeless post-selection $\Ima(p_w)$ is smooth everywhere, including straight through the
Mehler caustic $\omega\tau=\pi$, the locus on which the delta-limit field
\eqref{eq:xdotcl} diverges (Proposition~\ref{prop:geometry}). For nodal
post-selection the same field diverges cleanly along the curve $\xnode(t)$, and nowhere
else. Figure~\ref{fig:control} supplies the control. There $\Rea(p_w)/m$, computed for the
identical nodal post-selection, is smooth across that same curve.

\begin{figure}[!htbp]
\centering
\includegraphics[width=\columnwidth]{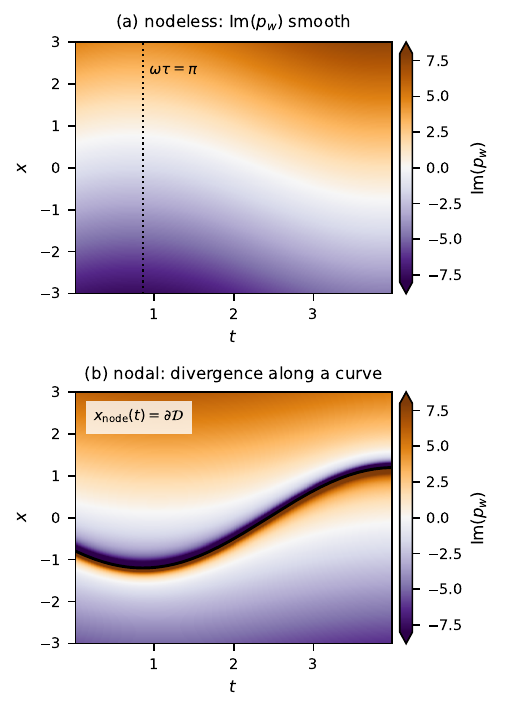}
\caption{Osmotic momentum $\Ima(p_w)$ for the oscillator (parameters in text). (a) Nodeless post-selection, smooth. The Mehler caustic $\omega\tau=\pi$ (dotted) is regularized and is \emph{not}
$\partial\mathcal{D}$. (b) Nodal post-selection, $\Ima(p_w)$ diverging as
$1/(x-\xnode)$, sign-changing, along the curve $\xnode(t)$ (black), the genuine
$\partial\mathcal{D}$.}
\label{fig:flagship}
\end{figure}

\begin{figure}[!htbp]
\centering
\includegraphics[width=\columnwidth]{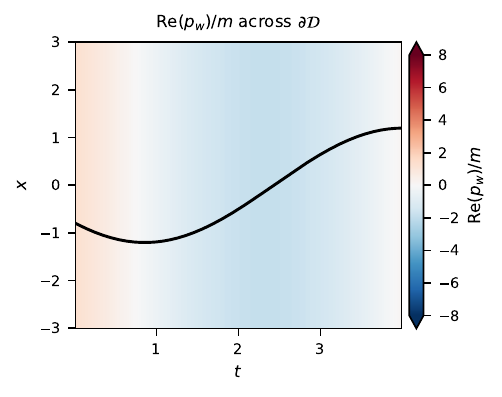}
\caption{Control. The current velocity $\Rea(p_w)/m$ for the same nodal post-selection
stays finite across $\partial\mathcal{D}$ (black). For this state class it is in fact
exactly independent of $x$ at each $t$ [Eq.~\eqref{eq:hgclosed}], so the panel is a
visualization of an exact identity. The divergence visible in
Fig.~\ref{fig:flagship}(b) is absent here at any resolution, which is what entitles one
to call it osmotic.}
\label{fig:control}
\end{figure}

The same sign-change mechanism appears, in a weaker form, in the discrete pre- and
post-selection paradoxes, where an anomalous weak value arises where the conditioning
weight $\phi^*\psi$ fails to be a positive overlap. Neither the three-box problem nor
Hardy's paradox has a node, and neither is an instance of Theorem~\ref{thm:main}. Both are
worked out, the exact three-well convergence and the Hardy occupation values, in
Appendix~\ref{app:discrete}.

Finally, Fig.~\ref{fig:traj} returns to $\mathcal{D}$ itself and makes
Proposition~\ref{prop:current} visible on sampled paths. Because
Proposition~\ref{prop:current} forbids any diffusion that carries the product marginal
$\rho_\psi\rho_\phi(\cdot,t)$ with current velocity $\Rea(p_w)/m$, we sample the
demonstration diffusion that \emph{does} exist. Take coherent pre-/post-selection with
$\omega T<\pi$, so the run stays nodeless and on $\mathcal{D}$, and let
$\bar v(t)=\Rea(p_w)/m$ be the coherent-class field [computed numerically from the phase
gradient of $\phi^*\psi$, checked against the closed form~\eqref{eq:cohRe} and for
spatial uniformity]. Define the marginal $\bar\rho(\cdot,t)$ as a Gaussian of fixed
coherent-state width $\sigma^2=\hbar/2m\omega$ whose center $\mu(t)$ is transported by
the continuity equation at $\bar v$, i.e.\ $\dot\mu=\bar v$ with $\mu(0)=q_\psi(0)$, and
integrate the It\^o equation~\eqref{eq:ito} with drift
$b=\bar v+\nu\,\partial_x\ln\bar\rho=\bar v-\omega\,[x-\mu(t)]$ (an Ornstein--Uhlenbeck drift
with moving center, for which $\sigma^2$ is the stationary variance), initial points
drawn from $\bar\rho(\cdot,0)$.

This is a genuine diffusion whose Nelson current
velocity is exactly $\bar v$ and whose marginal remains $\bar\rho$ for all $t$. By
Proposition~\ref{prop:current}, $\bar\rho(\cdot,t)$ cannot coincide with
$\rho_\psi\rho_\phi(\cdot,t)$. The product density has width $\sigma^2/2$ and a center
transporting at $\tfrac12(\dot q_\psi+\dot q_\phi)$, which $\dot\mu=\bar v$ misses by
$\tfrac12(\dot q_\psi-3\dot q_\phi)$. The ensemble is therefore a construction exhibiting
the current and osmotic split of \eqref{eq:ito}, a demonstration of
Proposition~\ref{prop:current} rather than the physical real-time conditioned ensemble.

Trajectories are generated by Euler--Maruyama [$\hbar=m=\omega=1$, $\nu=\tfrac12$;
$(\alpha,\beta)=(1,0.8)$, $T=2.5$; $N=3\times10^5$ paths, $K=1200$ steps,
$\Delta t\approx2.1\times10^{-3}$], and the local drifts below are estimated from the
forward and backward increments of the same paths,
$b_\pm(x)=\pm\langle X_{t\pm\Delta t}-X_t\,|\,X_t=x\rangle/\Delta t$, binned in $x$
(22 bins) at the fixed slice $t^*=0.55\,T=1.375$. The trajectory ensemble is preceded by
two closed-form checks on the
sampling. $\Rea(p_w)$ evaluated from $\phi^*\psi$ is spatially uniform to
$5.6\times10^{-11}$ over $x\in[-0.7,0.9]$, and agrees with \eqref{eq:cohRe} to
$2.2\times10^{-16}$, i.e.\ to machine precision. Panel~(a) is a sanity
check. The empirical mean tracks the analytic $\int\bar v\,dt$ along an oscillatory curve
and not the straight line of a point (delta-post-selected) bridge, with
$\max_t|\tfrac{d}{dt}E[X_t]-\bar v(t)|=0.098$ over the interior and sampled variance $0.501$
against the target $\sigma^2=0.5$. Panel~(b) is the substantive test. At $t^*$, where
$\bar v=-1.703$, the local It\^o drift
$b_+(x)$ is sloped, the Nelson current velocity
$\tfrac12(b_++b_-)$ is flat and equal to $\bar v=\Rea(p_w)/m$, and their difference is the
osmotic velocity $\nu\,\partial_x\ln\bar\rho$. Flatness here is a statement about scales,
not an exact one. Binned over the window $\mu(t^*)\pm1.6$ in $x$, the current-velocity
estimator varies by $0.84$ while the It\^o drift varies by $3.37$, a factor of four, with the residual
scatter of both consistent with the finite bin occupancy in the tails. Post-selection does not remove noise here. The ensemble mean realizes the current
velocity while the osmotic half of the drift stays hidden from it, which is
Proposition~\ref{prop:current} and is why a picture built from $\Rea(p_w)$ alone cannot
see the boundary (Sec.~\ref{sec:caveats}).

\begin{figure}[!htbp]
\centering
\includegraphics[width=\columnwidth]{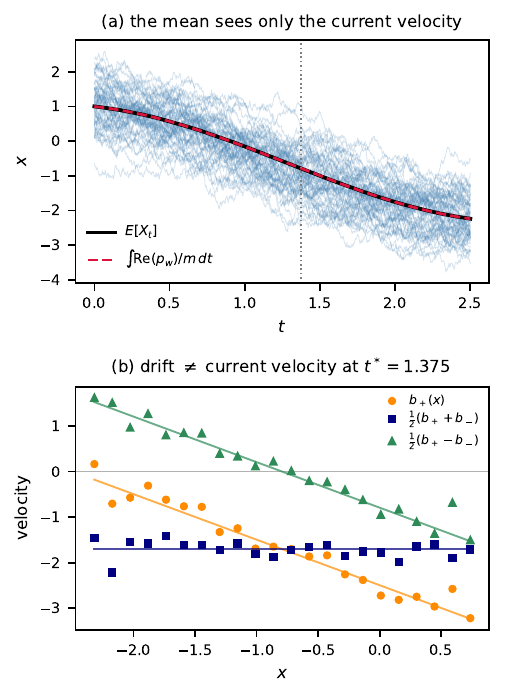}
\caption{Proposition~\ref{prop:current} on sampled trajectories. Coherent (nodeless)
pre-/post-selection on $\mathcal{D}$, oscillator with $\omega T<\pi$ (no caustic). The
sampled process is the demonstration diffusion of the text, with drift
$b=\bar v+\nu\,\partial_x\ln\bar\rho$, $\bar v(t)=\Rea(p_w)/m$ [Eq.~\eqref{eq:cohRe}], and
$\bar\rho$ a fixed-width Gaussian (coherent width $\sigma^2=\hbar/2m\omega$) whose
center is transported at $\bar v$. By
Proposition~\ref{prop:current} its marginal is not the product density
$\rho_\psi\rho_\phi(\cdot,t)$.
\textbf{(a)} The ensemble
mean $E[X_t]$ (black) tracks the independently computed field
$\int\!\bar v\,dt$ (dashed) along an oscillatory, not a straight, curve.
\textbf{(b)} From the \emph{same} paths at fixed $t$, the local It\^o drift $b_+(x)$
(sloped) differs from the Nelson current velocity $\tfrac12(b_++b_-)$ (flat, equal to
$\Rea(p_w)/m$) by exactly the osmotic velocity $\tfrac12(b_+-b_-)$. The mean of panel~(a)
sees only the flat current velocity. The drift and its osmotic part are recovered only by
the two-sided estimator, illustrating $b=v+u$ with $\langle u\rangle=0$.}
\label{fig:traj}
\end{figure}

\section{Experimental connections and a same-apparatus prediction}
\label{sec:experiment}

The current-velocity field $\Rea(p_w)/m$ has already
been measured, and our result makes a specific, so-far-untested prediction for the field
that carries the boundary.

Kocsis \emph{et al.}~\cite{kocsis2011} reconstructed average single-photon trajectories
in a two-slit interferometer by weakly measuring photon momentum and post-selecting on a
strong position measurement in a sequence of planes, precisely an operational
determination of $\Rea(p_w)/m$ for a position-post-selected ensemble (the Wiseman
identification~\cite{wiseman2007}). The same reconstruction was later carried out for one
of a pair of entangled photons~\cite{mahler2016}, which extends the technique but leaves
the single-particle geometry used below unchanged. Their trajectories are smooth curves
that weave
through the interference pattern, and they do not diverge at the dark
fringes, which are nodes of the interfering wavefunction. That is required by Theorem~\ref{thm:main}
on the current-velocity side, already sitting in published data. The fringes are points
of $\partial\mathcal{D}$, and $\Rea(p_w)$ passes through them finitely.

The same theorem requires the osmotic momentum $\Ima(p_w)$ to diverge there, as
$1/(x-\xnode)$ with a sign change across the fringe. This channel was not recorded in that
run. Kocsis \emph{et al.}\ projected the polarization pointer onto the circular basis,
which returns the real part alone. (In the photonic realization the transverse coordinate
plays the role of position under the paraxial field~$\cong$~2D-Schr\"odinger
correspondence, and the weak pointer is the photon polarization. The imaginary part is
carried by a different polarization component, identified explicitly below.) The imaginary
part of the local momentum is not itself an unmeasured quantity. In classical optics it has
been accessed through the gradient force on probe particles~\cite{bliokh2013}, and Bliokh
\emph{et al.}~\cite{bliokh2013} pointed out in their classical-optics reading of the
Kocsis run that the same apparatus would return $\Ima(p_w)$ if the Stokes pointer
orthogonal to the one used there were recorded instead. We take that observation as the
starting point of the proposal below. The computation is the same in the two settings, and
the difference lies in what the result can certify. A multi-photon classical field
reproduces the mean values but not the single-quantum operational statistics that the
contextuality inequality \eqref{eq:klp} constrains, so the same profile obtained
classically certifies nothing about noncontextual models. For the position post-selection
used in the Kocsis geometry the two readings return the same number, since
$\phi^*\psi\propto\psi$ there. Resolving the osmotic channel across an interference node has not been done in either the
classical or the post-selected quantum setting, and that is where
Theorem~\ref{thm:main} makes its prediction. Repeating
the experiment with the conjugate projection therefore turns their remark into a
quantitative test. It should reveal a divergence localized exactly on the interference
nodes, with the sign-changing profile of Eq.~\eqref{eq:imdiv} and
Fig.~\ref{fig:flagship}(b), on the scale $\epsilon(V)$ fixed by the measured fringe
visibility. That both pointer channels are
separately accessible is established experimentally. Lundeen \emph{et
al.}~\cite{lundeen2011} already extracted a wavefunction's amplitude and phase from the
real and imaginary parts of a weak value of position, post-selected in momentum, in
exactly this way. The Madelung split of Sec.~\ref{sec:madelung}, read in this light, is a
statement about which of two independently measurable channels carries the singularity.

In this position-post-selected geometry the
osmotic profile is not an independent field. For position post-selection the weak
momentum is $p_w=-i\hbar\,\partial_x\ln\psi(x)$, whose imaginary part, the
position-post-selected case of the split \eqref{eq:im}, gives
$\Ima(k_{x,w})=-\tfrac12\,\partial_x\ln\rho(x)$, set by the interference intensity
$\rho(x)$ already recorded on the CCD, so its shape can be read from the existing data by a
logarithmic derivative. The proposed measurement is therefore operational in the sense of Lundeen \emph{et al.}~\cite{lundeen2011}. The question is whether the
conjugate linear-polarization channel, read out independently, reproduces that profile with
its sign change and $1/(x-\xnode)$ growth, or departs from it near the node. A departure
is the informative outcome, and the divergence is what makes it sharply testable.

\subsection{A test at zero cost on data already recorded}
The first test costs nothing, and it lies in the channel Kocsis \emph{et al.}\ already
recorded.

The setting is the paraxial mapping, where the propagation axis $z$ plays the role of time,
the transverse coordinate $x$ that of position, and $\hbar k_x$ that of momentum, with
$k_0=2\pi/\lambda$ the optical wavenumber. (The symbol $k$ stays reserved for the nodal
order of Theorem~\ref{thm:main}, which reappears in the rare-event counting further down.)
Thus $\Rea(k_{x,w})/k_0$ is the trajectory slope Kocsis reconstructed~\cite{kocsis2011} and
$\Ima(k_{x,w})/k_0$ is the osmotic partner they did not record.

A real interference pattern has no exact node. Expanding the recorded intensity
$\rho\propto1-V\cos[2\pi(x-\xnode)/\Lambda]$ about its minimum, for local fringe spacing
$\Lambda$ and visibility $V$, displaces the zero off the real axis by
\begin{equation}
\epsilon=\frac{\Lambda}{\pi}\sqrt{\frac{1-V}{2V}},
\label{eq:epsilonV}
\end{equation}
which is the $\epsilon$ of Proposition~\ref{prop:genericity} and tends to zero, a true
node, as $V\to1$. For the reported geometry~\cite{kocsis2011}, $\lambda=943$~nm and a beam
separation $d=4.69$~mm at $z=5.6$~m give $\Lambda=\lambda z/d\simeq1.13$~mm, hence
$\epsilon\approx26$--$85~\mu$m over $V=0.99$--$0.90$. That run used a three-lens
cylindrical imaging system, so $z$ is an effective propagation distance. We therefore read
$\Lambda$ off the published intensity pattern of their Fig.~2(C), and find it consistent
with the thin-lens value.

The discriminator follows from Proposition~\ref{prop:genericity}, whose two mechanisms for
a finite visibility agree in the osmotic channel and differ in the current one. Coherent
visibility loss, an off-axis zero, must place a
Lorentzian of width $\epsilon$ and peak
$|\Rea(k_{x,w})|/k_0=(\epsilon k_0)^{-1}\approx1.8\times10^{-3}$ ($V=0.90$) to
$5.9\times10^{-3}$ ($V=0.99$) at the fringe center, above the ${\sim}10^{-3}$
trajectory-slope scale already resolved, whereas incoherent loss predicts a smooth current
channel there. The published smoothness of the reconstructed trajectories through the dark
fringes is thus already qualitative evidence for the incoherent mechanism, and a dedicated
scan of the recorded position-pointer data across a dark fringe turns it into a
quantitative bound on any coherent (off-axis-zero) admixture. No change to the apparatus,
no new exposure, and no new run. The data exist.

The same remark makes Theorem~\ref{thm:noflux} falsifiable. Its hypothesis is
$\dot M\neq0$, and $M(t)$ is built from two ordinary intensity images. The factor
$\rho_\psi(\cdot,t)$ is what Kocsis \emph{et al.}\ already record at each imaging plane. The
factor $\rho_\phi(\cdot,t)$ is the intensity of the post-selected mode propagated the same
distance, which for a real-envelope $\phi_f$ is the same mode run in the reverse direction
and requires no interference with the first. Multiplying the two images pointwise and
normalizing gives $\rho_{\text{eff}}$, and integrating up to the fringe minimum gives $M$.
It is enough to image two planes straddling the node's motion. If $M$ differs between
them, then by
Theorem~\ref{thm:noflux} no diffusion of constant diffusion coefficient reproduces both
marginals while staying off the nodal curve, which such a diffusion reaches with positive
probability, and the
conditioned-diffusion representation is excluded for that pair by measurement and not
by calculation.

\subsection{The osmotic profile across a dark fringe}
What Theorem~\ref{thm:main} says about the channel that was \emph{not} recorded is the
following. Near a first-order ($k=1$) dark fringe, a node of the interfering field at
$\xnode$, Eq.~\eqref{eq:imdiv} with Proposition~\ref{prop:genericity} gives
\begin{equation}
\Ima(k_{x,w})=-\frac{x-\xnode}{(x-\xnode)^2+\epsilon^2},
\label{eq:kxprediction}
\end{equation}
a sign-changing dispersive profile with extrema $\mp1/(2\epsilon)$ at
$x-\xnode=\pm\epsilon$, with $\epsilon(V)$ given by \eqref{eq:epsilonV}. This is the
intensity route, mechanism~(ii) of Proposition~\ref{prop:genericity}:
$\Ima(k_{x,w})=-\tfrac12\partial_x\ln\rho(x)$ holds for pure and mixed $\hat\rho$ alike,
and expanding the recorded intensity about its minimum gives \eqref{eq:kxprediction}
directly. It is therefore consistent with the smooth $\Rea(k_{x,w})$ Kocsis measured
through the dark fringes. The conjugate-channel readout tests whether the
independently measured pointer rotation reproduces that profile, and a departure has
identifiable candidate causes. The following three lie within reach of the same data.
The weak-coupling expansion behind \eqref{eq:qubitpointer} is first order in
$\zeta\Ima(k_{x,w})/k_0$, which reaches $1.1$~rad at $V=0.99$, so a shortfall growing with
visibility would indicate saturation of $\tanh\varphi_I$ rather than new physics. Residual
which-path coupling in the calcite would add a current-channel admixture, showing up as an
asymmetry between the two extrema at $x-\xnode=\pm\epsilon$. A pointer nonlinearity would
break the predicted proportionality between the rotation and $\partial_x\ln\rho$ while
leaving its zero crossing in place. Each has a distinct signature in the profile, which is
what makes the comparison a test with a null hypothesis and not a redisplay of the
intensity.

The peak $|\Ima(k_{x,w})|/k_0$ runs from
$8.8\times10^{-4}$ at $V=0.90$ to $2.9\times10^{-3}$ at $V=0.99$
(Fig.~\ref{fig:prediction}), the \emph{same} order as the current-velocity signal
$\Rea(k_{x,w})/k_0\lesssim10^{-3}$ they already resolved. With their calcite coupling
$\zeta=373.5\pm3.4$, the dimensionless coefficient of the linear birefringent phase
shift $\varphi(k_x)=\zeta\,k_x/k_0+\varphi_0$ that their calcite imparts to the polarization
pointer~\cite{kocsis2011}, this is a pointer rotation
$\zeta\,\Ima(k_{x,w})/k_0\approx0.33$--$1.1$~rad,
well above their demonstrated sensitivity. The upper end (as $V\to1$) exceeds the linear
weak regime at this fixed coupling. There one reduces $g$ to stay within it, and the
per-event signal saturates at order unity (the adaptive-$g$ argument below). The extrema,
at the $\epsilon\approx26$--$85~\mu$m quoted above, are separated by
$2\epsilon\approx2$--$7$ CCD pixels ($26~\mu$m), so the dispersive feature is resolved
across the range and is pixel-limited only at $V=0.99$.
The one genuine cost is the vanishing occupancy of the node region. At the readout extrema
$x-\xnode=\pm\epsilon$ the density is
suppressed by $\rho(\epsilon)/\rho_{\max}=2(1-V)/(1+V)\approx0.1$--$0.01$, so bringing a
node-neighborhood pixel to the shot-noise floor of a bright fringe requires
$\rho_{\max}/\rho(\epsilon)\approx10$--$10^2$ times the per-plane exposure. In absolute
terms this is minutes to hours, not months.

Kocsis's published run detected
${\sim}3\times10^4$ photons per $15$~s CCD exposure across the whole transverse
pattern~\cite{kocsis2011}. Because the cooled CCD images every $x$ position in parallel,
the overhead is an exposure-time multiplier and not a serial per-pixel scan, giving of
order $10^2$--$10^3$~s of additional integration per imaging plane, roughly two and a
half minutes at $V=0.90$ rising to ${\sim}25$~min at $V=0.99$. Their trajectory
reconstruction used $41$ imaging planes, so a full multi-plane scan costs about
$1.7$~h at $V=0.90$ and about $17$~h at $V=0.99$, a long acquisition, spread over a few
sessions at the high-visibility end, but on the same apparatus with the same optics, not a
different experiment. A single plane through a dark fringe, enough to test the
sign-changing profile itself, is a matter of minutes. Operationally, only the pointer projection changes, and the following computation fixes
which one. Three coupling strengths appear below, the dimensionful $g$ of a generic weak
measurement, the calcite parameter $\zeta$, and the dimensionless $\theta$ of
Proposition~\ref{prop:qubitklp}, related by $\theta=\zeta/2k_0$ as shown in
Appendix~\ref{app:qubitklp}.

The transverse field in the two-slit run is two-dimensional while
Theorem~\ref{thm:main} is one-dimensional, and the gap is closed by the geometry rather
than by assumption. The cylindrical optics of Ref.~\cite{kocsis2011} make the field
separable, $\psi(x,y)\simeq A(x)G(y)$ with $G$ a nodeless Gaussian carrying no
interference structure, so a zero of $\psi$ requires $A(x)=0$ and the nodal set is a line
$x=\xnode$ rather than an isolated point. Separability alone would not be enough, since a
complex $A$ would still require $\Rea A=\Ima A=0$. The second ingredient is that a
symmetric two-slit amplitude is real up to a common phase, which is the real-envelope class
of Remark~\ref{rem:scope}(ii). The nodal set is therefore codimension one in the transverse
plane, $\phi^*\psi$ admits the signed factorization \eqref{eq:signedpolar} with bounded
$\partial_x\Theta$, and the hypothesis of
Theorem~\ref{thm:main} is met. Unequal slit amplitudes displace the zero off the real axis
by the $\epsilon$ of Proposition~\ref{prop:genericity}, which is the finite-visibility case
already treated in \eqref{eq:kxprediction}. Genuine codimension-two nodes, the vortices of
singular optics~\cite{nyeberry1974,dennis2009}, arise only when both conditions fail, and
Remark~\ref{rem:scope}(i) records what changes there.

Which polarization component carries the osmotic part has to be settled explicitly.
Because the Kocsis pointer is a qubit rather than a continuous von Neumann pointer, the
assignment of $\Rea$ and $\Ima$ to pointer channels has to be made explicitly and cannot be
imported from the continuous-pointer result~\cite{jozsa2007}. The calcite imparts a
$\sigma_z$-type phase in the $\{|H\rangle,|V\rangle\}$ basis,
$\varphi(k_x)=\zeta k_x/k_0+\varphi_0$, to the initial pointer state
$|D\rangle=(|H\rangle+|V\rangle)/\sqrt2$. Writing $\varphi_w=\zeta (k_x)_w/k_0$ for the
(complex) weak phase, the post-selected pointer state is
$\propto e^{-i\varphi_w/2}|H\rangle+e^{+i\varphi_w/2}|V\rangle$, and with
$\varphi_w=\varphi_R+i\varphi_I$ a two-line computation (verified symbolically) gives
\begin{equation}
\langle\sigma_y\rangle=\frac{\sin\varphi_R}{\cosh\varphi_I},\quad
\langle\sigma_z\rangle=\tanh\varphi_I,\quad
\langle\sigma_x\rangle=\frac{\cos\varphi_R}{\cosh\varphi_I}.
\label{eq:qubitpointer}
\end{equation}
To leading order $\langle\sigma_y\rangle\simeq\varphi_R$ and
$\langle\sigma_z\rangle\simeq\varphi_I$, whereas
$\langle\sigma_x\rangle\simeq1-\tfrac12(\varphi_R^2+\varphi_I^2)$ is second order in
both. The real part is therefore read in the circular basis, which is the projection
Kocsis performed, and the osmotic part in the $\{|H\rangle,|V\rangle\}$ \emph{linear}
basis. The latter is obtained on the same bench by removing the quarter waveplate in front
of the lens system and leaving the beam displacer to separate $H$ from $V$. The other
linear basis, $\{|D\rangle,|A\rangle\}$, carries no first-order signal in either channel
and is \emph{not} the readout to use. The calcite, imaging optics, and post-selection are
unchanged, exactly as Lundeen \emph{et al.}~\cite{lundeen2011} separated the two parts of
a weak value on one apparatus. Equation~\eqref{eq:qubitpointer} also settles the
saturation caveat above. The readout is $\tanh\varphi_I$ rather than $\varphi_I$, so the
$\varphi_I\approx1.1$ quoted at $V=0.99$ corresponds to
$\langle\sigma_z\rangle\approx0.80$, already well into saturation. That is the regime the
adaptive-coupling argument below addresses.

\begin{figure}[!htbp]
\centering
\includegraphics[width=\columnwidth]{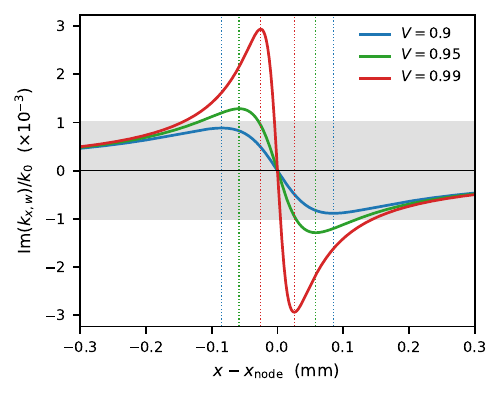}
\caption{Predicted osmotic (imaginary) transverse-momentum weak value
$\Ima(k_{x,w})/k_0$ across a dark fringe, i.e.\ Eq.~\eqref{eq:kxprediction}, which
gives $\Ima(k_{x,w})$, of dimension inverse length, divided by $k_0=2\pi/\lambda$, for
the Kocsis geometry~\cite{kocsis2011} at $z=5.6$~m ($\Lambda\simeq1.13$~mm) and three visibilities
$V$. Dotted lines mark the extrema at $x-\xnode=\pm\epsilon$. The grey band is the
$\pm10^{-3}$ scale of the current-velocity signal $\Rea(k_{x,w})/k_0$ already measured. The
imaginary signal is of the same order and sharpens toward a true $1/(x-\xnode)$
pole as $V\to1$.}
\label{fig:prediction}
\end{figure}

Resolving the divergence is less demanding than it appears. A weak
value is only a good linear estimate of the pointer shift when the coupling strength $g$
satisfies $g\lesssim\hbar/|p_w|$~\cite{dressel2014}. Since $|\Ima(p_w)|\sim\hbar/\delta x$
grows as the node is approached ($\delta x$ the distance to $\xnode$), the largest
coupling compatible with the weak regime shrinks correspondingly,
$g_{\max}\sim\delta x$. The resulting optimal signal,
$g_{\max}|\Ima(p_w)|/\hbar\sim\delta x\cdot(\hbar/\delta x)/\hbar=O(1)$, is therefore
independent of $\delta x$. Approaching the node does not demand resolving an
ever-growing shift, only adaptively shrinking $g$ to stay in the valid regime while the
achievable signal saturates at a fixed, order-unity scale. That order-unity signal is,
however, per \emph{post-selected} event, and post-selection onto a position window of
width ${\sim}\delta x$ about the node succeeds only with probability
${\sim}\rho\,\delta x\sim(\delta x)^{2k+1}$, since the density goes as
$\rho\sim(\delta x)^{2k}$ there (verified numerically). Resolving the osmotic profile down
to distance $\delta x$ therefore costs $N_{\text{inc}}\sim N_{\text{ps}}\,(\delta
x)^{-(2k+1)}$ incident photons for $N_{\text{ps}}$ retained events. This rare-event
overhead is real, but it is not special to the weak-value scheme. It is the vanishing
occupancy of a neighborhood of a node, and any method resolving a comparably sharp
feature at that location, weak-value or conventional, pays the same $(\delta
x)^{-(2k+1)}$. The divergence adds no penalty of its own. The per-event precision carries
no penalty either. Under ideal, shot-noise-limited detection, Ferrie and Combes proved that
weak-value amplification confers no fundamental precision advantage over conventional
measurement in that regime~\cite{ferriecombes2014}, so the divergence predicted here
carries no exotic metrological requirement. The practical case for a weak-value-based
readout is instead its established robustness to technical noise and detector
imperfections~\cite{jordan2014}, precisely the regime of real single-photon experiments
such as Kocsis's.

The same boundary has an operational face that has also been probed directly. Piacentini
\emph{et al.}~\cite{piacentini2016} realized Pusey's contextuality
test~\cite{pusey2014} in the laboratory, connecting anomalous weak values to
contextuality. In our language their contextual regime is the operational shadow of
proximity to $\partial\mathcal{D}$. The osmotic divergence localizes in space and time
what the contextuality test certifies statistically, two views of the same nodal locus.

\subsection{A two-node test of additivity, and what isolates the two-state structure}
\label{sec:additivity}
Everything proposed so far can be read off a single field. Equation~\eqref{eq:kxprediction}
is fixed by the recorded intensity, and in the Kocsis geometry the post-selection is on
position, so $\phi^*\psi\propto\psi$ and the two-state structure is not exercised at all.
A referee from singular optics is entitled to call that a single-field measurement with
extra steps, and the objection is correct as far as it goes. One prediction of
Theorem~\ref{thm:main} escapes it, and it is the additivity of Step~4 of that proof.

If $\psi$ carries a zero of order $k_\psi$ and $\phi$ a zero of order $k_\phi$ at the same
point, the signed amplitude $\phi^*\psi$ carries one of order $k=k_\psi+k_\phi$ and the pole
coefficient is the sum. At finite visibility each zero is displaced off the real axis by
its own $\epsilon$ (Proposition~\ref{prop:genericity}), and the osmotic channel is then the
sum of two dispersive profiles,
\begin{equation}
\Ima(k_{x,w})=-\frac{x-x_\psi}{(x-x_\psi)^2+\epsilon_\psi^2}
-\frac{x-x_\phi}{(x-x_\phi)^2+\epsilon_\phi^2},
\label{eq:twonode}
\end{equation}
which follows from $\Ima(k_{x,w})=-\tfrac12\partial_x\ln(\rho_\psi\rho_\phi)$ and reduces to
twice \eqref{eq:kxprediction} when $x_\psi=x_\phi$ and $\epsilon_\psi=\epsilon_\phi$
(verified symbolically). No single-field measurement produces \eqref{eq:twonode}, because
it requires two independently prepared amplitudes. This is the one prediction in the paper
that Refs.~\cite{berry2009,bliokh2013,flackhiley2018} do not already contain.

The implementation keeps the Kocsis bench and changes the post-selection. The preparation
supplies $k_\psi=1$ for free: every dark fringe of the two-slit pattern is a first-order
zero of $\psi$. What is needed is a $\phi$ with its own zero at a transverse position the
experimenter controls, which rules out the position post-selection used in that run and
calls for a mode-projective one. Projecting onto a first-order Hermite--Gaussian
$\mathrm{HG}_1$, whose single node is placed by translating the projecting element, does
it: a $\pi$-step phase plate or a phase pattern on a spatial light modulator in the imaging
plane, followed by coupling into a single-mode fiber, implements
$\phi=\mathrm{HG}_1$ with $k_\phi=1$. Both $\psi$ and $\phi$ are then in the real-envelope
class of Remark~\ref{rem:scope}(ii), so the signed factorization
\eqref{eq:signedpolar} holds and Theorem~\ref{thm:main} applies with $k=2$. Neither the
calcite, the weak coupling, nor the linear-basis readout of \eqref{eq:qubitpointer} changes.

Sliding the $\mathrm{HG}_1$ node across the chosen dark fringe turns
\eqref{eq:twonode} into a one-parameter family, and the family is the measurement. At
separations $|x_\psi-x_\phi|\gg\epsilon$ the profile shows two resolved dispersive features
of coefficient $1$ each; as the separation shrinks below $2\epsilon$ they merge into a
single feature of coefficient $2$, of twice the peak height of \eqref{eq:kxprediction} and
the same width. For the geometry of Sec.~\ref{sec:experiment}, $2\epsilon\approx169~\mu$m
at $V=0.90$ and $51~\mu$m at $V=0.99$, i.e.\ $2$--$7$ CCD pixels, so both the resolved and
the merged regimes lie within one scan of the translation stage. What is being measured is
a coefficient going from $1{+}1$ to $2$ as a function of a mechanical displacement, which
is a stronger statement than a single peak height because the control and the signal come
from the same run and share every calibration.

The magnitudes follow from doubling the numbers already quoted. The peak
$|\Ima(k_{x,w})|/k_0$ at the merged node runs from $1.8\times10^{-3}$ at $V=0.90$ to
$5.9\times10^{-3}$ at $V=0.99$, and with $\zeta=373.5$ the pointer rotation
$\varphi_I=\zeta\,\Ima(k_{x,w})/k_0$ runs from $0.66$ to $2.2$~rad. The upper end is not
usable at fixed coupling: $\langle\sigma_z\rangle=\tanh(2.2)=0.976$ is deep in saturation,
and the merged and resolved regimes would be indistinguishable at the peak. The test is
therefore best run at moderate visibility, $V\simeq0.90$--$0.95$, where
$\tanh\varphi_I$ moves from $0.32$ to $0.58$ ($V=0.90$) and from $0.45$ to $0.75$
($V=0.95$) as the two nodes merge, a change far above the demonstrated pointer sensitivity
and still within the monotonic part of \eqref{eq:qubitpointer}. Alternatively $\zeta$ is
reduced to restore linearity at high visibility, at the usual cost in exposure.

Two costs are specific to this configuration and neither is hidden. The first is the
post-selection probability. Position post-selection retains a fixed fraction of the
incident flux at every plane, whereas mode projection onto $\mathrm{HG}_1$ retains
$|\langle\phi|\psi\rangle|^2$, which for a two-slit $\psi$ and an aligned $\mathrm{HG}_1$
is of order a few percent and falls as the node alignment improves. The second is the
rare-event scaling of Sec.~\ref{sec:experiment}: at a merged node $k=2$, so the incident
count needed to resolve the profile down to $\delta x$ grows as
$\delta x^{-(2k+1)}=\delta x^{-5}$ rather than $\delta x^{-3}$. Together these put the
additivity run at one to two orders of magnitude more exposure than the single-node
prediction of the preceding subsection, hours rather than minutes per plane, and the
experiment is worth that only because a positive result cannot be reproduced by any
single-field measurement.

Two controls come with the same apparatus. Removing the two-slit mask, so that $\psi$ is a
nodeless Gaussian and only $\phi$ carries a zero, must return exactly the $k=1$ profile of
\eqref{eq:kxprediction} with the pole located at the $\mathrm{HG}_1$ node and
$\Rea(k_{x,w})$ bounded through it. That is Proposition~\ref{prop:exchange}, and it is
itself a two-state statement: the osmotic divergence is indifferent to which of the two
states supplies the node, so a zero belonging to the post-selection alone produces a pole
in a channel of the pre-selected beam that has no zero anywhere. The second control is the
current channel. Across the merged node $\Rea(k_{x,w})$ must remain bounded and, in the
real-envelope class, spatially uniform up to the smooth contribution of the nonvanishing
factors, so the same scan that doubles the osmotic peak must leave the trajectory slopes
Kocsis reconstructed qualitatively unchanged. A divergence appearing in both channels
would indicate that the projecting element has introduced a spatially varying phase and
taken $\phi$ out of the real-envelope class, which is a diagnosable failure of the
preparation rather than of Theorem~\ref{thm:main}.

Nothing in this subsection tests Theorem~\ref{thm:noflux}, which is falsified or supported
by the two-image construction of Sec.~\ref{sec:experiment} instead. The three proposals are
independent and answer different questions: the zero-cost discriminator asks which
mechanism limits the recorded fringe contrast, the conjugate-channel readout asks whether
the osmotic profile is where the intensity says it is, and the additivity scan asks whether
the object carrying the obstruction is the two-state amplitude rather than either state
alone.

\subsection{The osmotic field as a spatially resolved contextuality witness}
This correspondence can be made quantitative. To keep every probability dimensionless,
post-select on a finite position bin. Let
$\Pi_x^\Delta=\int_{x-\Delta/2}^{x+\Delta/2}|y\rangle\langle y|\,dy$ be the (idempotent)
window projector of width $\Delta$, small on the scale of variation of the fields, with
post-selection \emph{probability}
$p_F(x)=\Tr(\Pi_x^\Delta\hat\rho)\simeq\rho(x)\,\Delta\in[0,1]$, where $\hat\rho$
is the state operator and $\rho(x)$ its position density.
The numerator of the momentum weak value
post-selected on this bin, $\Tr(\Pi_x^\Delta\hat p\,\hat\rho)$, is the first moment of the
Kirkwood--Dirac quasiprobability~\cite{kirkwood1933,dirac1945}, whose negativity is a
metrological resource in post-selected estimation~\cite{arvidsson2020} and is reviewed
in~\cite{arvidssonshukur2024}. Its imaginary part is,
up to the post-selection probability $p_F(x)$, exactly the
osmotic part $\Ima(p_w)$ of the weak momentum (i.e.\ $-m$ times the osmotic velocity $u$):
\begin{equation}
\begin{aligned}
\Ima\,\Tr(\Pi_x^\Delta\hat p\,\hat\rho)&=p_F(x)\,\Ima(p_w)\\
&=-\tfrac{\hbar}{2}\,\partial_x\rho(x)\,\Delta\;\propto\;\partial_x\rho(x),
\end{aligned}
\label{eq:KDosmotic}
\end{equation}
to leading order in $\Delta$. Both sides are $O(\Delta)$, so the normalized field
$\Ima(p_w)$, the object Theorem~\ref{thm:main} is about, is $\Delta$-independent,
and the $\Delta\to0$ limit recovers the density-level statement
$\Ima\,\Tr(\Pi_x\hat p\,\hat\rho)/\rho(x)=\Ima(p_w)$ per unit length.

Kunjwal, Lostaglio, and Pusey~\cite{kunjwal2019} turn a nonzero \emph{imaginary} weak
value into a violation of an explicit noncontextuality inequality. Writing $p_-$ for the
joint probability of a negative pointer-momentum reading and a successful post-selection
on the same bin $\Pi_x^\Delta$ (so that $p_-$, $p_F$, and $p_d$ are all dimensionless
probabilities at finite $\Delta$),
$s$ for the pointer width, and $p_d=o(1/s)$ for the measurement disturbance, their bound
and its quantum value are
\begin{equation}
\begin{aligned}
p_-&\;\le\;\underbrace{\tfrac12 p_F+(1-p_F)\,p_d}_{\text{noncontextual}},\\
p_-^{\mathrm{QM}}&=\tfrac12 p_F-\frac{\Ima\langle\Pi_x^\Delta\mathcal{E}\rangle_{\hat\rho}}{\sqrt\pi\,s}+o(1/s),
\end{aligned}
\label{eq:klp}
\end{equation}
so contextuality is certified wherever this imaginary Kirkwood--Dirac quasiprobability is
negative enough to beat the disturbance floor:
\begin{equation}
-\,\Ima\langle\Pi_x^\Delta\mathcal{E}\rangle_{\hat\rho}
\;>\;\sqrt\pi\,s\,(1-p_F)\,p_d,
\label{eq:witness}
\end{equation}
where $\mathcal{E}$ is the weakly measured (projector) effect and
$\langle\Pi_x^\Delta\mathcal{E}\rangle_{\hat\rho}=\Tr(\Pi_x^\Delta\mathcal{E}\hat\rho)$ the corresponding
Kirkwood--Dirac quasiprobability. The $\tfrac12 p_F$ of \eqref{eq:klp} has cancelled
between the noncontextual bound and the quantum value, leaving exactly the KLP
condition~\cite{kunjwal2019}. This imaginary quasiprobability carries the osmotic profile
of \eqref{eq:KDosmotic},
$-\Ima\langle\Pi_x^\Delta\mathcal{E}\rangle_{\hat\rho}\propto p_F(x)\,\Ima(p_w)(x)\propto\partial_x\rho(x)$,
with the constant fixed as follows.

Equation~\eqref{eq:KDosmotic} is a statement about the
momentum $\hat p$, whose first moment is dimensionful, whereas the KLP inequality is stated
for a dimensionless effect $0\le\mathcal{E}\le\openone$. We therefore take the effect to be
the band-limited momentum effect
\begin{equation}
\mathcal{E}=\tfrac12\big(\openone+\hat p/p_{\max}\big)\,\Pi_{[-p_{\max},p_{\max}]},
\label{eq:effect}
\end{equation}
with $p_{\max}$ any bound on the support of the transverse momentum distribution (in the
paraxial realization $p_{\max}=\hbar k_0\theta_{\max}$, set by the numerical aperture).
Since $\hat p$ and $\Pi_{[-p_{\max},p_{\max}]}$ are both diagonal in the momentum basis
they commute, so $\mathcal{E}$ is self-adjoint with spectrum $\{0\}\cup[0,1]$ and is a
legitimate effect, $0\le\mathcal{E}\le\openone$. It is not a projector, and two remarks
are needed on applying~\cite{kunjwal2019} to it. First, their quantum-side identity uses
idempotence nowhere. The momentum-basis Kraus operators are proportional to unitaries for
any self-adjoint $\mathcal{E}$, so the pointer-momentum marginal is trivial and the
first-order expansion reproduces the second line of \eqref{eq:klp} verbatim. Second, the
\emph{band limiting} is what keeps the disturbance floor small. The achievable $p_d$
degrades with the spread of the spectrum and remains $o(1/s)$ precisely because
$\mathrm{spec}(\mathcal{E})\subset[0,1]$ is bounded. An unbounded $\hat p$ would not do,
which is why \eqref{eq:effect} band-limits instead of merely rescaling. Since
$\Ima\langle\Pi_x^\Delta\openone\rangle=0$ identically, and provided the state's momentum
distribution is supported in the band,
\begin{equation}
\Ima\langle\Pi_x^\Delta\mathcal{E}\rangle_{\hat\rho}
=\frac{\Ima\,\Tr(\Pi_x^\Delta\hat p\,\hat\rho)}{2p_{\max}}
=-\frac{\hbar\,\Delta}{4p_{\max}}\,\partial_x\rho(x),
\label{eq:effectKD}
\end{equation}
the first equality being exact and the second inheriting, without further approximation,
the $O(\Delta)$ of \eqref{eq:KDosmotic}. The proportionality constant is thus $\hbar\Delta/4p_{\max}$, and the
$\partial_x\rho$ profile is inherited and not assumed. The sign is fixed on either side of the fringe by taking $\mathcal{E}$ or
$\openone-\mathcal{E}$, the same without-loss-of-generality choice as
in~\cite{kunjwal2019}, as the dispersive profile of $\Ima(p_w)$ demands.

The photonic realization needs the imaginary-part inequality for a \emph{qubit} pointer,
which is the Kocsis polarization pointer, rather than for the continuous Gaussian pointer
of~\cite{kunjwal2019}. That version is proved as Proposition~\ref{prop:qubitklp} in
Appendix~\ref{app:qubitklp}. Writing $p_-$ for the joint probability that the pointer is
found in $|V\rangle$ and the post-selection succeeds, and $\theta$ for the dimensionless
coupling, the two statements it supplies are
\begin{equation}
p_-=\tfrac12 p_F-\theta\,\Ima\langle\Pi_x^\Delta\mathcal{E}\rangle_{\hat\rho}+O(\theta^2),
\qquad
p_-\le\tfrac12 p_F+\tfrac12 p_d ,
\label{eq:qubitpair}
\end{equation}
the first exact quantum mechanics and the second the noncontextual bound. The qubit version
is sharper than the continuous one. The witness is first order in $\theta$ while the
measurement disturbance is second order, and no $o(1/s)$ pointer-width remainder survives. Through
\eqref{eq:witness} the osmotic field $\Ima(p_w)(x)$ is a spatially
resolved image of the KLP witness. Its sign change and $1/(x-\xnode)$ divergence trace
where the per-post-selected-event nonclassicality is largest.

The operational witness does not itself diverge at the node. The object that grows without bound and the object an experiment reports are two
distinct quantities. The per-post-selected-event
nonclassicality is the \emph{normalized} imaginary weak value $\Ima(p_w)$, and it is this
that diverges as $1/(x-\xnode)$. The quantity that violates the KLP noncontextuality
inequality, by contrast, is the \emph{unnormalized} Kirkwood--Dirac quasiprobability
$p_F\,\Ima(p_w)$, weighted by the post-selection probability an ensemble actually realizes.
At an order-$k$ node $p_F\sim(x-\xnode)^{2k}$ while $\Ima(p_w)\sim1/(x-\xnode)$, so this
ensemble witness scales as $(x-\xnode)^{2k-1}\to0$ (equivalently, it is
$\propto\partial_x\rho$, which vanishes at any node). The two statements are consistent and
answer different operational questions. \emph{Per retained event}, contextuality is
maximal exactly at the singularity. \emph{Per incident trial} it is suppressed there,
because reaching the node is a vanishingly rare post-selection. This is the same rare-event ledger as the
resource discussion above, now with a sign. The factor $p_F$ that suppresses the ensemble witness is
identically the factor that makes node-neighborhood post-selection expensive
[$N_{\text{inc}}\sim(\delta x)^{-(2k+1)}$]. The osmotic singularity therefore marks the
locus of maximal per-event contextuality while leaving the ensemble witness bounded. Which
of the two an experiment sees is fixed by whether it normalizes to retained events or to
incident trials. This turns Remark~\ref{rem:contextuality} from a structural
correspondence into a quantitative identity, for the explicit effect \eqref{eq:effect},
with the constant given in \eqref{eq:effectKD}, between the imaginary weak-value field
and the Kirkwood--Dirac contextuality witness.

\begin{remark}[The same ledger in an open system]
\label{rem:ledger}
The pattern is not special to the nodal geometry. In the past-quantum-state description of
a dephasing qubit~\cite{gammelmark2013}, with forward and backward coherence factors
$a=e^{-\Gamma(t)}$ and $b=e^{-\Gamma(T-t)}$, the conditioned coherence is $(a+b)/(1+ab)$.
It equals unity at both selected times and exceeds the unconditioned $e^{-\Gamma(t)}$ by
$2\sinh\Gamma t/(e^{\Gamma T}+1)$, hence strictly for $t>0$, so conditioning holds the
coherence up instead of freezing it. The price is
the post-selection success probability
$P_{\text{ps}}=\tfrac12(1+e^{-\Gamma(T)}\cos\omega_0T)$. Retaining $N$ shots costs
$\sim N/P_{\text{ps}}$ attempts, and for $n$ independently post-selected qubits the
overhead grows exponentially in $n$. The conditional gain and the trial cost cancel, so the
effect is not retroactive denoising. The osmotic witness above is subject to the same
accounting.
\end{remark}

Two-time descriptions are already realized experimentally, in cavity-QED photon-number
past quantum states~\cite{rybarczyk2015}. The ledger of Remark~\ref{rem:ledger} applies there in the same form.

\section{Scope and limitations}
\label{sec:caveats}

Nelson diffusions and textbook quantum mechanics agree on every single-time position
distribution, the process marginal being $|\psi(x,t)|^2$ by construction. They
disagree on \emph{multi-time} correlations~\cite{grabert1979,nelson2005}. The two-time
joint distribution of the Nelson trajectory does not, in general, match the quantum
sequential-measurement statistics, because the smooth trajectory ensemble carries no
measurement back-action while the quantum two-time correlator does. This discrepancy is
only partially reconciled through effective collapse~\cite{blanchard1986,bacciagaluppi2022}.
Every claim about \emph{quantum statistics} in this paper is a single-time statement.
The fields $\Rea(p_w)$ and $\Ima(p_w)$, together with the drift decomposition at a fixed
$t$, are therefore untouched by the gap. Statements at the level of paths, meaning the
inaccessibility of $\partial\mathcal{D}$ in Proposition~\ref{prop:inaccessible} and the
sampled ensemble of Fig.~\ref{fig:traj}, are genuinely multi-time, but they concern
the mathematical diffusion itself and are not asserted to reproduce quantum
sequential-measurement statistics. The gap does, however, bound the interpretation of those trajectories, whose individual
paths should not be read as physical two-time histories. Their marginal
is the constructed $\bar\rho$ of Sec.~\ref{sec:numerics} rather than the conditioned
product density (Proposition~\ref{prop:current}), and they serve only as a device for exhibiting the current
velocity and its osmotic partner.

We also assume the Wallstrom single-valuedness condition on the
phase~\cite{wallstrom1994}, which holds trivially for the Gaussian and
Hermite--Gaussian states used throughout but is not automatic for a general
wavefunction.

The confinement to $\mathcal{D}$ is not a restriction at all, though it is easily read as
one. Nothing here extends the positive conditioning across $\partial\mathcal{D}$ because
nothing can, that failure being the content of Corollary~\ref{cor:conditioning} and not a
gap in its proof.

One point about the fields themselves is easy to miss on a first reading.
$\Rea(p_w)$ is finite straight through
$\partial\mathcal{D}$. Only $\Ima(p_w)$ diverges there. A figure or an argument built
entirely from $\Rea(p_w)$, from a ``bridge drift,'' for instance, will not show the
boundary at all, and that omission is exactly how the caustic of
Sec.~\ref{sec:nodecaustic} can be mistaken for it.

\section{Conclusion}

The conditioned-diffusion representation of the weak-value flow field fails in two ways
that should be kept apart. Identifying the weak momentum with a bridge drift fails even
where a conditioned diffusion exists, the relation there being exact and a sign inversion.
For a free particle post-selected in position the current velocity is exactly minus the
drift of the bridge to the target. Where the two-state amplitude has a moving zero across
which probability flows, no diffusion of constant diffusion coefficient can both carry the
conditioned density and avoid the nodal curve (Theorem~\ref{thm:noflux}); the drift such a
diffusion would need points into the curve, which then lies at finite scale distance and is
reached in finite time with positive probability. The second
statement is the more general. It needs no state class, only a node across which mass is
transported. What does have a boundary is the positive conditioning itself. It sits on the union of the nodal sets of the
pre- and post-selected states, a curve $\xnode(t)$ traced out in space and time rather
than the caustic time-slice one might expect from the delta-function limit. And it shows
up only in the osmotic momentum $\Ima(p_w)$, which diverges as $1/(x-\xnode)$ while the
current velocity $\Rea(p_w)$ passes straight through. That asymmetry is the main point. It
says precisely where a classical-conditioning account of time-symmetric quantum phenomena
has to break down, and it identifies the imaginary part of the weak value as the part that
carries the break, consistent with the role imaginary weak values already play in proofs
of contextuality.

The current-velocity field has been
measured directly, in two-slit weak-measurement experiments, and, as our result
requires, it does not diverge at the interference nodes (Sec.~\ref{sec:experiment}).
The osmotic field at those same nodes has not been measured, and our result says
exactly what should be seen if it is measured, a divergence with the sign-changing
$1/(x-\xnode)$ profile of Eq.~\eqref{eq:imdiv}, read out through the momentum-pointer
channel of the same class of experiment.

A complementary statement appeared while this work was being completed: for free
evolution with fixed endpoint moduli, the presence of nodes demotes the quantum
evolution from a global to a local minimum of a quadratic transport
cost~\cite{morato2026}. Nodes are unstable under free dynamics there; they are an
impassable boundary of a two-time conditioning here, an optimality statement beside an
existence one.

Several questions remain open. The two-endpoint bridge that pins both the pre- and
post-selected states follows the classical connecting trajectory, not the weak value
(Appendix~\ref{app:bernstein}). Identifying the stochastic object whose drift is
$\Rea(p_w)$ for a moving, phase-carrying post-selection, if one exists beyond the
phase-gradient and Wiseman readings, is left open. In more than one dimension the nodal
set becomes a codimension-two surface rather than a point, so $\partial\mathcal{D}$ is a
higher-dimensional locus. The Madelung split and Theorem~\ref{thm:main} apply verbatim
(Remark~\ref{rem:kinematic}), but the geometry of the osmotic divergence and the
node-inaccessibility argument deserve separate treatment there
(Remark~\ref{rem:scope}).

The sharpest open question is whether
the per-event contextuality that the osmotic field maps (Sec.~\ref{sec:experiment}) can be
turned into an operational witness that survives the post-selection cost rather than being
cancelled by it, the one place where the ledger that recurs throughout this paper might,
or might not, be beaten. The dependence of the divergence strength on the nodal order $k$
(Theorem~\ref{thm:main}) is an experimental discriminant, and the route to it is additivity
rather than a single higher-order mode. No Hermite--Gaussian mode supplies $k\ge2$ on its
own, the zeros of $H_n$ being simple (Remark~\ref{rem:scope}), but a simple zero in the
preparation and a simple zero in the post-selection placed at one transverse position give
$k=k_\psi+k_\phi=2$ in the amplitude, with the pole coefficient doubled and the pole itself
still simple in $x-\xnode$. That is the scan of Sec.~\ref{sec:additivity}, and it is the
only measurement proposed here whose outcome no single-field experiment can reproduce.

One ledger runs through everything above. Each two-time construction in this paper buys a
conditional advantage per retained event, contextuality maximal at the node and conditioned
coherence pinned at the endpoints, and pays for it in incident trials. Nothing is
manufactured on average. Wherever a claim in this subject looks like a
free lunch, the first question to ask is which of the two normalizations, per event or
per trial, is being quoted.

\begin{acknowledgments}
We thank Dr.\ Matisse Wei-Yuan Tu for a helpful discussion.
\end{acknowledgments}

\paragraph*{Code and data availability.}
A single self-contained \texttt{sympy}/\texttt{numpy} script reproduces every figure,
every table, and every verification claim in this paper. It is openly available under a
permanent identifier, Ref.~\cite{zenodo2026}. Each claim marked
``(verified symbolically)'' or ``(verified numerically)'' in the text corresponds to a
tagged routine in that script, as does each figure and each of
Tables~\ref{tab:scan} and~\ref{tab:threebox}, so that text and code can be mapped onto each
other line by line. The mapping is listed in the repository's \texttt{README}.

\paragraph*{Author contributions.}
C.-F.K. conceived the study, developed the theory and the symbolic and numerical
verifications, and wrote the manuscript. K.-W.W. proposed the reciprocal-process
(Bernstein-bridge) framing of the conditioning problem. Both authors discussed the
results and reviewed the manuscript.

\appendix

\section{Derivation of the two-field split}
\label{app:madelung}
Everything in the paper rests on the split \eqref{eq:re}--\eqref{eq:im} and on the
operational identity quoted below it, so both are shown here in full, together with the
conventions they depend on and the two extensions used later.

\emph{Conventions.} The pre-selected state is written
$\psi=\sqrt{\rho_\psi}\,e^{iS_\psi/\hbar}$ and the post-selected one
$\phi=\sqrt{\rho_\phi}\,e^{-iS_\phi/\hbar}$, the opposite sign in the second exponent
recording that $\phi$ is propagated backward from the post-selection time. Explicitly, in
the two-state-vector convention $\phi(x,t)=\langle x|U(t,T)|\phi_f\rangle$, so that
$\phi^*(x,t)=\langle\phi_f|U(T,t)|x\rangle$. For a delta post-selection at $x_f$ and time $T$,
$|\phi_f\rangle=|x_f\rangle$, this makes $\phi^*$ the forward propagator $K(x_f,T|x,t)$ and
$S_\phi$ the classical two-point action of Appendix~\ref{app:mehler}. This has two
consequences used throughout. First, $v_\phi:=\partial_xS_\phi/m$ is \emph{minus} the velocity of the
post-selected center, since $S_\phi$ is the action accumulated from $t$ forward to $T$
rather than backward from $T$. This is the sign that reappears in \eqref{eq:xdotcl} and in
Appendix~\ref{app:bernstein}, and it is a convention about labelling, not about physics,
in the sense of Proposition~\ref{prop:exchange}. Second, $\rho_\psi=|\psi|^2$ and
$\rho_\phi=|\phi|^2$ are unnormalized in general, which is harmless because only
$\nabla\ln$ of them appears.

\emph{The split.} With those conventions,
\begin{equation}
\phi^*\psi=\sqrt{\rho_\psi\rho_\phi}\,\exp\!\Big[\tfrac{i}{\hbar}(S_\psi+S_\phi)\Big].
\end{equation}
Taking the logarithm splits it into a real and an imaginary part by inspection,
\begin{equation}
\ln(\phi^*\psi)=\tfrac12\ln(\rho_\psi\rho_\phi)+\tfrac{i}{\hbar}(S_\psi+S_\phi),
\end{equation}
and the gradient, being linear, acts on each part separately:
\begin{equation}
\nabla\ln(\phi^*\psi)=\tfrac12\nabla\ln(\rho_\psi\rho_\phi)+\tfrac{i}{\hbar}\nabla(S_\psi+S_\phi).
\label{eq:gradlog}
\end{equation}
Multiplying \eqref{eq:gradlog} by $-i\hbar$ gives $p_w=-i\hbar\nabla\ln(\phi^*\psi)$
directly as a sum of a real and an imaginary term,
\begin{equation}
p_w=\nabla(S_\psi+S_\phi)\;-\;\tfrac{i\hbar}{2}\nabla\ln(\rho_\psi\rho_\phi),
\end{equation}
and reading off the real and imaginary parts term by term reproduces
\eqref{eq:re}--\eqref{eq:im} exactly, with no approximation at any step. The split is
algebraic rather than perturbative, holds for any $\psi,\phi$ that are $C^1$ and nonzero
at the point in question, and holds in any number of dimensions, $\nabla$ being the full
gradient. Only the \emph{asymmetry} of Theorem~\ref{thm:main} is one-dimensional
(Remarks~\ref{rem:scope} and~\ref{rem:kinematic}).

\emph{Single-valuedness.} The complex logarithm is multivalued, so \eqref{eq:gradlog}
needs one remark. On any simply connected subset of $\mathcal{D}=\{\phi^*\psi\neq0\}$ a
branch of $\ln(\phi^*\psi)$ exists and is $C^1$, and two branches differ by a constant
$2\pi i n$, whose gradient vanishes. Hence $\nabla\ln(\phi^*\psi)$, and therefore $p_w$,
is single-valued on all of $\mathcal{D}$ even where no global branch of the logarithm is.
What is \emph{not} guaranteed is that the phase $\Theta$ be $C^1$ with bounded gradient
across a zero, which is a separate condition and is exactly the hypothesis
\eqref{eq:signedpolar} of Theorem~\ref{thm:main}. It fails when $\phi^*\psi$ winds, as at
an optical vortex [Remark~\ref{rem:scope}(i)].

\emph{The operational identity.} The field $p_w=-i\hbar\,\partial_x\ln(\phi^*\psi)$ is the
difference of two differently ordered Kirkwood--Dirac weak values, as asserted in
Sec.~\ref{sec:madelung}. With $\Pi_x=|x\rangle\langle x|$, the first ordering gives
$\langle\phi|\Pi_x\hat p|\psi\rangle=\phi^*(x)\,(-i\hbar\,\partial_x\psi)$. For the second,
$\langle\phi|\hat p|x\rangle=\overline{\langle x|\hat p|\phi\rangle}
=\overline{-i\hbar\,\partial_x\phi}=+i\hbar\,\partial_x\phi^*$, so
$\langle\phi|\hat p\,\Pi_x|\psi\rangle=i\hbar\,(\partial_x\phi^*)\,\psi(x)$. Subtracting and
using the product rule,
\begin{equation}
\langle\phi|\Pi_x\hat p|\psi\rangle-\langle\phi|\hat p\,\Pi_x|\psi\rangle
=-i\hbar\,\partial_x(\phi^*\psi),
\end{equation}
and dividing by $\langle\phi|\Pi_x|\psi\rangle=\phi^*\psi$ gives
$-i\hbar\,\partial_x\ln(\phi^*\psi)$, which is the identity quoted. Neither ordering alone
returns $p_w$. The first gives $-i\hbar\,\partial_x\ln\psi$, the standard local weak
momentum, and the second $+i\hbar\,\partial_x\ln\phi^*$. They coincide with $p_w$ only when
the other factor is constant in $x$, which for the first is the same-plane position
post-selection used by Kocsis \emph{et al.}~\cite{kocsis2011}, where $v_\phi=0$. That case
is distinct from the delta post-selection at $(x_f,T)$ above, for which $v_\phi\neq0$ and
Eq.~\eqref{eq:xdotcl} applies.

\emph{The mixed-state osmotic identity.} Equation~\eqref{eq:KDosmotic} and
Proposition~\ref{prop:qubitklp} need the osmotic half of the split for a general density
matrix rather than a pure pair. Writing $\hat\rho(y,x)=\langle y|\hat\rho|x\rangle$ and
$\rho(x)=\hat\rho(x,x)$,
\begin{equation}
\langle x|\hat p\,\hat\rho|x\rangle=-i\hbar\,\partial_y\hat\rho(y,x)\big|_{y=x},
\end{equation}
while hermiticity $\hat\rho(x,y)=\overline{\hat\rho(y,x)}$ gives
$\partial_x\rho(x)=2\,\Rea\,\partial_y\hat\rho(y,x)|_{y=x}$. Therefore
\begin{equation}
\Ima\,\langle x|\hat p\,\hat\rho|x\rangle
=-\hbar\,\Rea\,\partial_y\hat\rho(y,x)\big|_{y=x}
=-\tfrac{\hbar}{2}\,\partial_x\rho(x),
\label{eq:mixedosmotic}
\end{equation}
independent of any purity assumption. For a pure two-state pair, dividing by
$\rho_\psi\rho_\phi$ recovers \eqref{eq:im}. This is why the osmotic prediction of
Sec.~\ref{sec:experiment} is a statement about the recorded intensity alone and survives
partial coherence.

\emph{Symbolic check.} We have verified \eqref{eq:gradlog} with \texttt{sympy}, checking
that the real and imaginary parts of the direct complex-logarithm derivative agree with the
term-by-term reading above once $S_\psi,S_\phi,\rho_\psi,\rho_\phi$ are declared real and
positive. Without that declaration a computer-algebra system does not know that
$\ln|\cdot|$ is real, which is where an unevaluated symbolic
expression can silently disagree with the intended one.

\section{Two discrete analogues, the three-box and Hardy paradoxes}
\label{app:discrete}
Two standard discrete pre- and post-selection paradoxes exhibit, in zero dimensions, a
feature that drives the nodal case. The continuous node $\phi^*\psi=0$ is replaced by a
conditioning weight that changes sign or is negative, and an anomalous, out-of-range weak
value appears exactly where that weight fails to be a positive overlap. Neither is an
instance of Theorem~\ref{thm:main}, for the reason given at the end of this appendix, and
we use them as illustrations of the mechanism and not as corollaries.

\emph{A continuous embedding of the three boxes.} Take three Gaussian wells
$b_i(x)=e^{-(x-r_i)^2/2\sigma^2}$ centered at $r_1,r_2,r_3=-d,0,d$, and set
$\psi\propto b_1+b_2+b_3$ and $\phi\propto b_1+b_2-b_3$, the minus sign carrying the
post-selection that makes the third box the disallowed one. Weak projector values are
computed on the bins $W_1=(-\infty,-d/2)$, $W_2=(-d/2,d/2)$, $W_3=(d/2,\infty)$ as
\begin{equation}
(P_i)_w=\frac{\int_{W_i}\phi^*\psi\,dx}{\int_{-\infty}^{\infty}\phi^*\psi\,dx},
\label{eq:threeboxdef}
\end{equation}
so that they sum to exactly $1$, as any complete set of weak projector values must. The
integrand $\phi^*\psi=b_1^2+b_2^2-b_3^2+2b_1b_2$ has no cross term with $b_3$, the wells
$1$ and $3$ being separated by $2d$, and every remaining integral is a Gaussian error
function. The result is carried by the two parameters
\begin{equation}
\epsilon=e^{-d^2/4\sigma^2},\qquad
\delta=\tfrac12\,\mathrm{erfc}(d/2\sigma),
\end{equation}
the adjacent-well overlap and the tail of a single well past the nearest bin boundary.
Both are local to this appendix. The symbol $\epsilon$ here is not the off-axis
regularization of Proposition~\ref{prop:genericity} that enters the visibility relation
\eqref{eq:kxprediction}, and $\delta$ is neither the Dirac $\delta$ nor the node distance
$\delta x$ used elsewhere.

\emph{The exact ratios.} Carrying out the integrals in \eqref{eq:threeboxdef} gives
\begin{equation}
(P_1)_w\simeq\frac{1+\epsilon}{1+2\epsilon},\;
(P_2)_w\simeq\frac{1-2\delta+\epsilon}{1+2\epsilon},\;
(P_3)_w\simeq\frac{2\delta-1}{1+2\epsilon}.
\label{eq:threeboxexact}
\end{equation}
The one approximation made is to discard the far tail of the shared cross term $2b_1b_2$
beyond the opposite bin boundary, of order $\epsilon\,\mathrm{erfc}(d/\sigma)$. Since
$\mathrm{erfc}(d/\sigma)$ falls as $e^{-d^2/\sigma^2}=\epsilon^4$ up to an algebraic
prefactor, that term is $O(\epsilon^5)$, smaller than the terms kept in
\eqref{eq:threeboxexact} by a further factor $\epsilon^4$.

\emph{Numerical check.} Equation~\eqref{eq:threeboxexact} matches the quadrature values of
Table~\ref{tab:threebox} to the displayed precision for every $d/\sigma\geq3$, and to five
significant figures in the underlying data. The interesting case is $d/\sigma=2$, where
$\epsilon\simeq0.4$ is not small and a naive $O(\epsilon)$ truncation would fail. There
the residuals are $-9.9\times10^{-4}$ in $(P_2)_w$ and $+1.0\times10^{-3}$ in $(P_3)_w$,
of opposite sign because the three values sum to $1$ both exactly and in quadrature, so
their residuals must cancel. The leading neglected term evaluates to
$\epsilon\,\mathrm{erfc}(d/\sigma)/(1+2\epsilon)=9.91\times10^{-4}$, accounting for both to
better than $10^{-5}$ in magnitude once included. The quadrature was converged to eight
digits. The quantities
being checked are the exact ratios of \eqref{eq:threeboxexact}, not their small-$\epsilon$
limit. In that limit the formulas reduce to $(P_1)_w\to1-\epsilon$ and
$(P_3)_w\to-1+2\delta+2\epsilon$, which are the estimates one would guess by inspection,
and which are wrong by about $20\%$ for $(P_1)_w$ at $d/\sigma=2$.

\emph{What the embedding shows.} The mechanism is visible directly in
Fig.~\ref{fig:threebox}. The conditioning weight $\phi^*\psi$ crosses zero between the
second and third wells, and that crossing is what forces $(P_3)_w$ negative. The
embedding is an illustration of the sign-change mechanism and of its exact rate of
convergence, not an independent derivation of the discrete three-box result. The negative
$(P_3)_w$ is not merely formal, having been observed in the weak-measurement realization
of the three-box problem by Resch, Lundeen, and Steinberg~\cite{resch2004}.

\begin{figure}[!htbp]
\centering
\includegraphics[width=\columnwidth]{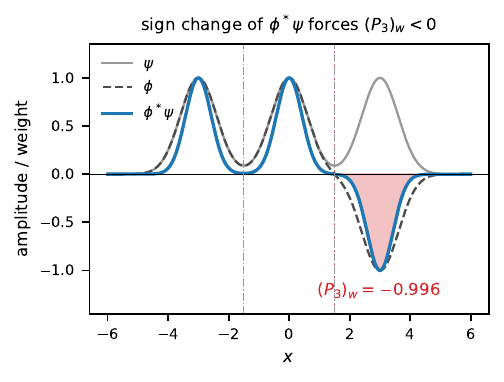}
\caption{Three-box witness. In this continuous embedding the conditioning weight
$\phi^*\psi$ (blue) crosses zero between the second and third wells, and that crossing
forces $(P_3)_w<0$, converging to $-1$ exponentially in the well overlap
[Eq.~\eqref{eq:threeboxexact}]. Plotted for $d/\sigma=5$ ($d=3$, $\sigma=0.6$), where
$(P_3)_w=-0.996$ (Table~\ref{tab:threebox}).}
\label{fig:threebox}
\end{figure}

\emph{Hardy's paradox.} The second example~\cite{hardy1992,aharonov2002} shows the same
feature with no spatial structure at all. With
electron arms $\{O^-,NO^-\}$ and positron arms $\{O^+,NO^+\}$, annihilation in the
overlapping pair $O^-O^+$ leaves the pre-selected state
$\psi\propto|O^-NO^+\rangle+|NO^-O^+\rangle+|NO^-NO^+\rangle$, and post-selection on both
dark ports gives $\phi\propto|O^-O^+\rangle-|O^-NO^+\rangle-|NO^-O^+\rangle+|NO^-NO^+\rangle$.
The conditioning weight is the overlap $\langle\phi|\psi\rangle=-1/(2\sqrt3)$, which is
negative, and that sign alone forces the anomaly. Writing $N_w=\langle\phi|N|\psi\rangle/
\langle\phi|\psi\rangle$ for each occupation, the numerators are read off directly from the
two expansions above, and dividing by the common negative denominator gives
\begin{equation}
N(O^-NO^+)_w=N(NO^-O^+)_w=1,\quad N(O^-O^+)_w=0,
\end{equation}
together with the anomalous joint value
\begin{equation}
N(NO^-NO^+)_w=-1,
\end{equation}
the four summing to $1$ as a complete set must. All four are verified symbolically and are
shown in Fig.~\ref{fig:hardy}. They were observed in weak-measurement realizations of the
optical Hardy's paradox by Lundeen and Steinberg~\cite{lundeen2009} and by Yokota \emph{et
al.}~\cite{yokota2009}.

\emph{Why these are analogues and not instances.} In the genuinely discrete problems the
conditioning weight is negative and not zero. There is therefore no node, nothing
diverges, and Theorem~\ref{thm:main} does not apply. The two examples do share with
the nodal case the weaker and still substantive feature that an out-of-range weak value
appears exactly where $\phi^*\psi$ fails to be a positive overlap, which is the same
failure of positivity that Corollary~\ref{cor:conditioning} identifies as the boundary of
the conditioning. The continuous case adds to this a divergence and a rate. The discrete
cases add the fact that the phenomenon survives with no spatial structure whatsoever.

\begin{figure}[!htbp]
\centering
\includegraphics[width=\columnwidth]{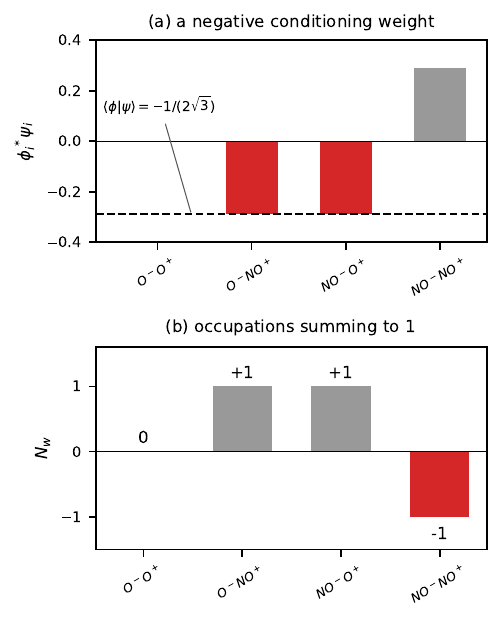}
\caption{Hardy's paradox as a discrete analogue of the sign-change mechanism.
\textbf{(a)} The contributions $\phi^*_i\psi_i$ of each basis state to the conditioning
weight sum to a \emph{negative} overlap $\langle\phi|\psi\rangle=-1/(2\sqrt{3})$ (dashed),
the discrete analogue of the two-state amplitude $\phi^*\psi$ passing through zero at a
node. \textbf{(b)} The occupation weak values that result, $N(O^-NO^+)_w=N(NO^-O^+)_w=1$,
$N(O^-O^+)_w=0$, and the anomalous $N(NO^-NO^+)_w=-1$ (red), forced by the negative
conditioning weight, with the four summing to $1$.}
\label{fig:hardy}
\end{figure}

\begin{table}[!htb]
\caption{Continuous three-well weak projector values versus separation $d/\sigma$
(numerical quadrature). Convergence to the exact $(1,1,-1)$ is exponential in
$\epsilon=e^{-d^2/4\sigma^2}$.}
\label{tab:threebox}
\begin{ruledtabular}
\begin{tabular}{ccccc}
$d/\sigma$ & $\epsilon$ & $(P_1)_w$ & $(P_2)_w$ & $(P_3)_w$\\
\hline
$2$ & $3.7\times10^{-1}$ & $0.788$ & $0.696$ & $-0.484$\\
$3$ & $1.1\times10^{-1}$ & $0.913$ & $0.885$ & $-0.798$\\
$4$ & $1.8\times10^{-2}$ & $0.982$ & $0.978$ & $-0.960$\\
$5$ & $1.9\times10^{-3}$ & $0.998$ & $0.998$ & $-0.996$\\
$6$ & $1.2\times10^{-4}$ & $1.000$ & $1.000$ & $-1.000$\\
\end{tabular}
\end{ruledtabular}
\end{table}

\section{Reciprocal (Bernstein) process and positivity}
\label{app:bernstein}

\subsection{The general construction}
\emph{The pair of solutions.} The Euclidean two-time construction keeps two strictly
positive functions instead of one density~\cite{zambrini1986,leonard2014}. They solve
dual heat equations,
\begin{equation}
\partial_t\eta=\nu\,\partial_x^2\eta-\tfrac{1}{\hbar}V\eta,\qquad
\partial_t\eta^*=-\nu\,\partial_x^2\eta^*+\tfrac{1}{\hbar}V\eta^*,
\label{eq:dualheat}
\end{equation}
the first running forward in time from the initial data and the second backward from the
final data, so that $\eta^*$ is the space-time harmonic factor. For a free particle the
potential term drops and \eqref{eq:dualheat} reduces to the dual heat equations; for the
oscillator of Sec.~\ref{app:statnode} it is the imaginary-time Schr\"odinger operator
$H=-\nu\hbar\,\partial_x^2+V$, whose eigenfunctions supply the closed form there. The pair
is fixed by the two-endpoint conditions
\begin{equation}
\eta(x,0)\,\eta^*(x,0)=\rho_0(x),\qquad
\eta(x,T)\,\eta^*(x,T)=\rho_T(x),
\label{eq:bridgebc}
\end{equation}
which is the system whose solvability is asserted by Corollary~\ref{cor:conditioning}. That the
endpoint data enter only through the \emph{product} is the structure Schr\"odinger
introduced when he posed the problem~\cite{schrodinger1931,schrodinger1932}, reading the
product of two oppositely propagating positive solutions as the classical counterpart of
$\psi\bar\psi$. The object whose positivity is tracked in this paper, $\phi^*\psi$, is the
real-time version of that product.

\emph{The velocity fields.} The associated process has density
$\rho_{\text{eff}}=\eta\eta^*$ and, in the notation of Sec.~\ref{sec:madelung},
\begin{equation}
v=\nu\,\partial_x\ln\frac{\eta^*}{\eta},\qquad
u=\nu\,\partial_x\ln(\eta\eta^*),
\label{eq:bridgevu}
\end{equation}
so that the forward and backward It\^o drifts collapse to one factor each,
\begin{equation}
b_+=v+u=2\nu\,\partial_x\ln\eta^*,\qquad
b_-=v-u=-2\nu\,\partial_x\ln\eta .
\label{eq:bridgedrifts}
\end{equation}
This is the sense in which the phase information survives the Euclidean construction. It
is carried by the ratio $\eta^*/\eta$, which a single density cannot reproduce.
Equivalently, $\eta^*$ is a positive space-time harmonic function and the process is the
Doob $h$-transform \eqref{eq:htransform} with $h=\eta^*$.

\emph{Where it fails.} On each nodal cell $\eta,\eta^*>0$, the drifts
\eqref{eq:bridgedrifts} are smooth and the diffusion is a finite-energy process in
Carlen's sense~\cite{carlen1984}. Positivity fails exactly when $\eta^*$ acquires a zero,
that is on $\partial\mathcal{D}$, where $\partial_x\ln\eta^*$ and hence $b_+$ diverge. By
Theorem~\ref{thm:main} the divergence is osmotic, the real-time current velocity
$\Rea(p_w)/m$ remaining bounded there. The real-time identity
$\Rea(p_w)/m=v_\psi+v_\phi$ is exact at the level of the phase gradient
(Appendix~\ref{app:madelung}). A closed form for the pair $(\eta,\eta^*)$ realizing it is available only for a stationary
node, and is given in Sec.~\ref{app:statnode} below.

\subsection{Real-time verification for coherent pre- and post-selection}
\emph{The field.} We evaluate the coherent Gaussian class in closed form, in the units
$\hbar=m=\omega=1$ of Sec.~\ref{sec:numerics}. Take coherent pre- and post-selection with
phase-space centers $q_\psi(t)=\alpha\cos t$ and $q_\phi(t)=\beta\cos(T-t)$, together with
their conjugate momenta. The amplitudes are written $\alpha,\beta$ to keep them apart from
the drifts $b,b_\pm$. The real-time weak momentum then evaluates to
\begin{equation}
\Rea(p_w)=-\alpha\sin t-\beta\sin(T-t),
\label{eq:cohRe}
\end{equation}
which is the phase gradient $\partial_x(S_\psi+S_\phi)$ of $\phi^*\psi$ (exact, verified
symbolically), spatially uniform, and equal to $v_\psi+v_\phi=\dot q_\psi-\dot q_\phi$.
Operationally it is the Wiseman current velocity~\cite{wiseman2007} of the
position-post-selected ensemble, reconstructed in Ref.~\cite{kocsis2011}. It should be contrasted with the transport
velocity of the naive product density $\rho_\psi\rho_\phi$, a Gaussian centered at
$\tfrac12(q_\psi+q_\phi)$ and therefore transporting at $\tfrac12(\dot q_\psi+\dot
q_\phi)$. The two differ by $\tfrac12(\dot q_\psi-3\dot q_\phi)$, verified symbolically,
and this difference vanishes only where $\dot q_\psi=3\dot q_\phi$. For $\alpha,\beta>0$
and $0<T<\pi$ the two sides of that condition, $-\alpha\sin t$ and $3\beta\sin(T-t)$, have
opposite signs on $(0,T)$ and exactly one of them vanishes at each endpoint, so the locus
is empty on $[0,T]$. For larger $T$ isolated crossings with both centers moving can occur,
which does not affect the genericity statement. This is the concrete content of
Proposition~\ref{prop:current}.

\emph{Existence without a closed form.} Existence of the positive reciprocal bridge on
each nodal cell, where the endpoint densities are strictly positive, is guaranteed by the
Beurling--Jamison theory of the Schr\"odinger
problem~\cite{beurling1960,jamison1974,jamison1975}. No closed form for that pair is available in general, and the two-sided conditioning
reproducing this field is exhibited explicitly only for a stationary node. For
a moving (coherent) node the two-endpoint-pinned reciprocal bridge follows, once continued
back to real time, the classical trajectory connecting the pre- and post-selected centers,
whose velocity $(-\alpha\cos(T-t)+\beta\cos t)/\sin T$ solves $\ddot x=-\omega^2x$ with
$x(0)=\alpha$, $x(T)=\beta$ and differs from $\Rea(p_w)/m$ (verified symbolically). Before
continuation the Euclidean bridge is the hyperbolic interpolation
$[\alpha\sinh\omega(T-t)+\beta\sinh\omega t]/\sinh\omega T$, consistent with
\eqref{eq:oscbridge}. The trigonometric and hyperbolic forms are the two sides of the
$t\to-it$ continuation of Appendix~\ref{app:beyond}, and neither is $\Rea(p_w)/m$.
There is thus no closed-form moving Bernstein bridge whose drift is the weak value, which
is consistent with the thesis of this paper and with Theorem~\ref{thm:noflux}. The identity
$\Rea(p_w)/m=v_\psi+v_\phi$ does not depend on any such bridge. It rests on the exact phase
gradient of $\phi^*\psi$ and, for position post-selection ($v_\phi=0$, the Kocsis case), on
the Wiseman identification above.

\emph{The osmotic partner.} For nodeless coherent post-selection the accompanying
$\Ima(p_w)$ is smooth. For a displaced Hermite--Gaussian post-selection with a node at
$\xnode$, $\rho_\phi\sim(x-\xnode)^2$ gives
$\Ima(p_w)=-\tfrac{\hbar}{2}\partial_x\ln\rho_\phi\sim-\hbar/(x-\xnode)$, reproducing
\eqref{eq:imdiv}. A naive single-density heat-kernel construction fails here. It would need
$\Rea[\partial_x\ln\langle x|\phi_f\rangle]$ to equal $\partial_x\ln h$ for a Doob
$h$-function. The Mehler score
$\partial_x\ln K=\frac{im\omega}{\hbar\sin\omega\tau}(x\cos\omega\tau-x_f)$ is purely
imaginary, its prefactor being independent of $x$, so its real part vanishes identically
and the construction returns nothing.

\subsection{An explicit stationary node}
\label{app:statnode}
For the nodal case, $\eta,\eta^*$ can be written explicitly in the simplest instance, a
stationary first-excited post-selection ($k=1$, no motion of the node) against a
ground-state pre-selection ($\hbar=m=\omega=1$). The forward and backward Euclidean
solutions are the corresponding energy eigenstates scaled by their Euclidean
propagation factor,
\begin{equation}
\eta(x,t)\propto e^{-x^2/2}e^{-t/2},\qquad
\eta^*(x,\tau)\propto x\,e^{-x^2/2}e^{-3\tau/2},
\label{eq:etaexplicit}
\end{equation}
with $\tau=T-t$, giving
$\rho_{\text{eff}}=\eta\eta^*\propto x\,e^{-x^2}e^{-t/2-3\tau/2}$, a density vanishing at
$x=0$ to first order in $x$, matching the $k=1$ node used throughout the numerics.

\emph{A factor of two.} The Euclidean $\eta^*$ carries the zero at order $k$ itself,
being an amplitude and not a squared modulus, so $\rho_{\text{eff}}=\eta\eta^*$
vanishes to order $k$ here, whereas the real-time density
$\rho_\phi\sim(x-\xnode)^{2k}$ of Theorem~\ref{thm:main} vanishes to order $2k$. The two
conventions differ by exactly the square in $\rho=|\cdot|^2$ and agree once continued back
to the real-time quantities the theorem is stated in. We do the bookkeeping once explicitly, since the same factor governs the strength of the pole in
\eqref{eq:imdiv} and the Bessel dimension $2k+1$ of
Proposition~\ref{prop:inaccessible}.

\emph{The drifts.} For \eqref{eq:etaexplicit} the construction \eqref{eq:bridgevu} gives $\nu\,\partial_x\ln(\eta^*/\eta)=\nu/x$ and
$\nu\,\partial_x\ln(\eta\eta^*)=\nu(1/x-2x)$, so the forward drift
$b_+=2\nu\,\partial_x\ln\eta^*=2\nu(1/x-x)$ carries the full $2k\nu/x$ repulsion of
Proposition~\ref{prop:inaccessible}. Continued back to real time, where
$\psi\propto e^{-x^2/2}e^{-it/2}$ and $\phi\propto x\,e^{-x^2/2}e^{-3it/2}$ have spatially
uniform phases, the current velocity vanishes, $\Rea(p_w)=0$, while
$\Ima(p_w)=-\tfrac{\hbar}{2}\partial_x\ln(\rho_\psi\rho_\phi)=-\hbar/x+2\hbar x$. The
entire spatial log-derivative content of the Euclidean pair sits in the osmotic channel.
The Euclidean log-ratio $\nu/x$ is therefore \emph{not} the continuation of a real-time
current velocity. It is the amplitude-level half of the real-time osmotic pole, the other
half residing in $\nu\,\partial_x\ln(\eta\eta^*)$. The stationary bridge is consistent with
$\Rea(p_w)=0$, as used in Proposition~\ref{prop:current}.

\emph{Scope of this example.} We have carried the explicit construction through only for
the stationary, $k=1$ case. No Bernstein
construction enters the generation of
Figs.~\ref{fig:flagship}--\ref{fig:control}. Those are direct field maps of
$p_w=-i\hbar\,\partial_x\ln(\phi^*\psi)$, evaluated as specified in
Sec.~\ref{sec:numerics}, the moving node entering only through the post-selected state
whose node follows the classical trajectory. What remains on the bridge side is to extend
\eqref{eq:etaexplicit} to a moving, general-$k$ node. The two-endpoint process whose existence
Corollary~\ref{cor:conditioning} invokes is confined to the moving cell by the drift
repulsion of Proposition~\ref{prop:inaccessible}, but its closed form is left for future
work.

\section{Mehler kernel and the caustic}
\label{app:mehler}
The oscillator propagator is
\begin{align}
K(x_f,T|x,t)&=\sqrt{\frac{m\omega}{2\pi i\hbar\sin\omega\tau}}\,e^{iS_{\text{cl}}(x,t;x_f,T)/\hbar},
\label{eq:mehlerS}\\
S_{\text{cl}}&=\frac{m\omega}{2\sin\omega\tau}\big[(x^2+x_f^2)\cos\omega\tau-2xx_f\big],
\nonumber
\end{align}
$\tau=T-t$. That the exponent is exactly the classical action $S_{\text{cl}}$ connecting
$(x,t)$ to $(x_f,T)$ is not a coincidence of the oscillator. For any Hamiltonian
quadratic in $x,p$ the stationary-phase (saddle-point) evaluation of the path integral
is exact, with no fluctuation corrections beyond the prefactor~\cite{schulman1981}. This
lets us derive, instead of merely asserting, the delta-limit field \eqref{eq:xdotcl}
used throughout the paper, sign included.

\subsection{Deriving the constrained velocity from the action}
\emph{The sign convention.} Varying the action of the classical path from $(x,t)$ to
$(x_f,T)$ over its endpoints gives the standard first variation
\begin{equation}
dS_{\text{cl}}=p_f\,dx_f-p_i\,dx-H\,dT+H\,dt ,
\label{eq:dScl}
\end{equation}
so that $\partial S_{\text{cl}}/\partial x_f=p_f$ at the final point and
$\partial S_{\text{cl}}/\partial x=-p_i$ at the initial one. The minus sign in the initial
variable is therefore not a choice but a consequence of \eqref{eq:dScl}, and it is what
propagates into everything below.

\emph{The velocity.} Differentiating \eqref{eq:mehlerS} in $x$,
\begin{equation}
\frac{\partial S_{\text{cl}}}{\partial x}
=\frac{m\omega\,(x\cos\omega\tau-x_f)}{\sin\omega\tau},
\end{equation}
whence $p_{\text{cl}}=-\partial S_{\text{cl}}/\partial x
=m\omega(x_f-x\cos\omega\tau)/\sin\omega\tau$. That this is $m\dot x_{\text{cl}}(t)$ can be
checked directly against the classical solution obeying $x(t)=x$ and $x(T)=x_f$,
\begin{equation}
x_{\text{cl}}(s)=\frac{x\sin\omega(T-s)+x_f\sin\omega(s-t)}{\sin\omega\tau},
\label{eq:clpath}
\end{equation}
which differentiated at $s=t$ gives
$\dot x_{\text{cl}}(t)=\omega(x_f-x\cos\omega\tau)/\sin\omega\tau$, as required.

\emph{The field entering $p_w$.} With $\phi^*=K=A(\tau)\,e^{iS_{\text{cl}}/\hbar}$ fixed by
the convention of Sec.~\ref{sec:madelung}, the prefactor $A$ depends on $\tau$ alone and
so drops out of the spatial logarithmic derivative exactly, with no approximation:
\begin{equation}
-i\hbar\,\partial_x\ln\phi^*=\partial_x S_{\text{cl}}=-p_{\text{cl}},
\label{eq:mehlerscore}
\end{equation}
which is $m$ times the second term of \eqref{eq:xdotcl}, verified symbolically, the first
term coming from $\partial_xS_\psi$. The delta-limit field is thus obtained
from the two-point action by ordinary Hamilton--Jacobi differentiation, with the boundary
condition $x(T)=x_f$ already built into $S_{\text{cl}}$, and it is the classical
constrained momentum \emph{reversed}. That is the content of the sign discussed below
\eqref{eq:xdotcl}, and it is the same sign that makes $v_\phi$ minus the velocity of the
post-selected center in Appendix~\ref{app:madelung}.

\subsection{The caustic as a conjugate point}
\emph{One singularity, not two.} The prefactor of \eqref{eq:mehlerS} and the field
\eqref{eq:xdotcl} both blow up at $\omega\tau\in\pi\mathbb{Z}$, and these are not two
coincidental divergences. The Van Vleck determinant governing the propagator's
normalization is, in one dimension, $-\partial^2S_{\text{cl}}/\partial x\,\partial x_f$,
and
\begin{equation}
-\frac{\partial^2 S_{\text{cl}}}{\partial x\,\partial x_f}=\frac{m\omega}{\sin\omega\tau},
\label{eq:vanvleck}
\end{equation}
which diverges on exactly the same locus, for the same reason.

\emph{The conjugate point.} That reason is the standard semiclassical characterization of
a \emph{conjugate point}~\cite{schulman1981}, and it is visible in \eqref{eq:clpath}. At
$\omega\tau=n\pi$ the denominator of the classical path vanishes, so the boundary-value
problem $x(t)=x$, $x(T)=x_f$ has no solution at all unless the numerator vanishes too,
which happens only when $x_f=(-1)^{n}x$. In that exceptional case the solution is not
unique but a one-parameter family, every member of which reconnects the two endpoints in
exactly that time. Both the propagator's prefactor and the velocity needed to reach $x_f$
diverge together away from the exceptional point, and both stay finite at it, which is
Proposition~\ref{prop:geometry}(i).

\emph{What this does and does not say.} The caustic of Sec.~\ref{sec:nodecaustic} is
therefore a fact about the two-point boundary-value problem for classical paths, with
nothing to do with any wavefunction having a node. It is an artifact of the delta limit in
two independent senses. It disappears for a post-selection of finite width, since the
Van Vleck prefactor is then integrated against a normalizable profile and no denominator
survives, which is the content of Proposition~\ref{prop:geometry}(iii), shown in
Fig.~\ref{fig:flagship}(a).
And it disappears for nonperiodic dynamics, since in the limits of
Appendix~\ref{app:beyond} the factor $\sin\omega\tau$ is replaced by $\omega\tau$ for the
free particle and by $\sinh\Omega\tau$ for the inverted oscillator, neither of which
vanishes at any finite positive argument. The osmotic boundary has neither property.

\subsection{Numerical checks}
Four independent checks accompany the results used above, all in the units
$\hbar=m=\omega=1$ of Sec.~\ref{sec:numerics}.

The Madelung identity \eqref{eq:re}--\eqref{eq:im} was evaluated on the coherent-state
class of Appendix~\ref{app:bernstein} by computing $-i\hbar\,\partial_x\ln(\phi^*\psi)$ on
a grid and comparing its real and imaginary parts term by term with
$\partial_x(S_\psi+S_\phi)$ and $-\tfrac{\hbar}{2}\partial_x\ln(\rho_\psi\rho_\phi)$
formed independently from the closed-form states.

The delta-limit relation \eqref{eq:mehlerscore} was checked by replacing the
post-selection with a narrow nodeless Gaussian and comparing $\Rea(p_w)/m$ against
$-\dot x_{\text{cl}}$ from \eqref{eq:clpath}. The two agree to
$\sim10^{-7}$, the residual being the finite width of the post-selection and not a
discretization error, as confirmed by narrowing it further.

The regularization of the caustic was verified by holding $\omega\tau=\pi$ fixed and
increasing the post-selection width from the delta limit, the divergence being replaced by
a finite peak whose height is set by that width [Fig.~\ref{fig:flagship}(a)].

The osmotic divergence was checked by the scan of Table~\ref{tab:scan}, in which the
distance to the node is reduced by a factor of ten and $|\Ima(p_w)|$ grows by the factors
$9.5$ and $10.1$ predicted by \eqref{eq:imdiv} up to the smooth remainder of
\eqref{eq:hgclosed}, while $\Rea(p_w)$ does not move.

\section{The free particle and the inverted oscillator}
\label{app:beyond}
Every explicit example above is the harmonic oscillator. The scope claimed in the
Introduction is the wider class of quadratic Hamiltonians. We substantiate this by
taking the two limits of $\omega$ that reach the rest of the class directly in the
already-established delta-limit field \eqref{eq:xdotcl}.

\emph{Free particle, $\omega\to0$.} Expanding \eqref{eq:xdotcl} to leading order gives
\begin{equation}
\Rea(p_w)/m\to v_\psi-\frac{x_f-x}{\tau},
\end{equation}
verified by direct symbolic limit, and equal to $-(x_f-x)/\tau$ for $v_\psi=0$. The magnitude is the familiar constant-velocity
constrained trajectory, and the sign is again reversed. Here the contrast is at its
starkest, since $(x_f-x)/\tau$ is the drift of the Brownian bridge to $x_f$. The
delta-limit field is minus that drift, so the bridge-drift identification fails already
for the simplest possible Hamiltonian and the simplest possible post-selection.

The caustic of Sec.~\ref{sec:nodecaustic} sits at $\tau_c=\pi/\omega$, which recedes to
infinity as $\omega\to0$, so for any finite time interval the free particle has no caustic
at all.

\emph{Why the exact minus sign does not generalize.} It would be natural to read the free
particle as showing that the two fields always differ by a sign. They do not, and the
oscillator shows why. The Euclidean bridge pinned to $x_f$ is the $h$-transform of
Appendix~\ref{app:bernstein} with $h=\eta^*$ taken to be the Euclidean (heat) Mehler
kernel,
\begin{equation}
\eta^*(x,t)\propto\exp\!\Big[-\frac{\omega\big[(x^2+x_f^2)\cosh\omega\tau-2xx_f\big]}
{2\sinh\omega\tau}\Big],
\end{equation}
in the units $\hbar=m=1$ of this appendix, so that $b_+=2\nu\,\partial_x\ln\eta^*$ of
\eqref{eq:bridgedrifts} gives, with $\nu=\tfrac12$,
\begin{equation}
b_{\text{bridge}}(x,t)=-\,\omega\,\frac{x\cosh\omega\tau-x_f}{\sinh\omega\tau}.
\label{eq:oscbridge}
\end{equation}
The delta-limit field \eqref{eq:xdotcl} is, at $v_\psi=0$,
$\omega(x\cos\omega\tau-x_f)/\sin\omega\tau$. The two are $t\to-it$ continuations of each
other, and they coincide up to sign only as $\omega\to0$, where the trigonometric and
hyperbolic functions degenerate together. At finite $\omega$ they are different functions
of $\omega\tau$ altogether. Taking $x_f=0$ and $x=1$ for definiteness, the ratio of
their magnitudes is $|\cot\omega\tau|/|\coth\omega\tau|$. It falls from $0.99$ at
$\omega\tau=0.1$ through $0.49$ at $\omega\tau=1$ to zero at $\omega\tau=\pi/2$, where the
delta-limit field vanishes and the bridge drift does not, and then grows without bound as
the caustic at $\omega\tau=\pi$ is approached, reaching $7.0$ already at $\omega\tau=3$
(verified symbolically and numerically). The two functions therefore separate in two
distinct ways at two distinct places. No phase or sign convention relates a
trigonometric function to a hyperbolic one, so the disagreement between the weak-momentum
field and the bridge drift is structural and not conventional. This settles the objection
that the sign inversion of Sec.~\ref{sec:madelung} is an artifact of how $S_\phi$ was
defined.

The caustic makes the same point from the other side. The field \eqref{eq:xdotcl} diverges
at $\omega\tau=\pi$ while \eqref{eq:oscbridge} is smooth for every $\tau>0$, since $\sinh$
never vanishes. A genuine bridge drift has no caustic to regularize.

\emph{Inverted oscillator, $\omega\to i\Omega$.} Solving the same constrained-trajectory
boundary-value problem directly for $\ddot x=\Omega^2x$ (rather than substituting into
\eqref{eq:xdotcl} term by term) gives
\begin{equation}
\dot x_{\text{cl}}(t)=\Omega\,\frac{x_f-x\cosh(\Omega\tau)}{\sinh(\Omega\tau)},
\label{eq:invertedlimit}
\end{equation}
verified to agree with the $\omega\to i\Omega$ substitution in \eqref{eq:xdotcl} to
machine precision. Here the hyperbolic functions are physically correct, since the classical
motion of an inverted oscillator genuinely is hyperbolic in real time. They are not to be
confused with the Euclidean, imaginary-time hyperbolic functions of
Appendix~\ref{app:bernstein}, which describe a different object and require analytic
continuation before they yield a real-time field. Since
$\sinh(\Omega\tau)>0$
for all $\Omega,\tau>0$, \eqref{eq:invertedlimit} never diverges, so the inverted
oscillator, like the free particle, has no caustic at any finite time.

Both non-oscillatory limits are therefore caustic-free for all time, while the ordinary
oscillator recurs through a caustic every half period. The caustic of
Sec.~\ref{sec:nodecaustic} is not a generic feature of quadratic Hamiltonians but
specifically of \emph{periodic} classical motion, which separates node from caustic
further than Proposition~\ref{prop:geometry} alone does. In two of the three cases treated here only the node is
present. Theorem~\ref{thm:main} is unaffected by any of this, being kinematic
(Remark~\ref{rem:kinematic}) and not tied to which quadratic Hamiltonian is in play.

\section{The imaginary-part inequality for a qubit pointer}
\label{app:qubitklp}
The inequality \eqref{eq:klp} is stated in~\cite{kunjwal2019} for a continuous Gaussian
pointer of width $s$, whereas the pointer of the Kocsis geometry is a polarization qubit.
We give the qubit version here instead of importing the continuous one.

Let the system be prepared in $\hat\rho$, let $\mathcal{E}$ be the weakly measured effect,
and let the pointer be a qubit prepared in $|D\rangle=(|H\rangle+|V\rangle)/\sqrt2$ and
coupled through
\begin{equation}
U_\theta=e^{-i\theta\,\mathcal{E}\otimes\sigma_z}
=e^{-i\theta\mathcal{E}}\otimes|H\rangle\langle H|
+e^{+i\theta\mathcal{E}}\otimes|V\rangle\langle V| ,
\label{eq:qubitcoupling}
\end{equation}
with $\theta>0$ the dimensionless coupling. This is the calcite interaction of
Sec.~\ref{sec:experiment}, and with $\mathcal{E}$ replaced by $\hat k_x$ one has
$\theta=\zeta/2k_0$, and the post-selected pointer state of \eqref{eq:qubitpointer} is
recovered with $\varphi_w=2\theta\mathcal{E}_w$. Write $p_F=\Tr(\Pi_x^\Delta\hat\rho)$ and
let $p_-$ be the joint probability that the pointer is found in $|V\rangle$, i.e.\
$\sigma_z=-1$, and that the post-selection $\Pi_x^\Delta$ succeeds. The two commute, acting
on different factors, so no order of measurement need be specified.

\begin{proposition}[Imaginary-part witness with a qubit pointer]
\label{prop:qubitklp}
With the above notation:
\begin{enumerate}
\item[(i)] Ignoring the post-selection, $\Pr(\sigma_z=-1)=\tfrac12$ exactly, for every
$\hat\rho$, every self-adjoint $\mathcal{E}$ and every $\theta$. The POVM the readout
induces on the system is the trivial one, $\{\tfrac12\openone,\tfrac12\openone\}$.
\item[(ii)] $p_-=\tfrac12 p_F-\theta\,\Ima\langle\Pi_x^\Delta\mathcal{E}\rangle_{\hat\rho}
+O(\theta^2)$.
\item[(iii)] Discarding the pointer outcome disturbs the post-selection probability only at
second order, with $|p_F^{\mathrm{ns}}-p_F|\le2\theta^2\|\mathcal{E}\|^2$.
\item[(iv)] In any noncontextual ontological model in which the readout disturbs the ontic
state by at most $p_d$ in total variation, $p_-\le\tfrac12 p_F+\tfrac12 p_d$. Contextuality
is therefore certified whenever
\begin{equation}
-\,\Ima\langle\Pi_x^\Delta\mathcal{E}\rangle_{\hat\rho}>\frac{p_d}{2\theta}+O(\theta).
\label{eq:qubitwitness}
\end{equation}
\end{enumerate}
\end{proposition}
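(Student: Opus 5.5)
The plan is to reduce everything to the Kraus operators of the readout. Contracting \eqref{eq:qubitcoupling} with $\langle H|$ and $\langle V|$ on the pointer side and with $|D\rangle$ on the input side gives
\begin{equation}
K_H=\tfrac{1}{\sqrt2}e^{-i\theta\mathcal{E}},\qquad
K_V=\tfrac{1}{\sqrt2}e^{+i\theta\mathcal{E}} .
\end{equation}
Both are $1/\sqrt2$ times a unitary, because $\mathcal{E}$ is self-adjoint.

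For (i) I will compute $\Pr(\sigma_z=-1)=\Tr(K_V^\dagger K_V\hat\rho)=\tfrac12$. This uses no smallness of $\theta$ and no idempotence of $\mathcal{E}$. The induced POVM is then $\{K_H^\dagger K_H,K_V^\dagger K_V\}=\{\tfrac12\openone,\tfrac12\openone\}$.

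For (ii) I will write
\begin{equation}
p_-=\Tr\bigl(\Pi_x^\Delta K_V\hat\rho K_V^\dagger\bigr)
=\tfrac12\Tr\bigl(\Pi_x^\Delta e^{i\theta\mathcal{E}}\hat\rho e^{-i\theta\mathcal{E}}\bigr)
\end{equation}
and expand to first order, which gives
\begin{equation}
\tfrac12 p_F+\tfrac{i\theta}{2}\Tr\bigl(\Pi_x^\Delta[\mathcal{E},\hat\rho]\bigr)+O(\theta^2).
\end{equation}
To evaluate the commutator term I will use cyclicity and hermiticity: $\Tr(\Pi_x^\Delta\hat\rho\mathcal{E})=\overline{\Tr(\Pi_x^\Delta\mathcal{E}\hat\rho)}$. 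The commutator trace is therefore $2i\,\Ima\langle\Pi_x^\Delta\mathcal{E}\rangle_{\hat\rho}$, and the first-order term is $-\theta\,\Ima\langle\Pi_x^\Delta\mathcal{E}\rangle_{\hat\rho}$. To make the $O(\theta^2)$ explicit, I will use the integral form of the remainder of $\theta\mapsto e^{i\theta\mathcal{E}}\hat\rho e^{-i\theta\mathcal{E}}$. Its second derivative is a double commutator of trace norm at most $4\|\mathcal{E}\|^2$.

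For (iii) the nonselective channel is
\begin{equation}
\hat\rho\mapsto\tfrac12\bigl(e^{-i\theta\mathcal{E}}\hat\rho e^{i\theta\mathcal{E}}+e^{i\theta\mathcal{E}}\hat\rho e^{-i\theta\mathcal{E}}\bigr).
\end{equation}
The first-order terms cancel between the two branches. I will bound the second-order remainder in trace norm by $\tfrac{\theta^2}{2}\cdot4\|\mathcal{E}\|^2$, and since $0\le\Pi_x^\Delta\le\openone$ this gives $|p_F^{\mathrm{ns}}-p_F|\le2\theta^2\|\mathcal{E}\|^2$.

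Part (iv) is where the work lies, and I will follow the Kunjwal--Lostaglio--Pusey template. In a noncontextual model the preparation is a distribution $\mu(\lambda)$ and the post-selection a response function $\xi_F(\lambda)$ with $\int\mu\,\xi_F=p_F$. The readout is an outcome-labelled stochastic map $\Gamma(\lambda',V|\lambda)$. By (i) its POVM is operationally equivalent to a fair coin flip that does not touch the system. Measurement noncontextuality therefore forces the outcome response to be $\tfrac12$ for every $\lambda$, independent of the ontic state. I will then write
\begin{equation}
p_-=\int\mu(\lambda)\int\Gamma(\lambda',V|\lambda)\,\xi_F(\lambda')\,d\lambda'\,d\lambda
\end{equation}
and split $\Gamma(\cdot,V|\lambda)=\tfrac12\delta_\lambda+\bigl[\Gamma(\cdot,V|\lambda)-\tfrac12\delta_\lambda\bigr]$. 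The first piece gives $\tfrac12 p_F$. The second is controlled by the disturbance hypothesis through $0\le\xi_F\le1$, and should contribute at most $\tfrac12 p_d$.

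The step I expect to be the obstacle is fixing the correct definition of $p_d$ as a total variation distance so that exactly $\tfrac12 p_d$ comes out. My first attempt will define $p_d$ as the maximum over $\lambda$ of the distance between the post-measurement ontic distribution and $\delta_\lambda$, summed over outcomes. In that case each outcome branch, which has weight $\tfrac12$, contributes half. Comparing the bound with (ii) then gives \eqref{eq:qubitwitness} after dividing by $\theta$, with the $O(\theta)$ coming from the remainder in (ii).
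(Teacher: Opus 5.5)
Your route matches the paper's for all four parts. The Kraus operators $\tfrac{1}{\sqrt2}e^{\mp i\theta\mathcal{E}}$ give (i). The exact formula $p_-=\tfrac12\Tr[\Pi_x^\Delta e^{i\theta\mathcal{E}}\hat\rho\,e^{-i\theta\mathcal{E}}]$, expanded to first order, gives (ii). Averaging the two branches with the double-commutator trace-norm bound gives (iii). Measurement noncontextuality, which fixes the $\sigma_z=-1$ response at $\tfrac12$, gives (iv).

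Your integral-remainder version of (iii) is slightly stronger than the paper's, which drops an $O(\theta^4)$ tail. Unitary conjugation preserves the trace norm, so the second derivative is bounded by $4\|\mathcal{E}\|^2$ for every $s$. Hence $|p_F^{\mathrm{ns}}-p_F|\le2\theta^2\|\mathcal{E}\|^2$ holds for all $\theta$, not only to leading order.

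The point that needs pinning down is the one you flag, and the phrase ``summed over outcomes'' can go wrong. Take $d_{\mathrm{TV}}(p,q)=\sup_A|p(A)-q(A)|$. The paper's hypothesis is outcome-conditioned. With $\xi_-(\lambda)$ the post-selection probability after the readout has returned $\sigma_z=-1$, it assumes $\xi_-\le\xi+p_d$ pointwise. That holds whenever the normalized branch $2\Gamma(\cdot,V|\lambda)$ is within $d_{\mathrm{TV}}$ distance $p_d$ of $\delta_\lambda$.

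Your split needs exactly this. Noncontextuality makes $\Gamma(\cdot,V|\lambda)-\tfrac12\delta_\lambda$ a signed measure of total mass zero. Its integral against $0\le\xi_F\le1$ is therefore at most its positive part, $\tfrac12 d_{\mathrm{TV}}(2\Gamma(\cdot,V|\lambda),\delta_\lambda)\le\tfrac12p_d$. Summing distances over outcomes is unnecessary; it only strengthens the hypothesis.

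What does not work is letting $p_d$ bound the distance of the non-selective distribution $\Gamma(\cdot,H|\lambda)+\Gamma(\cdot,V|\lambda)$ from $\delta_\lambda$. Then the step ``each branch contributes half'' fails. Since $d_{\mathrm{TV}}(\cdot,\delta_\lambda)=1-(\cdot)(\{\lambda\})$ is affine, the non-selective distance is the average of the two conditional ones. Nothing prevents all the disturbance from sitting in the $V$ branch, with conditional distance $2p_d$. The argument then yields only $\tfrac12p_F+(1-p_F)p_d$, the continuous-pointer form of \eqref{eq:klp}, which is weaker than the claim whenever $p_F<\tfrac12$.
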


\begin{proof}
\emph{(i) Trivial marginal.} The two Kraus operators induced on the system by the readout
are $M_\pm=\langle\pm_z|U_\theta|D\rangle=\tfrac{1}{\sqrt2}e^{\mp i\theta\mathcal{E}}$,
each proportional to a unitary. Consequently
$M_\pm^\dagger M_\pm=\tfrac12\openone$, so the POVM the readout implements on the system is
$\{\tfrac12\openone,\tfrac12\openone\}$, independent of $\mathcal{E}$, $\theta$ and
$\hat\rho$. Equivalently, $\sigma_z$ commutes with $U_\theta$, so
$U_\theta^\dagger(\openone\otimes|V\rangle\langle V|)U_\theta
=\openone\otimes|V\rangle\langle V|$ and the effect is
$\langle D|\,\openone\otimes|V\rangle\langle V|\,|D\rangle_{\text{ptr}}=\tfrac12\openone$. That each
branch acts unitarily, and not by a genuine contraction, is what makes the
\emph{non-selective} disturbance in (iii) second order while the \emph{selective} one is
first.

\emph{(ii) Quantum value.} By \eqref{eq:qubitcoupling} the branch of the post-coupling
state carrying $|V\rangle$ is $e^{+i\theta\mathcal{E}}\hat\rho\,e^{-i\theta\mathcal{E}}$
with weight $\tfrac12$, so that, exactly and for every $\theta$,
\begin{equation}
p_-=\tfrac12\Tr\big[\Pi_x^\Delta\,
e^{i\theta\mathcal{E}}\hat\rho\,e^{-i\theta\mathcal{E}}\big].
\label{eq:pminusexact}
\end{equation}
Expanding the conjugation,
$e^{i\theta\mathcal{E}}\hat\rho\,e^{-i\theta\mathcal{E}}
=\hat\rho+i\theta[\mathcal{E},\hat\rho]
-\tfrac{\theta^2}{2}[\mathcal{E},[\mathcal{E},\hat\rho]]+O(\theta^3)$, and writing
$K=\Tr(\Pi_x^\Delta\mathcal{E}\hat\rho)$, hermiticity of $\Pi_x^\Delta,\mathcal{E}$ and
$\hat\rho$ gives
$\Tr(\Pi_x^\Delta\hat\rho\,\mathcal{E})=\Tr(\mathcal{E}\Pi_x^\Delta\hat\rho)=K^*$, so the
first-order term is $\tfrac{i\theta}{2}(K-K^*)=-\theta\,\Ima K$ and
\begin{equation}
p_-=\tfrac12 p_F-\theta\,\Ima K
-\tfrac{\theta^2}{4}\Tr\big(\Pi_x^\Delta[\mathcal{E},[\mathcal{E},\hat\rho]]\big)
+O(\theta^3).
\label{eq:pminusexpand}
\end{equation}
The stated result is the first two terms. The explicit $O(\theta^2)$ coefficient is
recorded because it is what the numerical check below measures.

\emph{(iii) Non-selective disturbance.} Averaging the two branches, the $O(\theta)$ terms
cancel between them and
\begin{equation}
\tfrac12\big(e^{-i\theta\mathcal{E}}\hat\rho\,e^{i\theta\mathcal{E}}
+e^{i\theta\mathcal{E}}\hat\rho\,e^{-i\theta\mathcal{E}}\big)
=\hat\rho-\tfrac{\theta^2}{2}[\mathcal{E},[\mathcal{E},\hat\rho]]+O(\theta^4).
\end{equation}
Using $\|[A,B]\|_1\le2\|A\|\,\|B\|_1$ twice and $\|\hat\rho\|_1=1$ gives
$\|[\mathcal{E},[\mathcal{E},\hat\rho]]\|_1\le4\|\mathcal{E}\|^2$, whence
$|p_F^{\mathrm{ns}}-p_F|\le2\theta^2\|\mathcal{E}\|^2$. The signal in (ii) is first order
in $\theta$ and this is second, which is the qubit counterpart of the large-$s$ limit
in~\cite{kunjwal2019}. There the disturbance is suppressed by widening the pointer, here by
weakening the coupling.

\emph{(iv) Noncontextual bound.} By (i) the readout, regarded as a measurement on the
system, is the trivial POVM, hence operationally equivalent to tossing a fair coin and
ignoring the system. Measurement noncontextuality assigns operationally equivalent effects
the same response function, so the $\sigma_z=-1$ response satisfies $\chi(\lambda)=\tfrac12$
for every ontic state $\lambda$. Let $\xi(\lambda)$ be the probability that the
post-selection succeeds on $\lambda$ and $\xi_-(\lambda)$ the same probability after the
readout has returned $\sigma_z=-1$. The nondisturbance hypothesis gives
$\xi_-\le\xi+p_d$ pointwise. Therefore
\begin{equation}
\begin{aligned}
p_-&=\int\mu(\lambda)\,\chi(\lambda)\,\xi_-(\lambda)\,d\lambda\\
&\le\tfrac12\int\mu(\lambda)\,[\xi(\lambda)+p_d]\,d\lambda
=\tfrac12 p_F+\tfrac12 p_d,
\end{aligned}
\end{equation}
and comparing with (ii) gives \eqref{eq:qubitwitness}.
\end{proof}

The nondisturbance parameter $p_d$ is an
ontic-level quantity and is not directly measurable. It is assumed, exactly as
in~\cite{kunjwal2019}, and part~(iv) is the only place in this paper where an ontological
hypothesis is used. Part~(iii) supplies the operational evidence that the
assumption is testable at the order that matters. The disturbance that the readout does
produce, and that an experiment can bound, is $O(\theta^2)$, while the witness it must be
compared against is $O(\theta)$. Reducing $\theta$ therefore opens the inequality, in the
same way that increasing the pointer width $s$ opens \eqref{eq:klp}.

Removing the assumption would need a bound on $p_d$ derived from observable
statistics alone, for the specific instrument \eqref{eq:qubitcoupling}. We do not have one,
and neither does~\cite{kunjwal2019}. The difference the qubit case makes is only that the
gap to be closed shrinks as $\theta^2$ rather than as $1/s$.

Read as an experimental condition, \eqref{eq:qubitwitness} is undemanding. The effect
\eqref{eq:effect} is band-limited, so $\|\mathcal{E}\|\le1$ and the disturbance floor
allowed by (iii) is $2\theta^2$. Certification then requires
$-\Ima\langle\Pi_x^\Delta\mathcal{E}\rangle>\theta$, whose left-hand side is independent
of $\theta$ while the right-hand side is not, so the inequality opens by weakening the
coupling at the cost of more post-selected events. That trade is the same one quantified
by the rare-event counting of Sec.~\ref{sec:experiment}.

Parts (i)--(iii) have been verified numerically on random $\hat\rho$, $\mathcal{E}$ and
$\Pi_x^\Delta$ in dimension $6$. The residual of \eqref{eq:pminusexpand} divided by
$\theta^2$ converges to a constant over three decades of $\theta$, matching the explicit
second-order coefficient there. The non-selective disturbance of (iii) divided by
$\theta^2$ likewise converges. The marginal of (i) equals $\tfrac12$ to machine
precision for couplings up to $\theta=1$.


\end{document}